\DeclareMathOperator*{\argmax}{arg\,max}
\newtheorem{definition1}{Definition}
\newtheorem{example1}{Example}
\newlist{inlinelist}{enumerate*}{1}
\setlist*[inlinelist,1]{
	label=(\roman*),
}
\newcommand{\myparagraph}[1]{\vspace{0.3\baselineskip}\noindent{\textbf{#1.}}~}
\newcommand{\CAMC}{\textsf{CMC+MG}\xspace}
\newcommand{\CAMS}{\textsf{CMC+MC}\xspace}
\newcommand{\BGPA}{\textsf{DPSA}\xspace}
\newcommand{\BGPAF}{\textsf{DPSA+BA}\xspace}
\newcommand{\SG}{\textsf{DSA}\xspace}
\newcommand{\MCPBC}{\textbf{BMCC}\xspace} 
\newcommand{\SpatialSet}{cell-based dataset\xspace}
\newcommand{\SpatialSets}{cell-based datasets\xspace}
\newcommand{\market}{\mathcal{S}_\mathcal{D}\xspace}
\definecolor{Maroon}{rgb}{0.678, 0.090, 0.216}
\definecolor{Brown}{rgb}{0.59, 0.29, 0.0}
\definecolor{Green}{rgb}{0,1,0}
\definecolor{NavyBlue}{rgb}{0.0, 0.0, 0.5}
\definecolor{Green}{rgb}{0,1,0}
\title{Budgeted Spatial Data Acquisition: When Coverage and Connectivity Matter}
\author[1]{Wenzhe YANG}
\author[2]{Shixun HUANG}
\author*[1]{Sheng WANG}
\author*[1]{Zhiyong PENG}
\address[1]{School of Computer Science, Wuhan University, Wuhan 430061, China}
\address[2]{School of Computing and Information Technology,  The University of Wollongong, Wollongong 2522,  Australia}
\begin{abstract}
Data is undoubtedly becoming a commodity like oil, land, and labour in the 21st century. Although there have been many successful marketplaces for data trading, the existing data marketplaces lack consideration of the case where buyers want to acquire a collection of datasets (instead of one), and the overall spatial coverage and connectivity matter. In this paper, we make the first attempt to formulate this problem as Budgeted Maximum Coverage with Connectivity Constraint (\MCPBC), which aims to acquire a dataset collection with the maximum spatial coverage under a limited budget while maintaining spatial connectivity. To solve the problem, we propose two approximate algorithms with detailed theoretical guarantees and time complexity analysis, followed by two acceleration strategies to further improve the efficiency of the algorithm. Experiments are conducted on five real-world spatial dataset collections to verify the efficiency and effectiveness of our algorithms.
\end{abstract}
\keywords{spatial data acquisition, data marketplace, budget, spatial coverage, spatial connectivity}
\begin{document}

\section{Introduction}
\label{sec:intro}
\begin{sloppypar}
Data is often referred to as the new oil of the digital economy, representing a highly valuable and untapped asset~\cite{ChapmanSKKIKG20,Fernandez2020}. To fully realize the potential of data, various data marketplace platforms have emerged and offer a diverse range of data types for trade. For instance, Xignite sells financial data, Gnip sells data from social media, and Datarade sells multiple categories of data, including commercial, academic, and spatial data, etc. Among them, spatial data has become increasingly important. According to a research report by Verified Market, the global spatial analytics market was valued at \$66 billion in 2021 and is expected to reach \$209 billion by the end of 2030.

\begin{figure*}
\setlength{\abovecaptionskip}{0.3 cm}
\setlength{\belowcaptionskip}{-0.3 cm}	
\centering
\includegraphics[width = 0.9\textwidth]{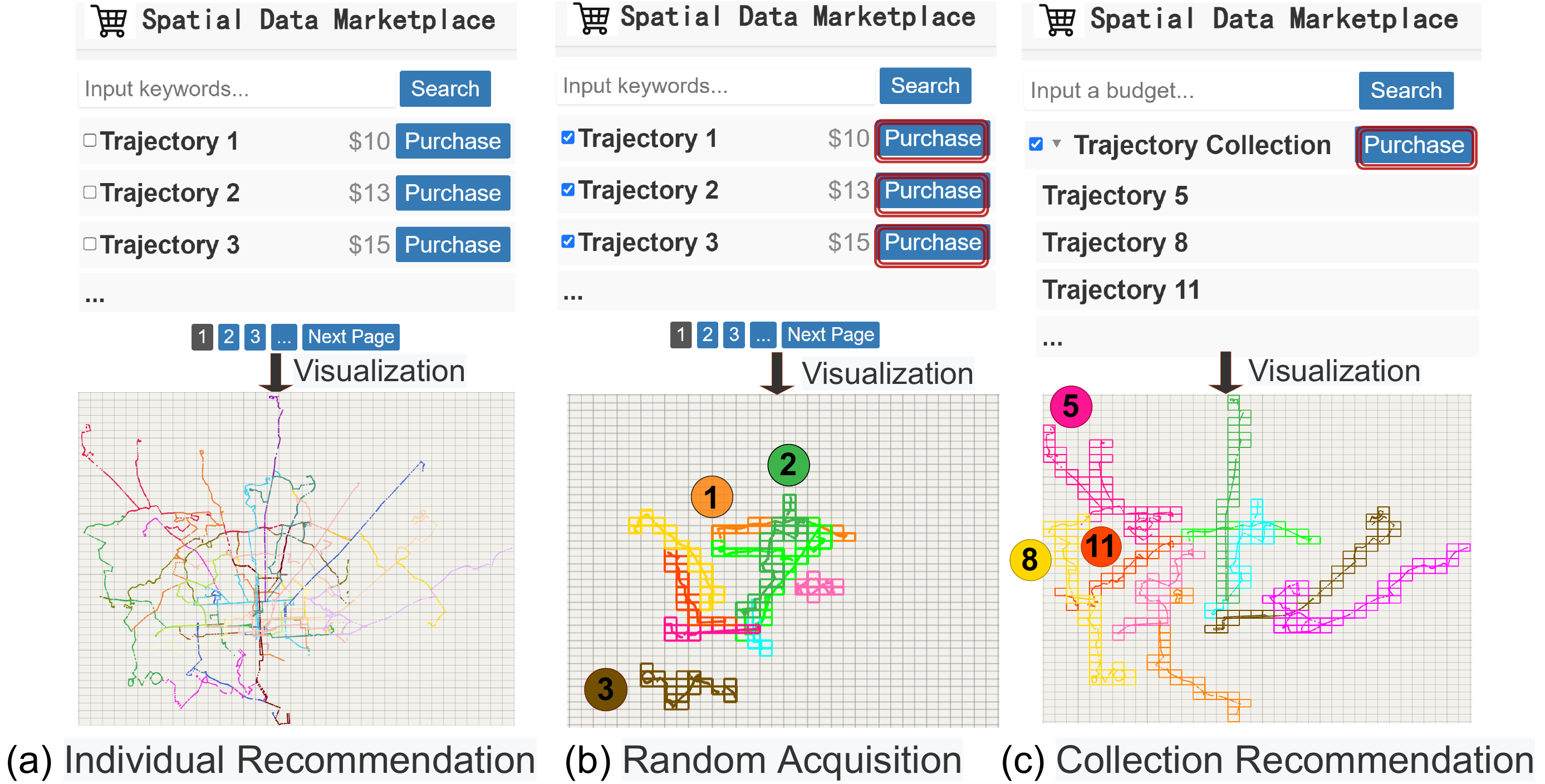}
  \caption{
  An example of dataset acquisition under a limited budget. (a) shows the individual dataset recommendation in the current marketplaces, (b) forms the collection by randomly choosing individual recommended datasets, whereas (c) directly recommends the dataset collection. 
  }
  \label{fig:intro} 
  \end{figure*}

Driven by the enormous economic profit, the geospatial data marketplaces, where buyers acquire or purchase data provided by sellers, are experiencing rapid growth. For example, the EU-funded OpertusMundi project develops the Topio~\cite{Ionescu2023} to sell geospatial datasets of Europe, the UP42 sells a wide range of earth observation dataset, and AggData sells various location datasets such as restaurants and stations.

\noindent \textbf{The Gap in Marketplace - Individual Acquisition}. However, despite notable efforts in spatial data acquisition, current marketplaces primarily focus on recommending each dataset individually, as shown in Fig.~\ref{fig:intro}(a). There is a lack of consideration for cases where an individual dataset cannot satisfy the buyer's needs such that a collection of datasets needs to be acquired. For instance, in data-driven transit planning~\cite{wang2021public,WangTKDE2018,Wang2020,Ali2018,HeZZK22}, planners require a collection of commuting trajectories to analyze the demand of citizens and design new routes. When it comes to spatial dataset collection recommendation, we need to consider two important properties of the collection, namely \emph{spatial coverage} and \emph{spatial connectivity}, to ensure that buyers can make the best use of a limited budget. In the following, we will use an example to show the significance of a good dataset collection recommendation and the high-level idea of these two properties. Afterwards, we will describe these two properties more formally, followed by our problem formulation and solution description.

\begin{example}

Consider a scenario where the data buyer is a transportation company manager under a limited budget. The buyer wants to acquire a collection of private trajectories uploaded and sold by sellers, which are used to know the citizens' demands and redesign new bus routes. Fig.~\ref{fig:intro}(a) simply shows that the existing data market usually recommends trajectories individually for buyers to choose from, where each trajectory comprises a set of closely distributed points with the same color. 
The data buyers can acquire a collection of data through random purchases, and the results are shown in Fig.~\ref{fig:intro}(b). In this figure, the colored cells represent the spatial coverage of a trajectory. 

It is evident that these datasets are distributed over a small area, and exhibit notable overlap (e.g., the trajectories $\text{\textcircled{1}}$ and $\text{\textcircled{2}}$ of Fig.~\ref{fig:intro}(b)) or are separated (e.g., the trajectory $\text{\textcircled{3}}$ of Fig.~\ref{fig:intro}(b)). These results make it difficult to design routes that cover the entire region, and separate datasets affect the transfer experience of new bus routes. Fig.~\ref{fig:intro}(c) shows a better recommendation result. The datasets acquired are widely distributed throughout the region, and there is a connection between routes (e.g., the trajectories $\text{\textcircled{5}}$ $\text{\textcircled{8}}$ and $\text{\textcircled{8}}$ $\text{\textcircled{11}}$ of Fig.~\ref{fig:intro}(c)), which facilitates redesigning new bus routes since they cover a large area and ensure the convenience of transfers.
\end{example}

\noindent \textbf{The Intuition of Spatial Coverage and Connectivity}. Motivated by the above application, we make the first attempt to define a combinatorial optimization problem where two characteristics, namely \emph{spatial coverage} and \emph{spatial connectivity} of acquired datasets, can play critical roles in the solution quality. In what follows, we will describe what spatial coverage and spatial connectivity are. The spatial dataset is represented as a set of points with geographic coordinates, making it challenging to quantify its coverage due to the point having no physical dimensions (length and width). To quantify the coverage of a spatial dataset, we partition the space into uniform cells, as depicted by the gray cells in Fig.~\ref{fig:intro}(a). The spatial dataset coverage is defined as the number of cells containing these spatial points. For instance, the spatial coverage of the trajectory $\text{\textcircled{3}}$ in Fig.~\ref{fig:intro}(b) is the number of cells marked with brown rectangles.

Additionally, to illustrate the spatial connectivity, it is essential to employ a metric that indicates the relationship between two datasets. As shown in Fig.~\ref{fig:intro}(c), two datasets are directly connected when they have overlapped cells (e.g., the trajectories $\text{\textcircled{5}}$ and $\text{\textcircled{8}}$ of Fig.~\ref{fig:intro}(c)). On the other hand, two datasets are deemed indirectly connected when they can establish a connection through one or multiple intermediary datasets (e.g., the trajectories $\text{\textcircled{5}}$ and $\text{\textcircled{11}}$ are indirectly connected in Fig.~\ref{fig:intro}(c) since both of them are directly connected to $\text{\textcircled{8}}$). 
The spatial connectivity requires that any pair of datasets in the result be directly or indirectly connected, which is crucial for many applications. For instance, in the aforementioned example, it can ensure that new bus routes are connected to each other, providing more transfer options.

\noindent \textbf{The Problem We Study - Collection Acquisition}. In this paper, we take the first attempt to formulate the data collection acquisition problem as the \textbf{B}udgeted \textbf{M}aximum \textbf{C}overage with \textbf{C}onnectivity Constraint (\textbf{BMCC}), which aims to find a set of datasets with the maximum spatial coverage under a limited budget while maintaining spatial connectivity. To solve the maximization problem under multiple constraints at the same time, we first transform it into a graph theory problem. Specifically, we construct a graph to model dataset relationships, where each node represents a dataset, and each edge represents that two datasets are connected. The objective of \textbf{BMCC} is to find a set of nodes (the total price within the budget) that induces a connected subgraph to maximize spatial coverage.

\begin{sloppypar}
The difficulty of the \textbf{BMCC} problem arises from its NP-hard nature, which refers to a problem for which no known polynomial-time algorithm can find a solution. We formally prove that \textbf{BMCC} is NP-hard by using a polynomial-time reduction from the classic maximum coverage problem (MCP)~\cite{Hochbaum1998} in Section~\ref{sec:nphard}. 
Inspired by the greedy algorithm proposed by Fiege~\cite{Feige1998}, which
is the best possible polynomial time approximation algorithm for the maximum coverage problem and its variants.
Thus, we propose two heuristic algorithms based on greedy strategy to solve the \textbf{BMCC} problem. The intuition behind these algorithms is to iteratively select the best available candidate dataset that maximizes the coverage, while considering the constraints imposed by the problem.

Specifically, we propose two approximate algorithms with theoretical guarantees, namely Dual-Search Algorithm (\SG) and Dual-Path Search Algorithm (\BGPA), followed by two acceleration strategies. The basic idea of \SG is to iteratively pick the dataset, which brings the maximum marginal gain w.r.t. the spatial coverage while maintaining the spatial connectivity. However, the theoretical analysis shows that the approximation ratio of \textsf{DSA} gradually decreases as the budget increases.
To address this, we propose \BGPA, which iteratively selects paths (i.e., a sequence of nodes connected by edges) with the common node and the maximum marginal gain from the dataset graph. The theoretical analysis shows its better approximation in scenarios involving a larger budget. Our contributions are summarized below:
\end{sloppypar}

\begin{itemize}[leftmargin=*]
     \item We make the first attempt to study and formalize the \MCPBC problem, motivated by the urgent needs in spatial data marketplace to recommend dataset collection for buyers. We also prove its NP-hardness.

    \item To solve the \MCPBC problem, we first propose an approximate algorithm \SG with detailed theoretical guarantees and time complexity analysis.

    \item To further improve the approximation ratio with large input budgets, we propose \BGPA with detailed theoretical guarantees and time complexity analysis. Furthermore, we propose two acceleration strategies to enhance the efficiency of \BGPA significantly.
    
    \item We conduct extensive experiments to verify the effectiveness and efficiency of our algorithms. The experimental results show that our method can achieve up to at most 68\% times larger spatial coverage with 89\% times speedups over baselines extended from the state-of-the-art on the connected maximum coverage problem. The case studies validate the effectiveness of our algorithms.
 
\end{itemize}

The rest of this paper is organized as follows. In Section~\ref{sec:relatedWork}, we give a comprehensive review of related literature. In Section~\ref{sec:defnitions}, we set the scope of the paper by formally defining the \MCPBC problem. In Section~\ref{sec:nphard}, we prove the NP-hardness of \MCPBC by performing a
reduction from the NP-hard maximum coverage problem~\cite{Hochbaum1998}. Section~\ref{sec:algorithms} presents two extension algorithms based on the greedy strategy and proves its approximation ratio theoretically. 
Finally, we present the experimental results supporting our claims in Section~\ref{sec:exp} and conclude in Section~\ref{sec:conclude}.

\end{sloppypar}

\section{Related Work}
\label{sec:relatedWork}
\begin{sloppypar}

In this section, we give a literature review on the data marketplace~~\cite{Schomm2013,Fernandez2020,Balazinska2011,Li2018,Qiao2023,Asudeh2022,Kanza2015,Ionescu2023} and combinatorial optimization problem~\cite{2003Combinatorial,Christos1998}. The data marketplace offers a platform for data acquisition where sellers upload their datasets and buyers purchase desired datasets. Moreover, the \MCPBC studied in this paper is reminiscent of the classic maximum coverage problem~\cite{Hochbaum1998}, which is one of the typical combinatorial optimization problems. It inspires us to find a new way to acquire datasets in the spatial data marketplace.

\subsection{Data Marketplace}
The literature on data marketplace can be broadly divided into three classes: (i) data marketplace model~\cite{Agarwal2019,Moor2019,Fernandez2020}, (ii) data pricing~\cite{Niu2022,Koutris2013,Zhang2023,Chawla2019,ChenK2019,lin2014}, and (iii) data acquisition~\cite{Asudeh2022,Li2018,LiYF2021}. We will provide an overview of each of these classes.

\myparagraph{Data Marketplace Model} 
A data marketplace model~\cite{Sakr2018,Balazinska2011,Fernandez2020,Azcoitia2022,PeiF023,Agarwal2019,Moor2019} consists of buyers who want to buy datasets, sellers who want to share datasets in exchange for a reward, and an arbiter who satisfies the buyers’ requests with the seller’s supplied datasets. For instance, Agarwal et al.~\cite{Agarwal2019} propose a data model that solves the allocation and payment problem in a static scenario with one buyer and multiple sellers. In~\cite{Moor2019}, the authors propose an end-to-end market design that considers buyers and sellers arriving in a streaming fashion. Additionally, Fernandez et al.~\cite{Fernandez2020} present their vision for the design of platforms and design the market rules to govern the interactions between sellers, buyers, and an arbiter.

\myparagraph{Data Pricing}
Data pricing~\cite{ChenK2019,Chawla2019,Niu2022,Koutris2013,Zhang2023,ChenYWWL2022,Tongyx2018} has been widely studied by the communities of computer science and economics (see~\cite{Pei2020,Pei2022} for a survey). For instance, for query-based pricing~\cite{Koutris2012,Koutris2013}, Koutris et al.~\cite{Koutris2013} propose the first pricing system named QueryMarket, which supports the efficient computation of prices for multiple queries using integer linear program solvers. Deep et al.~\cite{DeepK2017} propose history-aware pricing that supports buyer charges according to current and past queries, which can support pricing in real time. For machine learning-based (ML) pricing~\cite{Liu2020DealerED}, Chen et al.~\cite{ChenK2019} first formally describe the desired properties of the model-based pricing framework and provide algorithmic solutions. In addition, there are many other types of pricing mechanisms, such as feature-based pricing~\cite{Miao2022,Yu2017}, usage-based pricing~\cite{AzcoitiaIL2023}, etc. Note that our study is orthogonal to the data pricing studies.
Our algorithm proposed in this paper can work with any pricing mechanism. For simplicity, we just use the usage-based pricing strategy~\cite{AzcoitiaIL2023} for illustration.

\myparagraph{Data Acquisition} Data acquisition is another key factor in helping data buyers obtain high-quality data from the marketplace. For example, Li et al.~\cite{Li2018} propose a middleware called DANCE, which aims to acquire a target table from the data marketplace with a high correlation between its attributes and query table. 
Chen et al.~\cite{ChenILSZ18} consider the problem of obtaining data from multiple data providers in order to improve the accuracy of linear statistical estimators. 
Li et al.~\cite{LiYF2021} investigate the problem of data acquisition by a specific query predicate (e.g., ten images with label = `dog’) and budget for improving the performance of ML models. Similarly, Asudeh et al.~\cite{Asudeh2022} develop a query-answering framework to acquire an integrating dataset that meets user-provided query predicate and distribution requirements (e.g., 100 cancer datasets with at least 20\% black patients).

Despite notable progress being made in this direction, existing studies primarily concentrate on tabular data, which typically refers to datasets organized in tables with rows and columns. In contrast to tabular data, spatial data has spatial attributes associated with each data point (e.g., latitude and longitude). Although spatial data can be converted into tabular form to some extent, this representation may not adequately reflect its spatial characteristics and complexity. Thus, these approaches for acquiring tabular datasets are not directly applicable to the acquisition of spatial datasets. This is because tabular data acquisition typically relies on attributes or metadata to query for relevant datasets (e.g., filtering rows based on column values), while spatial data acquisition involves spatial relationships and geometry, such as proximity or spatial overlap. The complexity of spatial relationships, combined with the need to consider coverage and connectivity, requires more specialized methods that consider not only attribute similarity but also spatial features and their relative positions in space.

\vspace{-0.1cm}
\subsection{Combinatorial Optimization Problem}
A combinatorial optimization problem involves identifying the optimal combination of variables that maximizes a given objective value from a set of options, while adhering to specific constraints.
In the subsequent sections, we will discuss difference between several classic combinatorial optimization problems that have similar constraints and objectives with \MCPBC, such as the budgeted influence maximization (BIM) problem~\cite{HuangGBL2023,Yuchen2018,Bian2020,Huang2022,Kempe2003,Tang2014Sigmod,HuangKK2019,LiHui2015}, budgeted connected dominating set (BCDS) problem~\cite{Khuller2020,AlberFN04,Guha1998,Wan2002,Gandhi2007,Du2013Connected,ZhangWJ2010}, and maximum coverage problem (MCP)~\cite{Hochbaum1998,Chekuri2004,Rawitz2021,Hochbaum2020,Farbstein2017,Guo2019,Indyk2019}, and illustrate the technical innovations of our proposed method.

\myparagraph{Difference between \MCPBC, BIM, and BCDS} 
The budgeted influence maximization (BIM) problem~\cite{Bian2020,Pierre2020,Banerjee2019,HanZHS2014} aims to select a limited number of seed nodes such that the total cost of nodes is less than the predefined budget, and the influence spread is maximized. While the budgeted connected dominating set (BCDS) problem~\cite{Wan2002,Guha1998,Gandhi2007,Du2013Connected,Khuller2020} is to find $k$ nodes ($k$ is the budget) that induce a connected subgraph, and the number of nodes dominated by the $k$ nodes is maximized. They focus on selecting the combination of sets in order to maximize the objective score, while the definitions of objective scores depend on specific problem contexts. For instance, the objective scores in BIM, BCDS, and \MCPBC correspond to the maximization of influencers, the number of dominated nodes, and the spatial coverage, respectively.

For the BIM, the challenge is how to measure the influence spread of a node since the calculation of influence spread from the seed set was proven to be an NP-complete problem~\cite{Kempe2003}. For the BCDS, the challenge is how to maintain a connected dominating subset such that the number of nodes dominated by the subset is maximized~\cite{Khuller2020}. 
Although these optimization problems appear to have similar constraints, such as budget, connectivity, etc., our problem has a unique challenge: maximizing the union of elements covered by the selected subset under the connectivity and budget constraints. Therefore, the existing approaches to the BIM and BCDS problems cannot address our problem.

\myparagraph{Difference between \MCPBC and MCP} The goal of MCP is to select a limited number of sets, such that the number of the covered elements in these chosen sets is maximal. It is noted that our problem can be viewed as a variant of MCP. Hochbaum and Pathria~\cite{Hochbaum1998} showed that MCP is NP-hard and devised a greedy algorithm with an approximation factor of 1-$1/e$. Moreover, there are many variants that have been extensively studied, such as budgeted MCP (BMCP)~\cite{Khuller1999}, weighted MCP (WMCP)~\cite{Vazirani2001}, connected MCP (CMCP)~\cite{Vandin2011}, and generalized MCP (GMCP)~\cite{CohenK08}. These variants share a common goal, which is to select sets that maximize the total weight of the elements in the union set. However, they differ in the constraints imposed on the solution.

For these combinatorial optimization problems like the MCP and its variants, heuristic algorithms can be employed to find solutions, such as greedy algorithm~\cite{Hochbaum2020}, ant colony optimization~\cite{Dorigo2006}, et al. However, algorithms based on greedy strategies have been shown to have the best possible approximation in polynomial time~\cite{Feige1998}. 
A straightforward adaptation based on the greedy strategy is to first find the dataset with maximum coverage and iteratively choose datasets with the maximum coverage increment that satisfies the spatial connectivity and budget constraints. Since this straightforward approach requires evaluating the spatial connectivity of the resulting dataset at each iteration, this approach can be expensive when the number of datasets is large and lacks any formal theoretical guarantees regarding the approximation ratio.

\noindent\textbf{Novelty of Our Methods.}
The novelty of our methods lies in the proposed greedy and acceleration strategies with theoretical approximation guarantees. Specifically, we first design a \SG algorithm that incorporates an index-based acceleration strategy to handle the evaluation of spatial connectivity more efficiently, thus reducing computational overhead. However, the theoretical analysis shows that the approximation ratio of \textsf{DSA} gradually decreases as the budget increases. To overcome the limitations of \SG, the \BGPA algorithm is introduced based on the designed spatial dataset graph. It iteratively selects paths with common nodes and the maximum marginal gain from the dataset graph. The \BGPA algorithm demonstrates better approximation performance, especially for larger budgets. In addition to the algorithms, we also propose two acceleration strategies in Section~\ref{subsec:accelerate} to enhance the efficiency of these algorithms.

\end{sloppypar}

\section{Problem Formulation}
\label{sec:defnitions}

\begin{sloppypar}

In this section, we introduce the data modeling and formalize the problem. A summary of frequently used notations is provided in Table~\ref{tab:notations}.

\begin{figure*}
\setlength{\abovecaptionskip}{0 cm}
\setlength{\belowcaptionskip}{-0.3 cm}	
\centering
\includegraphics[width=0.7\textwidth]{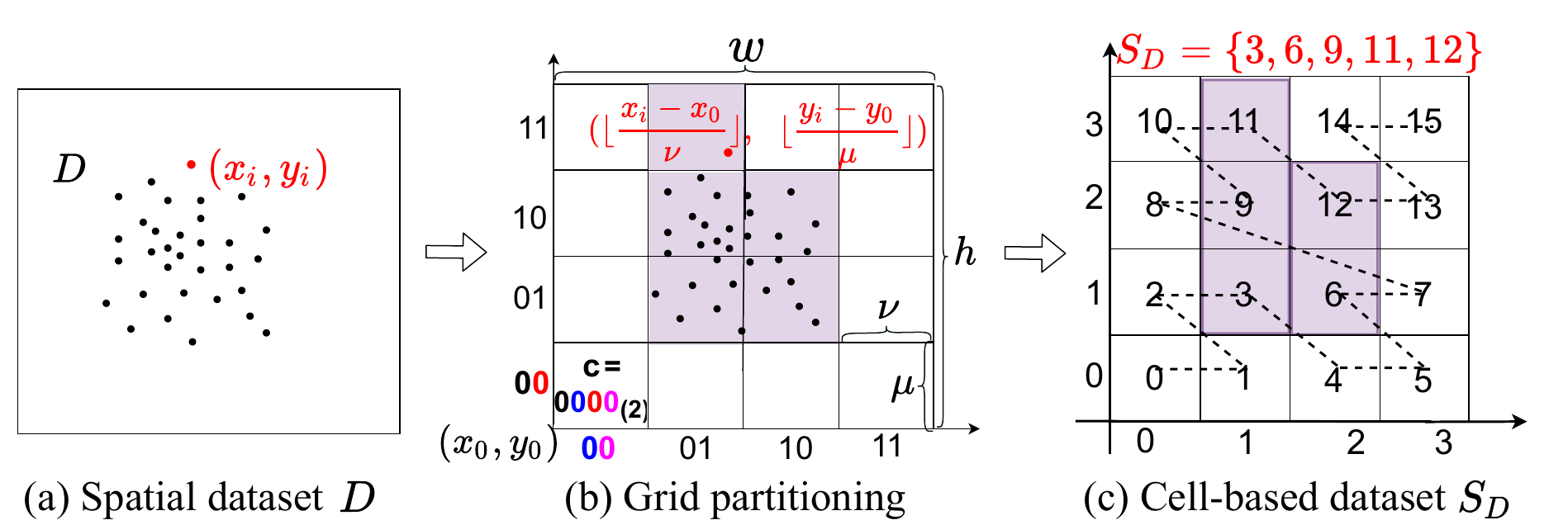}
\caption{Illustration of the spatial dataset and  \SpatialSet, where (a) shows a spatial dataset $D$, (b) shows how to partition the original space into a grid of uniform cells, and (c) shows a cell-based dataset $S_D$ and its spatial coverage.}
\label{fig:rasterPreprocessing}
\end{figure*}

\subsection{Data Modeling}

\begin{definition1}
\textbf{(Spatial Dataset)} A spatial dataset $D$ contains a sequence of points or locations represented by two-dimensional coordinates, i.e., $D = \{(x_1, y_1), (x_2, y_2), \dots, (x_{|D|}, y_{|D|})\}$.
\end{definition1}

\begin{definition1}
\textbf{(Spatial Dataset Marketplace)} A spatial dataset marketplace $\mathcal{D}$ contains a set of spatial datasets, i.e., $\mathcal{D} = \{D_1, D_2, \dots, D_{|\mathcal{D}|}\}$. 
\end{definition1}

To quantify the coverage of a spatial dataset, we partition the space into a grid of uniform cells~\cite{Zacharatou2017,Cao2021a}, since the individual points have no length or width. This approach provides a more effective means of measuring dataset coverage.
Specifically, given an integer $\theta$, we partition the entire space containing all spatial datasets into a grid $\mathcal{C}_{\theta}$ of $2^\theta\times2^\theta$ uniform cells, where $\theta$ is called the \textit{resolution}.

Each cell, denoted as $c$, denotes a unit space in the grid $\mathcal{C}_{\theta}$. Its coordinates $(x, y)$ can be converted into a non-negative integer to uniquely indicate $c$ using the z-order curve method~\cite{Peng2016, Yang2022}, denoted as $z(x, y) = c$. This transformation yields consecutive IDs in the range $[0,2^{\theta} \times 2^{\theta}-1]$. In this way, a spatial dataset $D$ can be represented as a sorted integer set consisting of a sequence of cell IDs, referred to as a cell-based dataset. The spatial coverage of a dataset is the number of cell IDs in its corresponding cell-based dataset. The formal definition of \textit{cell-based dataset} and \textit{spatial coverage} is as follows.

\begin{definition1}
\textbf{(Cell-based Dataset and Spatial Coverage)} 
Given a 2-dimensional space and a grid $\mathcal{C}_{\theta}$, a cell-based dataset $S_{D,\mathcal{C}_{\theta}}$ of the dataset $D$ is a set of cell IDs where each cell ID denotes that there exists at least one point $(x, y)\in D$ falling into the cell $c$ of the grid $\mathcal{C}_{\theta}$. The coordinates $(x, y)$ of cell $c$ are $(\frac{x-x_0}{\nu}, \frac{y-y_0}{\mu})$, where $(x_0,y_0)$ represents the coordinate of the bottom-left point in the whole 2-dimensional space, and $\nu$ and $\mu$ denote the width and height of the cell $c$. Thus, the cell-based dataset is represented as $S_{D,\mathcal{C}_{\theta}} = \{ z(\frac{x-x_0}{\nu}, \frac{y-y_0}{\mu}) | (x,y) \in D \}$. The spatial coverage of $S_{D,\mathcal{C}_{\theta}}$ is represented by the number of cells, denoted as $|S_{D,\mathcal{C}_{\theta}}|$.   For ease of presentation, we omit the subscript $\mathcal{C}_{\theta}$ of $S$ when the context is clear.
\end{definition1}

\begin{example1}
Fig.~\ref{fig:rasterPreprocessing}(a) first shows an example of a raw spatial dataset $D$ in a rectangular space. Each point of $D$ can be mapped to the cell of the grid, as shown in Fig.~\ref{fig:rasterPreprocessing}(b). Each cell can be uniquely represented by an integer, e.g., for the bottom left cell, its ID is $0$, which is transformed from its coordinates (0, 0). Thus, the spatial dataset $D$ can be represented as a finite set $S_D$ consisting of a sequence of cell IDs, i.e., $S_D =\{3, 6, 9, 11, 12\}$, as shown in Fig.~\ref{fig:rasterPreprocessing}(c). The spatial coverage of $S_D$ is the number of IDs in the set, which is 5.
\end{example1}

\begin{table}
\renewcommand{\arraystretch}{1.0}
\setlength{\abovecaptionskip}{0.5cm} \caption{Summary of notations.}
\vspace{-0.3cm}
 \label{tab:notations}
\scalebox{0.95}{ \begin{tabular}{cc}  
\toprule   
  \textbf{Notation} & \textbf{Description}  \\  
\midrule   
$D$ & a spatial dataset\\
$\mathcal{D}$ & a spatial data marketplace\\
$|\mathcal{D}|$ & the number of datasets in $\mathcal{D}$ \\
$S_D$ & the \SpatialSet of $D$\\  
$p(S_D)$ & the price of the \SpatialSet $S_D$ \\
$\mathcal{S}_{\mathcal{D}}$ & the collection of \SpatialSets \\
$\mathcal{G}(\mathcal{V},\mathcal{E})$ & the dataset graph with vertices $v$ and edges $e$\\
B & the budget \\
$\delta$ & the distance threshold \\
$dist(S_{D_i}, S_{D_j})$ & the distance between $S_{D_i}$ and $S_{D_j}$ \\
  \bottomrule  
\end{tabular}}
\end{table}

In order to quantify the relationship between two spatial datasets, it is necessary to employ a distance measure for spatial datasets. Several distance functions~\cite{SuLZZZ2020,Nutanong2011,Yang2022} have been proposed for this purpose, such as the Hausdorff~\cite{Nutanong2011} and the EMD distance~\cite{Yang2022}. Note that we can adopt any suitable distance function to measure the distance between datasets based on applicable scenarios. 

In this paper, we define a cell-based dataset distance metric by extending the $MinMinDist$ \cite{Corral2000}. The goal is to capture the proximity between datasets in space by computing the nearest neighbor distance between the cells of two cell-based datasets. For example, a cell-based dataset distance of zero indicates that the two cell-based datasets intersect.

\begin{definition1}
\label{defi:dist}
\textbf{(Cell-based Dataset Distance)} Given two cell-based datasets $S_{D_i}$ and $S_{D_j}$, their distance $dist(S_{D_i}, S_{D_j})$ is the minimum distance between their cells and is defined as:
\begin{equation}
\small
      dist(S_{D_i}, S_{D_j}) = \min_{c \in S_{D_i}, c' \in S_{D_j}} \: ||c, c'||_2
\end{equation}
where $||c, c'||_2$ denotes the Euclidean distance between the coordinates represented by $c'$ and $c$. 
\end{definition1}

In addition, to evaluate whether a set of spatial datasets satisfies the spatial connectivity, it is essential to establish criteria defining the relationships between pairs of datasets. Thus, we introduce the concepts for defining relationships between datasets. Thus, we introduce the concept of spatial connectivity, which encompasses both direct and indirect connections.

\begin{definition1}
\label{defi:connectivity}
\textbf{(Spatial Connectivity)} Given a collection of cell-based datasets $\mathcal{S}_{\mathcal{D}}=\{S_{D_1}, \ldots, S_{D_{|\mathcal{D}|}}\}$ constructed from $\mathcal{D}$ and a grid $\mathcal{C}_{\theta}$, the collection $\mathcal{S}_{\mathcal{D}}$ satisfies the spatial connectivity if, for a given distance threshold $\delta$, any pair of datasets $S_{D_i}\in \mathcal{S}_{\mathcal{D}}$ and $S_{D_j}\in \mathcal{S}_{\mathcal{D}}$ $(i \neq j)$ are either directly connected (i.e., $dist(S_{D_i}, S_{D_j}) \leq \boldsymbol{\delta}$) or indirectly connected through a sequence of ordered datasets $\{S_{D_m}, \dots, S_{D_n}\}$, where each pair of adjacent datasets (i.e., ($S_{D_i}, S_{D_m}$), \dots, ($S_{D_n}, S_{D_j}$))  directly connected.
\end{definition1}

\begin{figure*}
\setlength{\abovecaptionskip}{0 cm}
\setlength{\belowcaptionskip}{-0.3 cm}	
\centering
\includegraphics[width=0.75\textwidth]{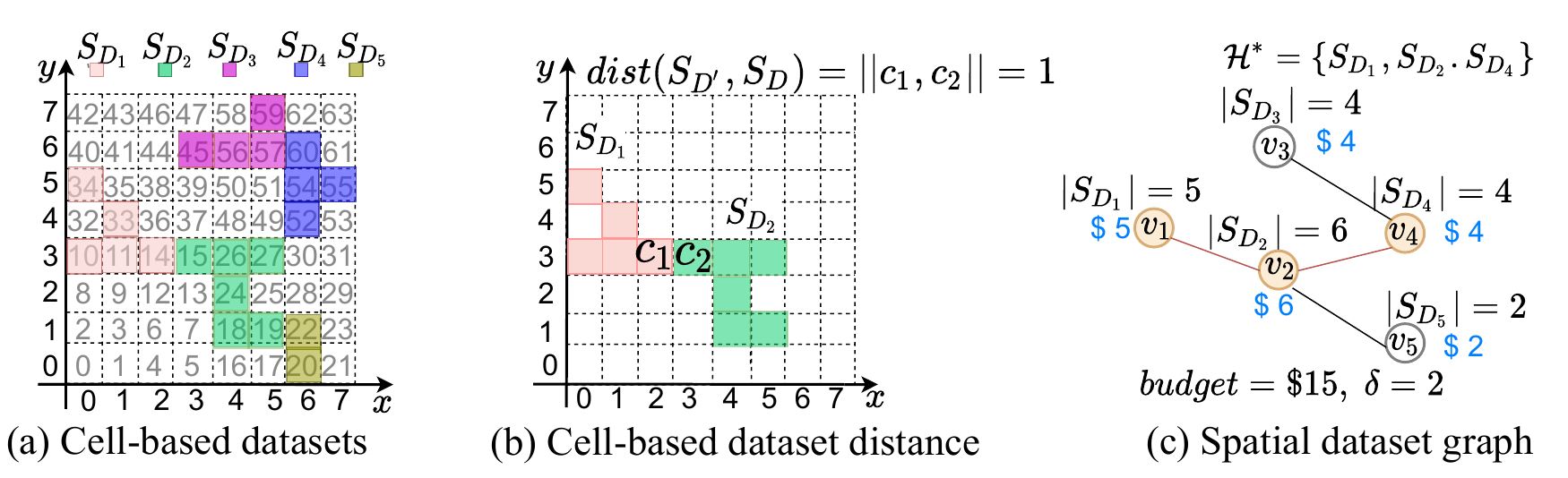}
\caption{
Construction of spatial dataset graph, where (a) shows five cell-based datasets in $\market$, (b) shows the cell-based dataset distance computation between $S_{D_1}$ and $S_{D_2}$, and (c) shows the spatial dataset graph constructed from $\mathcal{S}_{\mathcal{D}}$ and the optimal solution of $\mathcal{H}^*$.
}
\label{fig:connectedGraph}
\end{figure*}

\begin{definition1}
\label{def:datasetGraph}
\textbf{((Connected) Spatial Dataset Graph)}
Given a set of cell-based datasets $\mathcal{S}_{\mathcal{D}}=\{S_{D_1},\ldots,S_{D_{|\mathcal{D}|}}\}$ constructed from $\mathcal{D}$ and a grid $\mathcal{C}_{\theta}$, a pricing function $p$, and a distance threshold $\delta$, we can construct an undirected graph 
$\mathcal{G}_{\mathcal{S}_{\mathcal{D}},\delta}=(\mathcal{V},\mathcal{E})$, where each $v_i \in \mathcal{V}$ denotes a \SpatialSet $S_{D_i} \in \mathcal{S}_{\mathcal{D}}$ with a fixed price $p(S_{D_i})$ and there is an edge between $v_i$ and $v_j \in \mathcal{V}$ $(i\neq j)$, if and only if $S_{D_i}$ and $S_{D_j}$ are directly connected relation. 
The graph $\mathcal{G}_{\mathcal{S}_{\mathcal{D}},\delta}$ is a \textbf{connected} spatial dataset graph if the collection $\mathcal{S}_{\mathcal{D}}$ satisfies the spatial connectivity.

\end{definition1}

\subsection{Problem Definition}
\begin{definition1}
\label{def:MCPBC}
\textbf{(Budgeted Maximum Coverage with Connectivity Constraint (\MCPBC))}
Given a spatial dataset graph $\mathcal{G}_{\mathcal{S}_{\mathcal{D}},\delta}=(\mathcal{V},\mathcal{E})$, 
a distance threshold $\delta$, a pricing function $p$ and a budget $B$, we aim to find an optimal subset $\mathcal{H}^*\subseteq \mathcal{V}$ such that the number of cells covered by $\mathcal{H}^*$ is maximized:
\begin{equation}
\small
    \mathcal{H^*}= \argmax_{\mathcal{H} \subseteq \mathcal{V}} |\bigcup_{v_i \in \mathcal{H}} S_{D_i}|
\end{equation}

\noindent subject to:
\begin{enumerate}
    \item the cost of datasets in $\mathcal{H}^*$ is under $B$: $\sum\limits_{v_i\in \mathcal{H}^*} p(S_{D_i}) \leq B$;
    \item the dataset graph $\mathcal{G}_{\mathcal{H^*},\delta}$ constructed from $\mathcal{H}^*$ is a connected spatial dataset graph;
\end{enumerate}
\end{definition1}

\noindent where each cell-based dataset $S_D$ is associated with a price $p(S_{D})$. The price can be set by the data seller or determined by any pricing functions~\cite{Pei2022}. 
For ease of understanding, we adopt a usage-based pricing~\cite{Pei2022,Miao2022,AzcoitiaIL2023} function in the next example, where the price of the spatial dataset is set to the number of covered cells.

\begin{example1}
  Suppose we have a marketplace $\mathcal{D} = \{D_1, \dots, D_5\}$, the corresponding dataset prices are $\{\$5, \$6, \$4, \$4, \$2\}$.
  Fig.~\ref{fig:connectedGraph}(a) shows the corresponding cell-based datasets $\mathcal{S}_{\mathcal{D}}$ constructed form $\mathcal{D}$ and a grid $\mathcal{C}_{3}$. We then construct the spatial dataset graph $\mathcal{D}_{\mathcal{S_D}, \delta}$ with $\delta =2$. For one of the dataset pairs $S_{D_1}$ and $S_{D_2}$, we can obtain the cell-based dataset distance by calculating the distance between their nearest neighbor cells, as shown in Fig.~\ref{fig:connectedGraph}(b). Since the distance $dist(S_{D_1}, S_{D_2}) = 1 < \delta$, $S_{D_1}$ and $S_{D_2}$ are directly connected. After computing the distance between all pairs of datasets, we can obtain the spatial dataset graph, as shown in  Fig.~\ref{fig:connectedGraph}(c).
  A data buyer with a budget of \$15 wants to buy a set of connected datasets. From all feasible solutions of \MCPBC we can obtain the optimal subset $\mathcal{H}^* = \{S_{D_1}, S_{D_2}, S_{D_4}\}$.

\end{example1}
\end{sloppypar}

\section{NP-hardness of \MCPBC}
\label{sec:nphard}

\begin{figure*}
\setlength{\abovecaptionskip}{0.2 cm}
\setlength{\belowcaptionskip}{-0.3 cm}	
\centering
\includegraphics[width=0.9\textwidth]{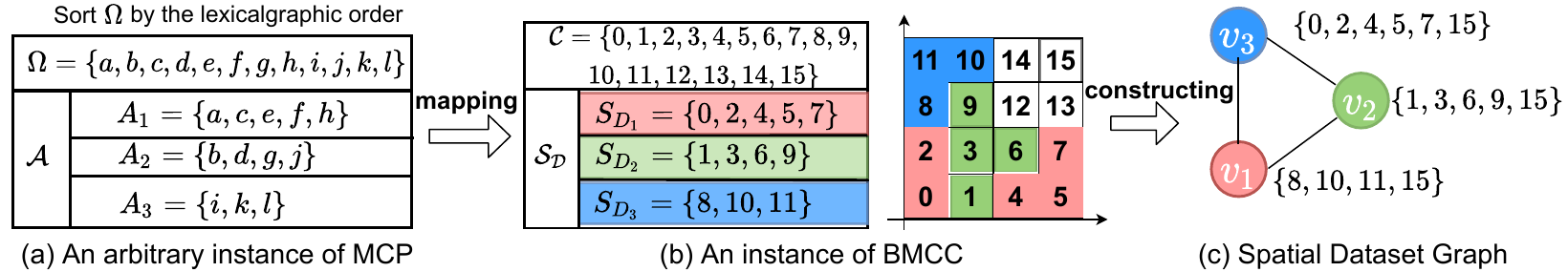}
\caption{Illustration of reduction from MCP to \MCPBC.
}
\label{fig:nphard}
\vspace{-0.3cm}
\end{figure*}

\begin{theorem}
\label{theorem:nphard}
    The \textbf{BMCC} problem is NP-hard.
\end{theorem}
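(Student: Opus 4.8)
The plan is to establish NP-hardness via a polynomial-time reduction from the classic Maximum Coverage Problem (MCP), as the paper's introduction explicitly signals (``a polynomial-time reduction from the classic maximum coverage problem''). Recall that in MCP we are given a ground set of elements $U$, a family of subsets $\{T_1,\dots,T_m\}$ with $T_j \subseteq U$, and an integer $k$; the goal is to select $k$ of the subsets so as to maximize the number of covered elements $|\bigcup T_j|$. Since MCP is known to be NP-hard (Hochbaum and Pathria~\cite{Hochbaum1998}), it suffices to show that any MCP instance can be encoded, in polynomial time, as a \MCPBC\ instance whose optimal solution recovers the optimal MCP solution. The central modelling tension I must resolve is that \MCPBC\ carries two features absent from plain MCP: a \emph{budget} constraint on prices rather than a cardinality constraint, and a \emph{connectivity} constraint on the induced spatial dataset graph. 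The reduction must neutralize both so that the \MCPBC\ instance behaves exactly like MCP.

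First I would handle the budget. Given an MCP instance, I construct one cell-based dataset $S_{D_j}$ for each subset $T_j$, identifying the universe $U$ with a set of grid cells so that $S_{D_j}$ covers precisely the cells corresponding to the elements of $T_j$; one must check that such a cell layout is realizable under the z-order grid $\mathcal{C}_\theta$, which it is for a sufficiently large resolution $\theta$ since we only need $|U|$ distinct cells. I then assign every dataset a \emph{uniform} unit price, $p(S_{D_j}) = 1$, and set the budget $B = k$. Under uniform unit prices the budget constraint $\sum p(S_{D_i}) \le B$ collapses exactly to the cardinality constraint ``choose at most $k$ datasets,'' which matches MCP. The objective $|\bigcup_{v_i \in \mathcal{H}} S_{D_i}|$ then coincides with the number of covered MCP elements.

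The second and harder step is to defuse the connectivity constraint, which is the main obstacle: a naive layout could force the chosen datasets to be spatially scattered so that no $k$-subset induces a connected subgraph, making the \MCPBC\ instance infeasible or distorting its optimum relative to MCP. The clean fix is to choose the distance threshold $\delta$ large enough (relative to the diameter of the grid region holding all $|U|$ cells) that \emph{every} pair of datasets is directly connected, i.e. $\mathcal{G}_{\mathcal{S}_\mathcal{D},\delta}$ is a complete graph. Since a complete graph's every vertex subset induces a connected subgraph, constraint~(2) of Definition~\ref{def:MCPBC} becomes vacuous, and the feasible region of \MCPBC\ is exactly the set of all $\le k$-element collections. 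Hence an optimal \MCPBC\ solution on this instance is an optimal MCP solution and vice versa, with identical objective value.

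Finally I would verify that the construction is polynomial: the number of vertices is $m$, the grid needs only $O(|U|)$ cells (so $\theta = O(\log |U|)$ suffices), computing all pairwise distances to build the complete graph is $O(m^2)$ times the per-pair cost, and choosing $\delta$ and $B=k$ is immediate. Therefore the whole encoding is computable in time polynomial in the MCP input size. Combining the objective-preserving and feasibility-preserving properties above, a polynomial-time algorithm for \MCPBC\ would yield one for MCP; since MCP is NP-hard, so is \MCPBC, which completes the argument. One caveat I would double-check in writing up is that the decision-problem versions are used when invoking NP-hardness (asking whether coverage $\ge t$ is achievable), so that the reduction is between decision problems in the standard sense.
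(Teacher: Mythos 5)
Your proposal is correct and follows essentially the same route as the paper's own proof: a polynomial-time reduction from MCP in which each set becomes a cell-based dataset, every dataset gets unit price with budget $B=k$ (so the budget constraint collapses to the cardinality constraint), and $\delta$ is chosen large enough that the spatial dataset graph is complete, rendering the connectivity constraint vacuous. The paper instantiates the same idea concretely with $\delta = 2^{\theta}\sqrt{2}$ and an explicit element-to-cell ID mapping, and your added remark about working with decision versions is a sound (if minor) point of rigor the paper glosses over.
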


\begin{proof}
To prove the NP-hardness of the \MCPBC, we prove that any instance of Maximum Coverage Problem~\cite{Hochbaum1998} (MCP) can be reduced to an instance of the \MCPBC. 
Given a sorted universe of elements $\Omega = \{o_1, o_2, \dots, o_{|\Omega|}\}$, a collection of sets $\mathcal{A} = \{A_1, A_2, \dots, A_{|\mathcal{A}|}\}$ ($\forall A\in \mathcal{A}, A \subseteq \Omega$) and a positive integer $k$, the objective of the MCP is to find at most $k$ sets $\{A_1', A_2', \dots, A_k'\} \subseteq \mathcal{A}$ such that the $|\bigcup_{i=1}^k A_i'|$ is maximized. We show that the MCP can be viewed as a special case of the \MCPBC.

Given an arbitrary instance of the MCP, we sort the universe $\Omega$ according to the lexicographic order, then create a mapping from the sorted universe $\Omega = \{o_1, o_2, \dots, o_{|\Omega|}\}$ to an integer set $ \{0, 1, \dots, |\Omega|-1\}$. Then we can construct a $2^\theta \times 2^\theta$ grid in the space such that $2^\theta \times 2^\theta \geq |\Omega|$.
Since each cell in the grid can be represented by a unique integer ID, each element in $\Omega$ can be mapped to a cell whose ID is equal to the integer, as shown in Figs.~\ref{fig:nphard}(a) and \ref{fig:nphard}(b). The other four cells with IDs 12, 13, 14, and 15 are not included in $\forall S_{D_i} \in \mathcal{S}_\mathcal{D}$, indicating that no datasets are located in those cells.

Then we set the price of each vertex to 1, the budget to $k$, and $\delta$ to $2^{\theta}\sqrt{2}$ in Definition~\ref{def:MCPBC}. 
The set $\mathcal{S}_{\mathcal{D}}$ can be modeled into a complete undirected graph $\mathcal{G}_{\mathcal{S}_{\mathcal{D}, \delta}} = (\mathcal{V}, \mathcal{E})$: each vertex $v_i$ corresponding to each \SpatialSet $S_{D_i} \in \market$; for each vertex $v_i$ in the graph, there is an undirected edge $e_{i, j}$ $(\forall v_j \in \mathcal{V}$, $i \neq j)$. This is because $dist(S_{D_i}, S_{D_j})$ must not exceed $\delta$, ensuring the connectivity constraint is always satisfied. Fig.~\ref{fig:nphard}(c) shows a spatial dataset graph constructed from the MCP instance, and we can see that the total reduction is performed in polynomial time.

The $\MCPBC$ is to find at most $k$ vertices in this complete graph $\mathcal{G}_{\mathcal{S}_{\mathcal{D}, \delta}}$ such that the size of the union of the elements contained in $k$ vertices is maximized, which is equivalent to finding at most $k$ sets from $\mathcal{S}$ such that the $|\bigcup_{i=1}^k S_i'|$ is maximized in MCP. The MCP has been proven to be NP-hard~\cite{Hochbaum1998}.
Therefore, if we can find the optimal solution for the $\MCPBC$ instance in polynomial time, the MCP can be solved in polynomial time, which is not possible unless P=NP. Therefore, the $\MCPBC$ is NP-hard.
\end{proof}

\section{Approximation Algorithms for \MCPBC}
\label{sec:algorithms}

\begin{sloppypar}

In this section, we focus on solving the \MCPBC problem (defined in Def.~\ref{def:MCPBC}). By Theorem~\ref{theorem:nphard}, we know that this problem is NP-hard. Achieving the optimal solution for \MCPBC is computationally prohibitive. Therefore, we first propose a greedy algorithm, which provides a $\frac{p_{min}}{B}$ approximation guarantee (see Section~\ref{subsec:SG}). However, its approximation ratio decreases when confronted with a large budget $B$. To address this, we propose an improved greedy algorithm that produces a superior approximation ratio for scenarios involving large $B$ (see Section~\ref{sub:IBGPA}).

\subsection{Dual-Search Algorithm (\SG)}
\label{subsec:SG}
In this subsection, we present a \emph{Dual-Search Algorithm} called \SG. The basic idea of \SG is to perform a two-round heuristic search based on the greedy strategy, the reason for performing the two-round search is primarily the need to balance the three factors of spatial coverage, budget, and spatial connectivity. The goal of the first-round search is to make optimal use of the budget, that is, to maximize the increment of spatial coverage per unit budget. Specifically, the algorithm selects the dataset that maximizes the price-to-coverage ratio (e.g., the ratio of the spatial coverage increment over the price of the dataset) while maintaining the spatial connectivity at each step.

The first-round search ensures efficient use of the budget by considering the price-to-coverage ratio. However, due to the spatial connectivity constraint, it may overlook datasets with larger spatial coverage. For instance, if all datasets have the same price-to-coverage ratio, the algorithm cannot effectively differentiate their cost performance, potentially missing datasets with larger coverage. This results in a locally optimal solution. Therefore, the second-round search iteratively chooses the set that maximizes the marginal gain (e.g., the spatial coverage increment) while maintaining the spatial connectivity.

The second-round search complements the dataset ignored in the first round by maximizing coverage, ensuring maximum spatial coverage of the final result. The optimal selections in the two-round search can result in two feasible solutions, which jointly guarantee the effective approximation ratio of \SG whenever the dataset chooses any pricing function, avoiding local optimal solutions. In the following, we will introduce how \SG works according to the pseudocode of Algorithm~\ref{alg:simplegreedy} and then theoretically prove the approximation ratio of \SG. Afterwards, we will further analyze its time complexity in detail.

\begin{figure*}
\setlength{\abovecaptionskip}{-0.2 cm}
\setlength{\belowcaptionskip}{-0.6 cm}	
\centering
\includegraphics[width=0.65\textwidth]{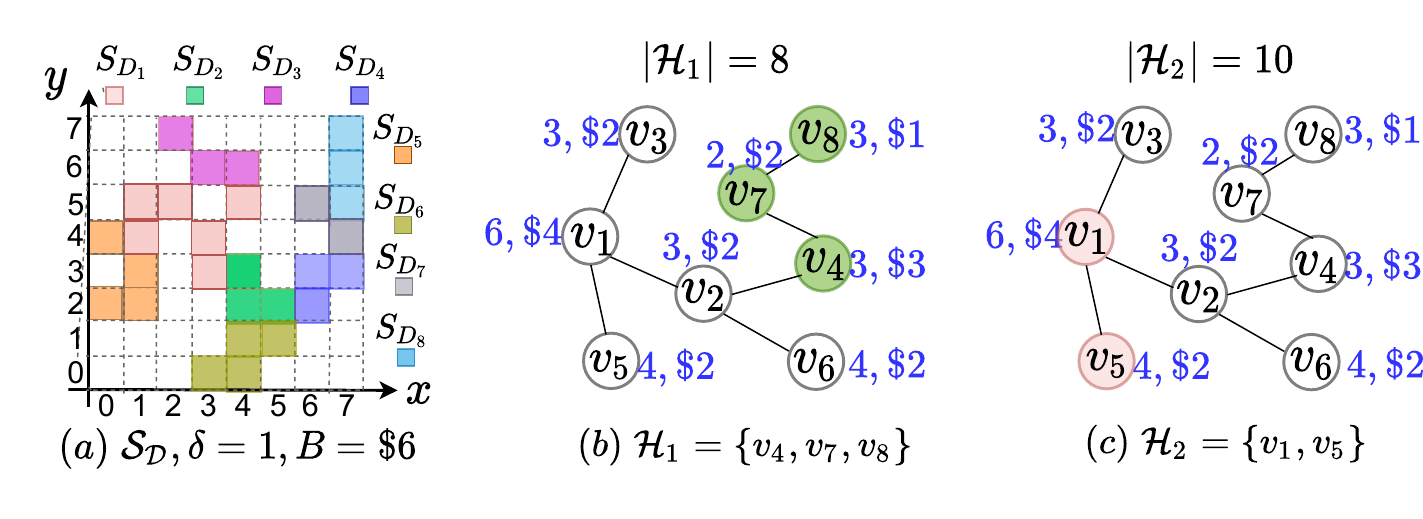}
\caption{
Illustration of \SG, where (a) shows 8 cell-based datasets, (b) shows the solution $\mathcal{H}_1=\{S_{D_4}, S_{D_7}, S_{D_8}\}$ found by the first-round search, and (c) shows the solution $\mathcal{H}_2 = \{S_{D_1},  S_{D_5}\}$ found by the second-round search.
}
\label{fig:simpleGreedy}
\end{figure*}

\begin{algorithm}[t]
\small
\caption{$\texttt{\SG}(\market, p, B, \delta)$}
\label{alg:simplegreedy}
\LinesNumbered 
\KwIn{$\market$: a set of \SpatialSets, $p$: a price function for each $S_{D_i}\in \market$,
  $B:$ a budget,
  $\delta$: a distance threshold. }
  \KwOut{$\mathcal{H}$: the result subset.}
$\mathcal{U}_1\leftarrow \cup_{S_{D_i} \in \market} S_{D_i}$; $\mathcal{U}_2 \leftarrow \mathcal{U}_1$\;
$\market' \leftarrow \emptyset; \mathcal{H}_1  \leftarrow \emptyset; \mathcal{H}_2  \leftarrow \emptyset
$\;

\ForEach{$S_{D_i} \in \market$}{ \label{line:startFliter1}
\If{$p(S_{D_i}) \leq B$}{
$\market' \leftarrow \market'\cup \{S_{D_i}\}$\;
}
}\label{line:startFliter2}
$\market'' \leftarrow \market'$;\Comment{make a copy of $\market$}\\
\While{$\market' \neq \emptyset$ and $\sum\limits_{S\in \mathcal{H}_1}p(S)\leq B$}{
 Select a set $S_{D_i} \in \market'$ that maximize $\frac{|S_{D_i} \cap \mathcal{U}_1|}{p(S_{D_i})}$\; \label{line:maxCoverage1}
\If{$\mathcal{G}_{\mathcal{H}_1 \cup \{S_{D_i}\}, \delta}$ is connected 
        }{
        \If{$\sum\limits_{S\in \mathcal{H}_1}p(S) + p(S_{D_i}) \leq B$}{
       
        $\mathcal{H}_1 = \mathcal{H}_1 \cup \{S_{D_i}\}$\;  
        $\mathcal{U}_1 = \mathcal{U}_1\backslash S_{D_i} $\;
        }
        }
        $\market' = \market'\backslash S_{D_i} $\;\label{line:maxCoverage2}
        }
\While{$\market'' \neq \emptyset$ and $\sum\limits_{S\in \mathcal{H}_2}p(S)\leq B$}{
Select a set $S_{D_i} \in \market''$ that maximize $|S_{D_i} \cap \mathcal{U}_2|$\; \label{line:maxCoverage11}
\If{$\mathcal{G}_{\mathcal{H}_2 \cup \{S_{D_i}\}, \delta}$ is connected 
        }{
        \If{$\sum\limits_{S\in \mathcal{H}_2}p(S) + p(S_{D_i}) \leq B$}{
       
        $\mathcal{H}_2 = \mathcal{H}_2 \cup \{S_{D_i}\}$\;  
        $\mathcal{U}_2 = \mathcal{U}_2 \backslash S_{D_i} $\;
        }
        }
        $\market'' = \market'' \backslash S_{D_i} $\;\label{line:maxCoverage22}
        } 

$\mathcal{H} \leftarrow \argmax\limits_{\mathcal{H}\in \{\mathcal{H}_1, \mathcal{H}_2\}} |\bigcup\limits_{S\in \mathcal{H}}S|$ \;\label{line:pickmax1}
\Return $\mathcal{H}$\; \label{line:pickmax2}
\end{algorithm}

\noindent\textbf{First-round Search.} We start by filtering out datasets in the data marketplace $\market$ that are out of the budget and obtain a set of candidate datasets $\market'$ (see Lines~\ref{line:startFliter1} to \ref{line:startFliter2}). We then perform a two-round search on $\market'$. Specifically, during the first-round search (see Lines~\ref{line:maxCoverage1} to \ref{line:maxCoverage2}), the algorithm iteratively selects a cell-based dataset $S_{D_i}\in \market'$ to add to $\mathcal{H}_1$, which not only satisfies the budget and spatial connectivity constraints, but also has the maximum ratio of the spatial coverage increment to the dataset's price $|S_{D_i} \cap \mathcal{U}_1|/{p(S_{D_i})}$, until the budget is used up.

\noindent\textbf{Second-round Search.}
It is worth noting that choosing a good starting dataset from $\mathcal{S}_{\mathcal{D}}'$ is crucial due to the spatial connectivity constraint. For instance, if the first selected node is an isolated node (i.e., there is no node connected to it), then it will lead to poor results. To avoid this situation, we increase the number of feasible solutions to ensure the algorithm's effectiveness and prevent the algorithm from falling into local optimality.
Algorithm~\ref{alg:simplegreedy} uses the $\market''$ obtained by copying $\market'$ to perform the second-round search (see Lines~\ref{line:maxCoverage11} to \ref{line:maxCoverage22}). It iteratively selects the dataset with the maximum coverage increment $|S_{D_i} \cap \mathcal{U}_2|$ while ensuring spatial connectivity, until the budget is exhausted.
Finally, the algorithm returns the one with the maximum spatial coverage from $\mathcal{H}_1$ and $\mathcal{H}_2$ (see Lines~\ref{line:pickmax1} to \ref{line:pickmax2}).

\begin{example1}
    Fig.~\ref{fig:simpleGreedy}(a) shows a set of cell-based datasets $\mathcal{S}_{\mathcal{D}} = \{S_{D_1}, \dots, S_{D_8}\}$, a distance threshold $\delta = 1$ and a budget $B = \$6$. 
    When performing \SG, we first find the candidates $\market'$. Since the price of each dataset $S_D \in \market$ is less than $B$, the candidate datasets $\market'= \market$ in this example. The first-round search preferentially selects the dataset with the maximum ratio $|S_{D_i} \cap \mathcal{U}_1|/{p(S_{D_i})}$ within the budget in each iteration. As shown in Fig.~\ref{fig:simpleGreedy}(b), the algorithm adds $\{v_8, v_7, v_4\}$ to the result subset $\mathcal{H}_1$ one by one until the budget is exhausted, where the spatial coverage of $\mathcal{H}_1$ is 8. Next, the second-round search iteratively selects the dataset with the maximum coverage increment $|S_{D_i} \cap \mathcal{U}|$ and satisfies spatial connectivity. Thus, we add the set $\{v_1, v_5\}$ into $\mathcal{H}_2$, resulting in a spatial coverage of 10 for $\mathcal{H}_2$, as shown in Fig.~\ref{fig:simpleGreedy}(c). Finally, the algorithm returns the result $\mathcal{H}_2$, which contains a larger spatial coverage than $\mathcal{H}_1$.
\end{example1}

\noindent\textbf{Theoretical Analysis.} In the following, we analyze the approximation ratio of Algorithm~\ref{alg:simplegreedy}. We use $p_{max}$ and $p_{min}$ to represent the maximum and minimum prices of the datasets in $\market$, respectively.
Without loss of generality, we can assume that the budget is greater than the minimum price (i.e., $B \geq p_{min}$).
To find a result subset that satisfies connectivity, the solution for each round of search contains at least one dataset. We have the following theorem.

\begin{theorem}
\label{theorem:simpleGreedy}
     \SG delivers a $\frac{p_{min}}{B}$ approximate solution for the \MCPBC.
\end{theorem}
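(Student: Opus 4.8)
The plan is to derive the bound by analyzing the second-round solution $\mathcal{H}_2$ in isolation, since \SG returns whichever of $\mathcal{H}_1,\mathcal{H}_2$ has the larger coverage and hence its output dominates $\mathcal{H}_2$. Write $\mathcal{H}^\ast$ for an optimal solution and $\mathrm{OPT}=|\bigcup_{v_i\in\mathcal{H}^\ast}S_{D_i}|$ for its coverage. The first step is a counting bound on the size of $\mathcal{H}^\ast$: every dataset it selects costs at least $p_{min}$ and the total cost is at most $B$, so $|\mathcal{H}^\ast|\le B/p_{min}$.

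Next I would invoke subadditivity of coverage (a union is no larger than the sum of the sizes of its parts) to write $\mathrm{OPT}\le \sum_{v_i\in\mathcal{H}^\ast}|S_{D_i}|\le |\mathcal{H}^\ast|\cdot \max_{v_i\in\mathcal{H}^\ast}|S_{D_i}|$. Combined with the counting bound this gives $\max_{v_i\in\mathcal{H}^\ast}|S_{D_i}|\ge \frac{p_{min}}{B}\,\mathrm{OPT}$; that is, some single dataset in the optimum already covers at least a $p_{min}/B$ fraction of $\mathrm{OPT}$.

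The crucial step is to show that $\mathcal{H}_2$ covers at least this best single dataset. Because every dataset of $\mathcal{H}^\ast$ has price at most $B$, it survives the filtering of Lines~\ref{line:startFliter1}--\ref{line:startFliter2} and therefore belongs to the copy $\mathcal{S}''_{\mathcal{D}}$ used by the second round. In the very first iteration of that loop we have $\mathcal{H}_2=\emptyset$ and $\mathcal{U}_2=\bigcup_{S_{D_i}\in\mathcal{S}_{\mathcal{D}}}S_{D_i}$, so the selection rule of Line~\ref{line:maxCoverage11} picks the globally maximum-coverage dataset, for which $|S_{D_i}\cap\mathcal{U}_2|=|S_{D_i}|$. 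Its singleton induced graph is trivially connected and its price is within budget, so both guards of the loop pass and it is actually inserted into $\mathcal{H}_2$. Hence the first element of $\mathcal{H}_2$ has coverage at least $\max_{v_i\in\mathcal{H}^\ast}|S_{D_i}|\ge \frac{p_{min}}{B}\mathrm{OPT}$, and since coverage is monotone under the insertion of further datasets, the final $\mathcal{H}_2$ --- and a fortiori the returned $\mathcal{H}$ --- covers at least $\frac{p_{min}}{B}\mathrm{OPT}$, which is exactly the claimed ratio.

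I expect the only delicate point, more a matter of care than of genuine difficulty, to be verifying that the round-two greedy actually admits its first pick: one must confirm that the filtering guarantees $p(S_{D_i})\le B$, that a single vertex satisfies the connectivity constraint of Definition~\ref{def:datasetGraph} vacuously, and that together these let both the connectivity and the budget guard in the loop succeed on the first iteration. The weakness of the target ratio $p_{min}/B$ means that no refined greedy analysis (such as the telescoping used for the $1-1/e$ bound of MCP) is needed; the whole force of the argument rests on the single most valuable dataset being captured by the second round.
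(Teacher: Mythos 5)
Your proof is correct and is essentially the paper's own argument for the second-round search: the paper likewise bounds $|\mathcal{H}^*|\le B/p_{min}$, uses the fact that the largest set in $\mathcal{H}^*$ is at least the average, and observes that the first pick of the second round is the globally maximum-coverage candidate, so $|S_{D_1}''|\ge \frac{p_{min}}{B}|OPT|$. The only difference is that the paper additionally proves the first-round (ratio-greedy) solution attains the same bound via $\frac{|S_{D_j}^*|}{p(S_{D_j}^*)}\le\frac{|S_{D_1}'|}{p(S_{D_1}')}$, which, as you correctly note, is logically unnecessary for the stated ratio since the algorithm returns the better of the two solutions; your version is in fact a slightly tighter and cleaner write-up of the half of the paper's proof that carries the whole burden.
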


\begin{proof}

Let $\mathcal{H}^* = \{S_{D_1}^*, \dots, S_{D_{|\mathcal{H}^*|}}^*\}$ be the optimal solution of \MCPBC, $|OPT| = |\bigcup\limits_{S \in \mathcal{H}^*} S|$.

(1) We first analyze the approximation ratio of the first-round search of Algorithm~\ref{alg:simplegreedy}. At the end of iteration $1$, the algorithm first selects the cell-based dataset $S_{D_1}' \in \market'$ that satisfies all constraints and has the maximum ratio of spatial coverage to the dataset price $\frac{|S_{D_1}'|}{p(S_{D_1}')}$. Thus we have
\begin{small}
\begin{equation}
\label{eq:oSetPrice}
\small
    \forall S_{D_j}^* \in \mathcal{H}^*, \frac{|S_{D_j}^*|}{p(S_{D_j}^*)} \leq \frac{|S_{D_1}'|}{p(S_{D_1}')}.
\end{equation}
\end{small}

The Inequality~\ref{eq:oSetPrice} is equivalent to 
\begin{small}
\begin{equation}
\label{eq:oSetPrice2}
    \forall S_{D_j}^* \in \mathcal{H}^*, |S_{D_j}^*| \times \frac{p(S_{D_1}')}{|S_{D_1}'|} \leq {p(S_{D_j}^*)}.
\end{equation}
\end{small}

Since the total price of the sets in $\mathcal{H}^*$ is bounded by the budget $B$, which implies
\begin{equation}
\label{eq:lessbudget}
\small
    p(S_{D_1}^*) + \dots + p(S_{D_{|\mathcal{H}^*|}}^*) \leq B.
\end{equation}

By Inequalities~\ref{eq:oSetPrice2} and \ref{eq:lessbudget}, we obtain that
    \begin{equation}
    \small
      (|S_{D_1}^*| + \dots + |S_{D_{|\mathcal{H}^*|}}^*|) \times \frac{p(S_{D_1}')}{|S_{D_1}'|} \leq B.
\end{equation}

Since $|OPT| \leq (|S_{D_1}^*| + \dots + |S_{D_{|\mathcal{H}^*|}}^*|)$, we can obtain that
  \begin{equation}
  \small
|OPT| \times \frac{p(S_{D_1}')}{|S_{D_1}'|} \leq B.
\end{equation}

Multiplying both sides by $\frac{|S_{D_1}'|}{B}$, we get the inequality
  \begin{equation}
  \small
     |S_{D_1}'| \geq \frac{p(S_{D_1}')}{B} \times |OPT| \geq \frac{p_{min}}{B} \times |OPT|.
\end{equation}

(2) Next, we analyze the approximation ratio of the second-round search of Algorithm~\ref{alg:simplegreedy}. We assume the maximum size of the result subset found by the second-round search is $k = 
 \frac{B}{p_{min}}$. 
 For the optimal solution $\mathcal{H}^*$, we have 
\begin{small}
     \begin{equation}
    \begin{aligned}
    \label{eq:opt_v}
        \sum_{S \in \mathcal{H}^*} |S| & \geq |\bigcup_{S \in \mathcal{H}^*} S| = |OPT|. \\
    \end{aligned}
    \end{equation}
\end{small}
   
Dividing the two sides of Inequality~\ref{eq:opt_v} by $k$, we obtain that
\begin{small}
    \begin{equation}
    \begin{aligned}
        \frac{\sum_{S \in \mathcal{H}^*} |S|}{k} 
        &\geq \frac{|OPT|}{k}. \\
    \end{aligned}
    \end{equation}
\end{small}

 Due to the maximum spatial coverage of a set in the result subset not being less than the average spatial coverage of all sets, we have
 \begin{small}
  \begin{equation}
    \begin{aligned}
        \max_{S\in \mathcal{H}} |S| \geq \frac{\sum_{S \in \mathcal{H}^*} |S|}{k} \geq \frac{|OPT|}{k}.
    \end{aligned}
\end{equation}   
 \end{small}

Because at the end of iteration $1$, the second-round search first selects the dataset $S_{D_1}''$ with the maximum spatial coverage in $\market''$. We have
 \begin{small}
     \begin{equation}
    \label{eq:simpleConclusion2}
        |S_{D_1}''| \geq \frac{|OPT|}{k} \geq \frac{p_{min}}{B} |OPT|.
    \end{equation}
 \end{small}
The above analysis proves that \SG provides a $\frac{p_{min}}{B}-$ approximation ratio for the \MCPBC.
\end{proof}

\vspace{-0.3cm}
\noindent\textbf{Time Complexity.} We denote $n = |\market|$ as the number of datasets in the $\market$.
Firstly, identifying a set of candidate datasets $\market'$ whose price is lower than the budget can be accomplished in $O(n)$. 
In the first-round search, there are at most $\frac{B}{p_{min}}$ iterations, with each iteration requiring $O(n)$ set union operations to pick the subset that has the maximum ratio $|S_{D}' \cap \mathcal{U}|/p(S_{D}')$. Here, we assume that each cell-based dataset $S_D' \in \market'$ is represented by an array of ordered indexes of length $|S_D'|$. Consequently, obtaining the union set and measuring its size for two datasets requires $O(|\mathcal{U}|)$ time, where $|\mathcal{U}| = |\cup_{S_D \in \market'} S_D|$. Specifically, given a universe $\mathcal{U}$ and two cell-based datasets $S_{D_1}$ and $S_{D_2}$, we can traverse every cell ID of the $\mathcal{U}$ and determine whether it exists in either $S_{D_1}$ or $S_{D_2}$. Thus, the time complexity of finding the union set is $O(|\mathcal{U}|)$. The second-round search has the same time complexity as the first-round search because it follows the same process, except that the dataset with the maximum coverage increment $|S_{D}' \cap \mathcal{U}|$ is selected at each iteration. The overall complexity is $O(n + 2\frac{B}{p_{min}}|\mathcal{U}|n) = O(\frac{B|\mathcal{U}|}{p_{min}}n)$.

 \vspace{-0.2cm}
\subsection{Improved Algorithm of \SG }
\label{sub:IBGPA}

\begin{algorithm}[h!]
\small
\caption{$\texttt{\BGPA}(\market, p, B, \delta)$}
\label{alg:IBGPA}
\LinesNumbered 
\KwIn{$\market$: a set of \SpatialSets,
$p$: a price function for each $S_{D_i}\in \market$,
  $B:$ a budget,
  $\delta$: a distance threshold. }
\KwOut{$\mathcal{H}$: the result subset.
}

$\mathcal{H} \leftarrow \emptyset$;
$\market' \leftarrow \emptyset$\;

\ForEach{$S_{D_i} \in \market$}{ \label{line:DPSAFliter1}
\If{$p(S_{D_i}) \leq B $}{
$\market' \leftarrow \market'\cup \{S_{D_i}\}$\;
}
}\label{line:DPSAFliter2}
$\mathcal{G}_{\market', \delta} \leftarrow $ build the spatial dataset graph over $\market'$ and  $\delta$ and use BFS to find all subgraphs\;
\label{line:constructGraph}

\ForEach{$\mathcal{G}_{\market', \delta}^i \in \mathcal{G}_{\market', \delta}$}{ \label{line:eachSubgraph}
Run BFS rooted at each node in $\mathcal{G}_{\market', \delta}$ to get the center node $v^*$ and the BFS tree $T_{v^*}$\; \label{line:findCenter}
$\mathcal{L}_{T_{v^*}} \leftarrow$ set of leaves of the BFS tree $T_{v^*}$\;
$\mathcal{S}_{\mathcal{L}_{T_{v^*}}} \leftarrow$ set of $S_{L_{(v^*, v_j)}}$ for each  $v_j \in \mathcal{L}_{T_{v^*}}$\;
\scalebox{0.9}{$\mathcal{H}_1 \leftarrow \texttt{BudgetedGreedy}(\market', \mathcal{L}_{T_{v^*}}, \mathcal{S}_{\mathcal{L}_{T_{v^*}}}, v^*, B, 0)$}\;
\label{line:addWhile1}
\scalebox{0.9}{$ \mathcal{H}_2 \leftarrow \texttt{BudgetedGreedy}(\market', \mathcal{L}_{T_{v^*}}, \mathcal{S}_{\mathcal{L}_{T_{v^*}}}, v^*, B, 1)$}\; \label{line:addWhile}
}
$\mathcal{H} \leftarrow \argmax\limits_{\mathcal{H}\in \{\mathcal{H}_1, \mathcal{H}_2\}} |\bigcup\limits_{S\in \mathcal{H}}S|$ \;

\Return $\mathcal{H}$\;
\vspace{-0.6em}
\noindent\rule{8cm}{0.6pt}
  \SetKwFunction{FSum}{\scalebox{0.9}{BudgetedGreedy}}\label{ExhaustBudgetedGreedy}
 \SetKwProg{Fn}{Function}{:}{}
  \Fn{\FSum{\scalebox{0.80}{$\market',\mathcal{L}_{T_{v^*}}, \mathcal{S}_{\mathcal{L}_{T_{v^*}}},v^*, $  $B,$ \hspace{-0.2cm} $ flag$}}}{
  \KwIn{
  $\market'$: a set of candidate datasets,
  $\mathcal{L}_{T_{v^*}}$: set of leaves of the BFS tree $T_{v^*}$,
  $\mathcal{S}_{\mathcal{L}_{T_{v^*}}}$: set of $S_{L_{(v^*, v_j)}}$ for all leaf nodes,
  $v^*$: center node,
   $B$: budget, $flag$: indicate which
greedy strategy is used \;}
   \KwOut{$\mathcal{H}$: the result subset\;}
$\mathcal{H} \leftarrow \emptyset$; $\mathcal{R} \leftarrow \{v^*\}$\;\label{line:functionBegin}
$S_{D^*} \leftarrow $ cell-based dataset of $v^*$\; 
$\mathcal{U} \leftarrow (\cup_{S \in \market'} S) \backslash S_{D^*}$\;

\While{$\mathcal{L}_{T_{v^*}} \neq \emptyset$ and $\sum\limits_{v_j \in \mathcal{R}} p(S_{D_j}) \leq B$}{\label{line:begingreedyH1}
$\Delta_p = p(S_{L_{(v^*,v_j)}})-\sum_{v_i \in (L_{(v^*,v_j)} \cap \mathcal{R})} p(S_{D_i}) $\;
\eIf{flag is 0}{
$v_j = \argmax_{v_j \in \mathcal{L}_{T_{v^*}}} |S_{L_{(v^*,v_j)}} \cap \mathcal{U}|/\Delta_p$ \label{line:ratio}
}{
$v_j = \argmax_{v_j \in \mathcal{L}_{T_{v^*}}} |S_{L_{(v^*,v_j)}} \cap \mathcal{U}|$ \label{line:coverage}
}

\label{line:maxRatio}
\If{$\sum\limits_{v_i \in \mathcal{R}} p(S_{D_i}) + \Delta_p \leq B$}{
$\mathcal{R} \leftarrow \mathcal{R} \cup L_{(v^*,v_j)}$\;
$\mathcal{U} \leftarrow \mathcal{U}\backslash S_{L_{(v^*,v_j)}}$\;
}
$\mathcal{L}_{T_{v^*}} \leftarrow \mathcal{L}_{T_{v^*}} \backslash \{v_j\}$\;

}\label{line:endgreedyH1}
\ForEach{$v_i \in \mathcal{R}$}{
$\mathcal{H} \leftarrow \mathcal{H} \cup \{S_{D_i}\}$\;
}
} 

\Return $\mathcal{H}$\; \label{line:functionEnd}
\end{algorithm}

From the above analysis, we can see that \SG exhibits a favorable approximate ratio of $\frac{p_{min}}{B}$ for a small budget $B$. However, the approximation ratio diminishes when confronted with a large budget $B$. To address this, we propose a \emph{Dual-Path Search Algorithm} (called \BGPA) to provide a superior approximation ratio for scenarios involving large $B$, inspired by the algorithm that addresses the connected maximum coverage problem~\cite{Vandin2011,Vandin2012}.

In the next subsection, we will describe the details of \BGPA.
Essentially, we try to find a collection of paths with a common node in the spatial dataset graph (defined in Def.~\ref{def:datasetGraph}), where the path is represented as a sequence of nodes connected by edges. This ensures that the subset found is a connected subgraph.
However, finding a combination of paths with a common node that maximizes the approximation ratio can be time-consuming (see the time complexity analysis in Section~\ref{subsec:BGPA}). To improve the efficiency, we propose two accelerate strategies that produce competitive solutions with \BGPA in practice while achieving significant speedups in Section~\ref{subsec:accelerate}.

\subsubsection{Dual-Path Search Algorithm (\BGPA)}  
\label{subsec:BGPA}
In the subsequent subsections, we will introduce the definitions and notations related to \BGPA, followed by the implementation details of Algorithm~\ref{alg:IBGPA}. Afterwards, we will further prove the approximation ratio of the algorithm and analyze its time complexity.

\begin{definition1}
\label{def:shortestPath}
    (\textbf{Shortest Path (Distance)}) The shortest path $\mathcal{SP}(v_i, v_j)$ between two nodes $v_i \in \mathcal{V}$ and $v_j \in \mathcal{V}$  in a spatial dataset graph $\mathcal{G}_{\market, \delta}= (\mathcal{V}, \mathcal{E})$  is the shortest sequence of non-repeated and ordered nodes $\{v_i, \dots, v_j\}$ connected by edges present in the spatial dataset graph $\mathcal{G}_{\market, \delta}$. The shortest path distance $|\mathcal{SP}(v_i, v_j)|$ is the number of edges in the shortest path connecting them.
\end{definition1}

\begin{definition1}
(\textbf{Eccentricity}) 
The eccentricity $\xi(v_i, \mathcal{G}_{\mathcal{S}_\mathcal{D}, \delta})$ of a given node $v_i \in \mathcal{V}$ in spatial dataset graph $\mathcal{G}_{\mathcal{S}_\mathcal{D}, \delta} = (\mathcal{V}, \mathcal{E})$ is the greatest distance between $v_i$ and any node $v_j \in \mathcal{V}$, i.e., $\xi(v_i, \mathcal{G}_{\mathcal{S}_\mathcal{D}, \delta}) = \max_{v_j \in \mathcal{V}}|\mathcal{SP}(v_i, v_j)|$.  For ease of presentation, we omit the input parameter $\mathcal{G}_{\mathcal{S}_\mathcal{D}, \delta}$ when the context is clear.
\end{definition1}

\begin{definition1}
\label{def:diameter}
    (\textbf{Diameter}) 
    The diameter $\sigma(\mathcal{G}_{\mathcal{S}_\mathcal{D}, \delta})$ of a spatial dataset graph $\mathcal{G}_{\mathcal{S}_\mathcal{D}, \delta} = (\mathcal{V}, \mathcal{E})$ is the maximum eccentricity of any node in the graph, i.e., $\sigma(\mathcal{G}_{\mathcal{S}_\mathcal{D}, \delta}) = \max_{v \in \mathcal{V}} \xi(v)$. For ease of presentation, we omit the input parameter $\mathcal{G}_{\mathcal{S}_\mathcal{D}, \delta}$ when the context is clear.
\end{definition1}


\begin{definition1}
\label{def:centerRadius}
    (\textbf{Raidus and Center}) 
    The radius $r$ of a spatial dataset graph $\mathcal{G}_{\mathcal{S}_\mathcal{D}, \delta} = (\mathcal{V}, \mathcal{E})$ is the minimum eccentricity among all nodes in $\mathcal{V}$, i.e., $r = \min_{v \in \mathcal{V}} \xi(v)$. The node $v^*$ with the minimum eccentricity is called the center of the graph, i.e., $\xi(v^*) = r$.
\end{definition1}

\begin{definition1}
\label{def:BFSTree}
    (\textbf{BFS tree}) The BFS tree $T_{v^*}$ rooted at the center $v^*$ of the spatial dataset graph $\mathcal{G}_{\mathcal{S}_\mathcal{D}, \delta}$ is generated by running the breadth-first search (BFS), which starts from the selected node $v^*$ and traverses the graph layerwise, thus forming a tree structure based on the order in which nodes are discovered.
    
\end{definition1}

To facilitate the understanding of \BGPA, we first introduce some notions related to the BFS tree.
The notation $\mathcal{L}_{T_{v^*}}$ represents the collection of all leaf nodes of the BFS tree $T_{v^*}$. $L_{(v^*,v_j)}$ represents the collection of nodes on the shortest path from $v^*$ to any node $v_j \in \mathcal{L}_{T_{v^*}}$ (including $v_j$ but not $v^*$), i.e., $L_{(v^*,v_j)} = \mathcal{SP}(v^*, v_j)\backslash v^*$. Furthermore, $S_{L_{(v^*,v_j)}}$ represents the union of cell-based datasets associated with nodes in the path $L_{(v^*,v_j)}$, i.e., $S_{L_{(v^*,v_j)}} = \cup_{v_j \in L_{(v^*,v_j)}} S_{D_j}$. $p(S_{L_{(v^*,v_j)}})$ represents the price of the path $L_{(v^*,v_j)}$, which is the sum of the prices of cell-based datasets corresponding to all nodes on the path $L_{(v^*,v_j)}$, i.e., $p(S_{L_{(v^*,v_j)}}) = \sum_{v_j \in L_{(v^*,v_j)}} p(S_{D_j})$. $\mathcal{S}_{\mathcal{L}_{T_{v^*}}}$ represents the collection of $S_{L_{(v^*, v_j)}}$ for each $v_j \in \mathcal{L}_{T_{v^*}}$, i.e., $\mathcal{S}_{\mathcal{L}_{T_{v^*}}} = \{S_{L_{(v^*,v_j)}}, v_j \in \mathcal{L}_{T_{v^*}}\}$. $\mathcal{H}$ is used to record the union of nodes on the selected path. In the following, we describe the details of \BGPA according to the pseudocode of Algorithm~\ref{alg:IBGPA}.

\noindent \textbf{Step 1: Construction of Spatial Dataset Graph.}
We begin by filtering datasets in the data marketplace $\market$ that exceed the allocated budget, resulting in a subset of candidate datasets $\market'$ (see Lines~\ref{line:DPSAFliter1} to \ref{line:DPSAFliter2}).
We then construct the spatial dataset graph $\mathcal{G}_{\market', \delta} = (\mathcal{V},\mathcal{E}))$ to support the execution of subsequent steps (see Line~\ref{line:constructGraph}). Specifically, for each node $v_i \in \mathcal{V}$, we first compute the cell-based dataset distance (defined in Def.~\ref{defi:dist}) between $v_i$ and each node in $\mathcal{V}$. If the distance does not exceed the distance threshold $\delta$, we add an edge to the spatial dataset graph $\mathcal{G}_{\market', \delta}$. The dataset graph construction process continues until all connected dataset pairs are identified.
 
\noindent \textbf{Step 2: Finding Connected Subgraphs.}
Given the distribution of datasets across diverse locations on Earth, the constructed spatial data graph is likely not a connected graph. To mitigate the risk of the algorithm falling into a local optimal solution, we need to execute the algorithm for each connected subgraph $\mathcal{G}_{\market', \delta}^i \in \mathcal{G}_{\market', \delta}$. Therefore, it is necessary to identify the set $\mathcal{G}_{\market', \delta}$ comprising all connected subgraphs subsequent to the dataset graph construction (see Line~\ref{line:eachSubgraph}). The process of finding a subgraph involves randomly selecting a node and employing BFS within the dataset graph to find all nodes within a connected subgraph. This process repeats for the remaining nodes until all nodes have been traversed.

\noindent \textbf{Step 3: Finding Center of Subgraph.}
Next, our objective is to find the center and the radius for each connected subgraph $\mathcal{G}_{\market', \delta}^i$. To achieve this, we employ BFS rooted at each node $v \in \mathcal{V}$ and obtain a BFS tree $T_{v}$, where the depth of $T_{v}$ is the eccentricity of the node $v$. Following this procedure for all nodes, we can identify the center $v^*$ with the minimum eccentricity. In the following, we implement a two-round search to find a better feasible solution.

\myparagraph{Step 4. Generating Feasible Solutions}  
In the first-round search, the \texttt{BudgetedGreedy} function (see Lines~\ref{line:functionBegin} to \ref{line:functionEnd}) selects the leaf node $v_j$ from $\mathcal{L}_{T_{v^*}}$ with the maximum ratio $|S_{L_{(v^*,v_j)}} \cap \mathcal{U}|/\Delta_p$ in each iteration (see Line~\ref{line:ratio}). Then, the function returns the result subset, which is a set of datasets corresponding to all nodes along the selected paths, until the candidate set traversal ends or the remaining budget is exhausted. Similarly, to further prevent the algorithm from falling into a local optimal solution, the function is called again to obtain an alternative feasible solution.
This is accomplished by selecting the leaf node $v_j$ with the maximum spatial coverage increment $|S_{L_{(v^*,v_j)}} \cap \mathcal{U}|$ in each iteration (see Line~\ref{line:coverage}). Finally, the algorithm returns the better result subset $\mathcal{H}$ from two feasible solutions.

\begin{figure*}[t]
\setlength{\belowcaptionskip}{0 cm}	
\centering
\includegraphics[width=0.98\textwidth]{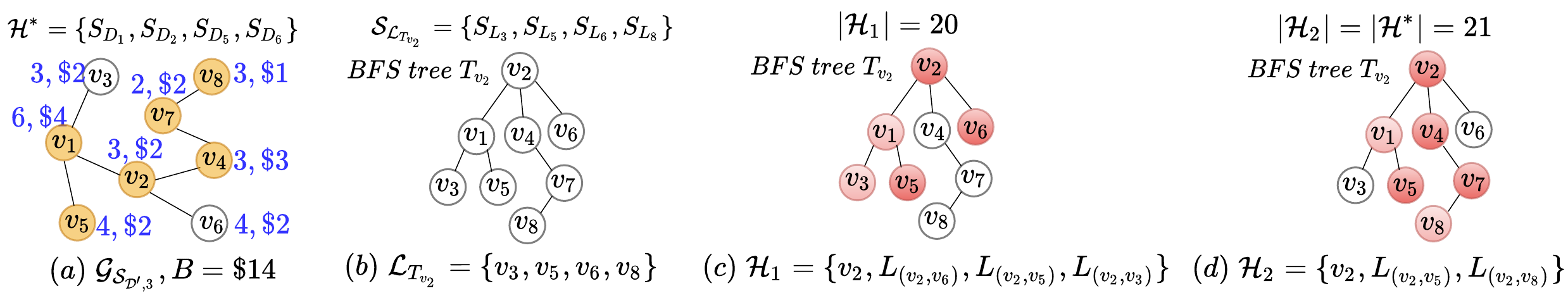}
\caption{Illustration of \BGPA, where (a) shows the optimal solution $\mathcal{H}^*$ under $B = \$14$; (b) shows the BFS tree $T_{v_2}$ rooted at the center $v_2$, (c) shows the solution $\mathcal{H}_1$ found by the first-round search, and (d) shows the solution $\mathcal{H}_2$ found by the second-round search.
}
\label{fig:greedyPath}
\end{figure*}
\begin{example1}
\label{exa:BGPA}
To facilitate the understanding of \BGPA, we use an example similar to Fig.~\ref{fig:simpleGreedy} for illustration.
Fig.~\ref{fig:greedyPath}(a) shows the optimal solution of \MCPBC in $\market'$ when $B$ is increased to $\$14$. Algorithm~\ref{alg:IBGPA} first performs the pre-processing step, which runs BFS on each node $v$ and calculates its eccentricity $\xi(v)$.  Notably, node $v_2$ exhibits a minimum eccentricity of 3. Thus, $v_2$ is selected as the center and $3$ as the radius $r$. Fig.~\ref{fig:greedyPath}(b) shows the BFS tree $T_{v_2}$ rooted at $v_2$.
All leaf nodes form a subset $\mathcal{L}_{T_{v_2}}$=$\{v_3, v_5, v_6, v_8\}$. For each node $v_j$ $\in$ $\mathcal{L}_{T_{v_2}}$, the union of the cell-based datasets of all nodes on the shortest path $L_{(v_2, v_j)}$ forms a new set $\mathcal{S}_{\mathcal{L}_{T_{v_2}}}$=$\{ S_{L_{(v_2, v_3)}}, S_{L_{(v_2, v_5)}}, S_{L_{(v_2, v_6)}}, S_{L_{(v_2, v_8)}}\}$.

During the first function call, \textsf{DPSA} first adds the node $v_2$ to the result subset $\mathcal{H}_1$ and continues its execution because $B- p(S_{D_2})$=\$14-\$2=\$12$\geq r\times p_{max}$. Next, \textsf{DPSA} iteratively selects the path with the maximum ratio {$|S_{L_{(v^*,v_j)}} \cap \mathcal{U}|/\Delta_p$ ($v_j \in \mathcal{L}_{T_{v^*}}$)} among all available paths. As shown in Fig.~\ref{fig:greedyPath}(c), \textsf{DPSA} adds the path $L_{v_2, v_6}$ to $\mathcal{H}_1$, as it has the maximum ratio of $4/2=2$. The remaining budget is updated to $\$12-\$2 = \$10$. Next, the ratios for the remaining three paths are calculated as 9/6, 10/6, and 8/6. Thus, \textsf{DPSA} adds the path $L_{v_2, v_5}$ to $\mathcal{H}_1$. The budget is then reduced to $\$10-\$6 = \$4$. Since the price of the path $L_{(v_2, v_8)}$ is $\$6$, which exceeds the remaining budget, the function instead selects $L_{(v_2, v_3)}$ and adds it to $\mathcal{H}_1$, after which the function terminates.

During the second function call, \textsf{DPSA} iteratively selects the path with the maximum spatial coverage increment {$|S_{L_{(v^*,v_j)}} \cap \mathcal{U}|$ ($v_j \in \mathcal{L}_{T_{v^*}}$)} among all available paths. \textsf{DPSA} first adds the node $v_2$ to the result subset $\mathcal{H}_2$. Next, the spatial coverage increment $|S_{L_{(v_2,v_j)}} \cap \mathcal{U}|$ for the four paths from the root node $v_2$ to leaf nodes $v_3, v_5, v_8$ and $v_6$ are 9, 10, 8, and 4, respectively.
Therefore, \textsf{DPSA} selects $L_{(v_2, v_5)}$, which offers the maximum coverage increment $|S_{L_{(v_2, v_5)}} \cap \mathcal{U}|$ = 10. 
After selecting the path $L(v2, v5)$, the remaining budget is calculated as $12-p(S_{D_1})-p(S_{D_5}) = \$12-\$4-\$2 = \$6$.

The function then proceeds with the next iteration. The coverage increments of the remaining three paths from $v_2$ to $v_3, v_8$ and $v_6$ are 3, 8, and 4, respectively, with corresponding costs of \$2, \$6, and \$2. Thus, \textsf{DPSA} selects the path $L(v_2, v_8)$, which offers the maximum coverage increment 8. At this point, the budget is completely exhausted, and the function terminates, yielding the result set $\mathcal{H}_2 = \{v_2, L_{(v_2, v_5)}, L_{(v_2, v_8)}\}$, as shown in Fig.~\ref{fig:greedyPath}(d). Finally, since the spatial coverage $|\mathcal{H}_1| = 20 < |\mathcal{H}_2|=21$, the algorithm returns $\mathcal{H}_2$, which is identical to the optimal solution $\mathcal{H}^*$.

\end{example1}

\noindent \textbf{Theoretical Analysis}. 
During the execution of \BGPA, we need to select the path $S_{L_{(v^*, v_j)}}$ to the result subset $\mathcal{H}$ at each step. Consequently, it is crucial for the provided budget to be substantial enough to carry out at least one iteration; otherwise, the obtained result subset $\mathcal{H}$ may be empty, lacking any approximation guarantees. To ensure that the algorithm is theoretically guaranteed on a budget $B$, we make the reasonable assumption that the budget $B \geq r \times p_{max}+ p(S_{D^*})$.
In the ensuing analysis, we will first prove Lemma~\ref{lemma:iterationNumber}, which is then used to prove the approximation ratio of DPSA in Theorem~\ref{lemma:BGPA}.
Let $k'$ denote the minimum number of iterations in the algorithm within the budget $B$, and $k$ denote the size of the optimal result subset in \MCPBC. Then we have the following lemma.

\begin{lemma}
\label{lemma:iterationNumber}
The minimum number of iterations of the greedy search $k'$ is at least $\frac{p_{min}}{r \times p_{max}}(1-\frac{p_{max}}{B})k$.
\end{lemma}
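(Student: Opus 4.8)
The plan is to bound the minimum iteration count $k'$ from below by relating the budget consumed per iteration to the total budget $B$, and then connecting the optimal subset size $k$ to this iteration count through the path-based structure of \BGPA. First I would observe that each iteration of the \texttt{BudgetedGreedy} function selects one path $L_{(v^*, v_j)}$ from the center $v^*$ to a leaf node, and the incremental price $\Delta_p$ of adding such a path is at most its full price $p(S_{L_{(v^*, v_j)}})$. Since every path in the BFS tree $T_{v^*}$ has length bounded by the radius $r$ (the eccentricity of the center), each path contains at most $r$ nodes, so its price is at most $r \times p_{max}$. This gives an upper bound on the budget consumed in any single iteration.

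Next I would set up the counting argument. The algorithm continues selecting paths as long as the accumulated price stays within $B$. Since each iteration consumes at most $r \times p_{max}$ of the budget, the number of iterations $k'$ is at least roughly $\frac{B}{r \times p_{max}}$, but one must be careful about the boundary: the last feasible iteration is the one whose addition does not exceed $B$, so we should guarantee that the budget available before termination is at least $B - p_{max}$ (accounting for the possibility that the final path cannot be added). This is where the factor $(1 - \frac{p_{max}}{B})$ will emerge, since $\frac{B - p_{max}}{r \times p_{max}} = \frac{1}{r \times p_{max}}(B - p_{max}) = \frac{B}{r \times p_{max}}(1 - \frac{p_{max}}{B})$.

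The remaining ingredient is to introduce $k$, the size of the optimal solution $\mathcal{H}^*$. The optimal subset is itself connected and satisfies the budget constraint, so the total price of its $k$ datasets is at most $B$; since each dataset costs at least $p_{min}$, we have $k \times p_{min} \leq B$, i.e., $B \geq k \times p_{min}$. Substituting this lower bound on $B$ into the expression $\frac{B}{r \times p_{max}}(1 - \frac{p_{max}}{B})$ converts the bound on $k'$ from one in terms of $B$ into one in terms of $k$, yielding
\begin{equation}
\small
k' \geq \frac{k \times p_{min}}{r \times p_{max}}\left(1 - \frac{p_{max}}{B}\right) = \frac{p_{min}}{r \times p_{max}}\left(1 - \frac{p_{max}}{B}\right) k,
\end{equation}
which is exactly the claimed bound.

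I expect the main obstacle to be handling the boundary condition at termination precisely. The loop terminates either when the leaf set is exhausted or when the budget no longer admits another path, and one must argue carefully that before termination the consumed budget is at least $B - r \times p_{max}$ (or $B - p_{max}$, depending on how the per-iteration cost is charged) so that the $(1 - \frac{p_{max}}{B})$ slack term is justified rather than merely asserted. A secondary subtlety is ensuring the assumption $B \geq r \times p_{max} + p(S_{D^*})$ stated before the lemma guarantees at least one iteration executes, so that $k' \geq 1$ and the bound is non-vacuous; I would invoke this assumption explicitly to rule out the degenerate empty-result case.
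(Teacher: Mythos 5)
Your proposal follows the same skeleton as the paper's proof: bound the cost of each selected path by $r \times p_{max}$ (a root-to-leaf path in $T_{v^*}$ has at most $r$ nodes), bound the optimal size via the budget constraint $k \cdot p_{min} \leq B$, and combine the two algebraically. The paper packages the last step by positing $k' \geq x \cdot k$ and solving for the largest admissible coefficient $x$, but that detour is exactly the direct substitution you perform.

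The one place you deviate is the origin of the $(1-\frac{p_{max}}{B})$ factor, and that is precisely where your argument has a gap --- one you partly flagged yourself. You justify the numerator $B - p_{max}$ by a termination argument (``the final path cannot be added''). But a skipped path can cost up to $r \times p_{max}$, not $p_{max}$, so your boundary argument, carried out rigorously, only guarantees that the budget consumed before termination exceeds $B - r \cdot p_{max}$; this yields $k' \geq \frac{B - r\, p_{max}}{r\, p_{max}}$, i.e., the weaker factor $\bigl(1 - \frac{r\, p_{max}}{B}\bigr)$, which does not match the lemma. The paper obtains the $p_{max}$ slack by a different accounting: \texttt{BudgetedGreedy} pays $p(S_{D^*})$ for the center node before any iteration begins, so the iterations draw on a budget of exactly $B - p(S_{D^*})$, giving $k' \geq \frac{B - p(S_{D^*})}{r \times p_{max}}$; since $p(S_{D^*}) \leq p_{max}$, the factor $\bigl(1-\frac{p_{max}}{B}\bigr)$ follows. (The standing assumption $B \geq r \times p_{max} + p(S_{D^*})$ then guarantees at least one iteration, as you correctly note.) So to recover the lemma's constant, charge the $p_{max}$ term to the center node's up-front price rather than to the last skipped path; your optimal-solution step ($k \leq B/p_{min}$) and the final substitution are otherwise identical to the paper's.
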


\begin{proof}

The price of the dataset corresponding to center node $v^*$ is denoted as $p(S_{D^*})$, which is not greater than $p_{max}$. Considering that the eccentricity of the center node $v^*$ is the radius $r$, the number of nodes in $L_{(v^*, v_j)}$ ($v_j$ $\in$ $\mathcal{L}_{T_{v^*}})$ is less than $r$. Then the \texttt{BudgetedGreedy} function costs at most $r \times p_{max}$ budget at each iteration. Thus, we can establish the inequality $ \frac{B-p(S_{D^*})}{r \times p_{max}} \leq k'\leq \frac{B-p(S_{D^*})}{r \times p_{min}}$. Since $k$ is number of nodes in the optimal solution of \MCPBC, $\frac{B}{p_{max}} \leq k \leq \frac{B}{p_{min}}$. 
We assume that $k' \geq x \times k$, where $x$ is the coefficient to be found. We have
\vspace{-0.1cm}
\begin{equation}
\small
\begin{aligned}
 x \times \frac{B}{p_{max}}   \leq x \times k \leq x \times \frac{B}{p_{min}}. \\
\end{aligned}
\end{equation}

Since we know that $k'\geq x \times k$, the lower bound of $k'$ must be no less than the upper bound of $x \times k'$, i.e., $\frac{B-p(S_{D^*})}{r \times p_{max}} \geq x \times \frac{B}{p_{min}}$. Then,\\
\vspace{-0.1cm}
\begin{small}
\begin{equation}
\begin{aligned}
    x \leq \frac{p_{min}}{r \times p_{max}}(1- \frac{p(S_{D^*})}{B}).
\end{aligned}
\end{equation}
\end{small}
Let $x$ be $\frac{p_{min}}{r \times p_{max}}(1- \frac{p(S_{D^*})}{B})$, we have
\begin{small}
\begin{equation}
\hspace*{-0.3cm} 
\begin{aligned}
k'&\!\geq \frac{p_{min}}{r \times p_{max}}(1- \frac{p(S_{D^*})}{B}) k \geq \frac{p_{min}}{r \times p_{max}}(1- \frac{p_{max}}{B}) k. \\
\end{aligned}
\end{equation}
\end{small}

\end{proof}

\noindent Next, we analyze the approximation ratio of \BGPA.
\begin{theorem}
\label{lemma:BGPA}
    \BGPA delivers a $\frac{p_{min}}{r \times p_{max}}(1-\frac{p_{max}}{B}) (1-\frac{1}{e})$ approximate solution for the \MCPBC.
\end{theorem}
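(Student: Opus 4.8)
The plan is to analyze the second-round (coverage-maximizing) invocation of \texttt{BudgetedGreedy}, since Algorithm~\ref{alg:IBGPA} returns the better of the two feasible solutions and any lower bound on $\mathcal{H}_2$ therefore bounds the returned $\mathcal{H}$. Two structural facts are needed up front. First, because every selected path $L_{(v^*,v_j)}$ shares the center $v^*$, the union of the selected paths always induces a connected subgraph, so the output is feasible for the connectivity constraint; this is precisely what justifies treating whole paths (rather than individual nodes) as the atoms of greedy selection. Second, the objective $|\bigcup_{v_i\in\mathcal{H}}S_{D_i}|$ is monotone and submodular in the selected set, which is the property the greedy recursion will exploit.

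The core step is a standard averaging argument adapted to paths. Write $f(\mathcal{H})=|\bigcup_{v_i\in\mathcal{H}}S_{D_i}|$, let $\mathcal{H}^*$ be the optimal solution with $k=|\mathcal{H}^*|$ nodes and $f(\mathcal{H}^*)=|OPT|$, and let $\mathcal{H}_i$ be the greedy set after $i$ iterations. Every node of $\mathcal{H}^*$ is an ancestor of some leaf of $T_{v^*}$ (or is itself a leaf), so it lies on at least one root-to-leaf path of the BFS tree; hence the optimal cells still uncovered after iteration $i$ are charged to at most $k$ currently available paths. By averaging, at least one such path contributes marginal coverage at least $\tfrac{|OPT|-f(\mathcal{H}_i)}{k}$, and since the second-round search selects the path of maximum marginal coverage, I obtain the recursion
\begin{equation*}
f(\mathcal{H}_{i+1})-f(\mathcal{H}_i)\ \geq\ \frac{|OPT|-f(\mathcal{H}_i)}{k}.
\end{equation*}
Unrolling this over the $k'$ iterations the algorithm actually performs yields $f(\mathcal{H}_{k'})\geq\bigl(1-(1-\tfrac{1}{k})^{k'}\bigr)|OPT|$.

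It then remains to convert the iteration count into the claimed ratio. Let $c=\frac{p_{min}}{r\,p_{max}}\bigl(1-\frac{p_{max}}{B}\bigr)$, which satisfies $0\leq c\leq 1$ because $p_{min}\leq p_{max}$, $r\geq 1$, and $p_{max}\leq B$ after the budget filtering in Lines~\ref{line:DPSAFliter1}--\ref{line:DPSAFliter2}. Lemma~\ref{lemma:iterationNumber} gives $k'\geq c\,k$, so $(1-\tfrac1k)^{k'}\leq(1-\tfrac1k)^{ck}=\bigl((1-\tfrac1k)^{k}\bigr)^{c}\leq e^{-c}$, and therefore $f(\mathcal{H}_{k'})\geq(1-e^{-c})|OPT|$. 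Finally, since $h(x)=1-e^{-x}$ is concave with $h(0)=0$ and $h(1)=1-\tfrac1e$, the chord bound on $[0,1]$ gives $1-e^{-c}\geq c\,(1-\tfrac1e)$, which delivers exactly $f(\mathcal{H})\geq\frac{p_{min}}{r\,p_{max}}\bigl(1-\frac{p_{max}}{B}\bigr)\bigl(1-\tfrac1e\bigr)|OPT|$.

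I expect the main obstacle to be the averaging step: I must argue carefully that the uncovered optimal cells are charged to at most $k$ paths that are \emph{genuinely available} to the greedy (i.e., correspond to current leaves in $\mathcal{L}_{T_{v^*}}$), and that the per-iteration budget-feasibility check does not block selecting a path of the required marginal gain. This is where the standing assumption $B\geq r\times p_{max}+p(S_{D^*})$ and the per-iteration cost bound $r\times p_{max}$ established inside Lemma~\ref{lemma:iterationNumber} must be invoked to guarantee that the $k'$ iterations really complete. By contrast, the submodular unrolling and the concavity reduction $1-e^{-c}\geq c(1-\tfrac1e)$ are routine once the recursion and the bound $c\leq 1$ are in place.
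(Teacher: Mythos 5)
Your proposal is correct, but it takes a genuinely different route from the paper's proof. The paper treats the coverage-maximizing call of \texttt{BudgetedGreedy} as a black-box instance of the classical greedy for the maximum $k'$-coverage problem over the path collection $\mathcal{S}_{\mathcal{L}_{T_{v^*}}}$: it invokes the $(1-\frac{1}{e})$ guarantee of~\cite{Hochbaum1998} against the optimal $k'$ paths $\mathcal{H}_{path}'$, then chains through intermediate optima, $|\mathcal{H}_{path}'| \geq |\mathcal{H}_{node}'|$ (every dataset is contained in some root-to-leaf path set) and $|\mathcal{H}_{node}'| \geq \frac{k'}{k}|\mathcal{H}_{node}| \geq \frac{k'}{k}|\mathcal{H}^*|$, before substituting Lemma~\ref{lemma:iterationNumber}. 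You instead run the submodular greedy recursion directly against the \MCPBC optimum with denominator $k$, unroll it over $k'$ iterations to get $(1-e^{-c})|OPT|$ with $c=\frac{p_{min}}{r\,p_{max}}(1-\frac{p_{max}}{B})$, and close with the chord bound $1-e^{-c}\geq c(1-\frac{1}{e})$ on $[0,1]$. Your route is self-contained (it needs neither the external $k'$-coverage theorem nor the paper's unproven scaling step $|\mathcal{H}_{node}'|\geq\frac{k'}{k}|\mathcal{H}_{node}|$, which itself hides an averaging argument) and your intermediate bound $(1-e^{-c})|OPT|$ is strictly stronger than the stated ratio before you relax it; the paper's route, by comparing only to optima over the path collection, avoids your charging argument entirely. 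The obstacle you flag is real but resolves exactly as you anticipate: since each iteration's incremental cost $\Delta_p$ is at most $r\times p_{max}$, the cumulative cost after $j\leq k'\leq\frac{B-p(S_{D^*})}{r\times p_{max}}$ additions cannot exceed $B$, so during the first $k'$ iterations the budget check never fails, every argmax leaf is actually added, and leaves are removed from $\mathcal{L}_{T_{v^*}}$ only upon selection --- hence the $\leq k$ root-to-leaf paths covering the nodes of $\mathcal{H}^*$ that are not yet selected remain available, which is precisely what the averaging step needs (and if $\mathcal{L}_{T_{v^*}}$ empties before $k'$ iterations, the whole component, hence all of $|OPT|$, is already covered, so the bound holds trivially). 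Like the paper, you correctly restrict the analysis to the coverage-maximizing invocation and let the returned maximum of $\mathcal{H}_1,\mathcal{H}_2$ inherit the guarantee.
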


\begin{proof}
(1) We first analyze the approximation ratio of the second function call of Algorithm~\ref{alg:IBGPA}.
We define $\mathcal{H}_{k'}$ as the output after $k'$ iterations during the first function call.
Additionally, let $\mathcal{H}_{path}'$ be the optimal solution of the maximum $k'$-coverage problem \cite{Hochbaum1998} from $\mathcal{S}_{\mathcal{L}_{T_{v^*}}}$.
We observe that $\mathcal{H}_{k'}$ can be regarded as the output of the greedy algorithm~\cite{Hochbaum1998} for the maximum $k'-$coverage problem. This is because all paths $L_{(v^*, v_j)} (v_j\in \mathcal{L}_{T_{v^*}})$ have the common root node $v^*$, which forms a connected subgraph, and the total price is less than the budget $B$. 
For ease of presentation, we use $|\mathcal{H}_{k'}| = |\bigcup_{S \in \mathcal{H}} S|$ to represent the spatial coverage of $\mathcal{H}_{k'}$.
By Theorem 1 in the maximum coverage problem \cite{Hochbaum1998}, we have
\vspace{-0.1cm}
\begin{equation}
\small
    |\mathcal{H}_{k'}| > (1-\frac{1}{e})  |\mathcal{H}_{path}'|.
\end{equation}

Let $\mathcal{H}_{node}'$ be the optimal solution of maximum $k'$-coverage problem from $\market'$. 
Since for each dataset $S_{D_i}\in \market'$, there is some subsets $S_{L_{(v^*, v_j)}}\in \mathcal{S}_{\mathcal{L}_{T_{v^*}}}$ such that $S_{D_i} \subseteq S_{L_{(v^*, v_j)}}$. Thus the optimal solution of $k'$ subsets from $\mathcal{H}_{path}'$ is better than that from $\mathcal{H}_{node}'$, that is

\begin{equation}
\small
    |\mathcal{H}_{path}'| \geq |\mathcal{H}_{node}'|.
\end{equation}

Let $\mathcal{H}_{node}$ be the optimal solution of the maximum $k-$coverage problem from $\market'$. Then $\mathcal{H}_{node}' \geq \frac{k'}{k} \mathcal{H}_{node}$. In addition, we have known that $k' \geq \frac{p_{min}}{r \times p_{max}}(1- \frac{p_{max})}{B}) k$. Thus, we have
\begin{equation}
\small
    \mathcal{H}_{node}' \geq \frac{k'}{k} \mathcal{H}_{node} \geq \frac{p_{min}}{r \times p_{max}}(1-\frac{p_{max}}{B}) \mathcal{H}_{node}.
\end{equation}

Let $\mathcal{H}^*$ be the optimal solution of \MCPBC from $\market'$. We have

\begin{equation}
\small
   |\mathcal{H}_{node}|  \geq |\mathcal{H}^*|.
\end{equation}

Combining the above Inequalities, we have
\begin{equation}
\small
\begin{aligned}
       |\mathcal{H}_{k'}| &\; \geq (1-\frac{1}{e})  |\mathcal{H}_{path}'|\\
       &\; \geq (1-\frac{1}{e})|\mathcal{H}_{node}'| \\
       &\; \geq \frac{p_{min}}{r \times p_{max}}(1-\frac{p_{max}}{B}) (1-\frac{1}{e})|\mathcal{H}_{node}|\\
        &\; \geq \frac{p_{min}}{r \times p_{max}}(1-\frac{p_{max}}{B}) (1-\frac{1}{e})|\mathcal{H}^*|.
\end{aligned}
\end{equation}

(2) The first function call is to iteratively select the node with the maximum ratio $|S_{L_{(v^*,v_j)}} \cap \mathcal{U}|/\Delta_p$. Although it cannot provide any approximation guarantees in theory, but are highly effective in practice, as confirmed by our experiments. In summary, \BGPA provides a $\frac{p_{min}}{r \times p_{max}}(1-\frac{p_{max}}{B}) (1-\frac{1}{e})$-approximation for the \MCPBC.

\end{proof}

Theorem~\ref{lemma:BGPA} demonstrates the approximation achieved by the algorithm when executed at the center node $v^*$ with a radius $r = \xi(v^*)$ in the graph (see Line~\ref{line:findCenter}).  
If we random choose a node $v$ as the center to perform Algorithm~\ref{alg:IBGPA}, the approximation ratio is $\frac{p_{min}}{\xi(v) \times p_{max}}(1-\frac{p_{max}}{B})k$. By observing the approximation ratio, we can conclude that a smaller value of $\xi(v)$ leads to a higher ratio for \BGPA. 
This is precisely why we choose the center $v^*$ with the minimum eccentricity $\xi(v^*)$, as it ensures the maximum guaranteed approximation ratio. In the upcoming discussion, we show that when $B$ exceeds a certain value, \BGPA provides a higher approximation ratio than \SG.

\begin{lemma}
    For $B \geq \frac{r \times e \times p_{max}}{e-1}+p_{max}$, \BGPA delivers a better approximation ratio than \SG, i.e., $\frac{p_{min}}{r \times p_{max}}(1-\frac{p_{max}}{B}) (1-\frac{1}{e}) \geq \frac{p_{min}}{B}$.
\end{lemma}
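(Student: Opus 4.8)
The plan is to reduce the claimed inequality to the stated lower bound on $B$ by a sequence of reversible algebraic manipulations, each of which only multiplies or divides by a strictly positive quantity and therefore preserves the direction of the inequality. The target inequality is
\begin{equation}
\frac{p_{min}}{r \times p_{max}}\left(1-\frac{p_{max}}{B}\right)\left(1-\frac{1}{e}\right) \geq \frac{p_{min}}{B}.
\end{equation}
Since $p_{min} > 0$, I would first cancel the common factor $p_{min}$ from both sides. Next I would rewrite $1 - \tfrac{1}{e} = \tfrac{e-1}{e}$ and rewrite $1 - \tfrac{p_{max}}{B} = \tfrac{B - p_{max}}{B}$, so that the left-hand side becomes a single product of fractions.

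The second step is to clear denominators. Multiplying both sides by $B > 0$ removes the $B$ in the denominator on the right and leaves a factor $\tfrac{B - p_{max}}{B}\cdot B = B - p_{max}$ on the left, yielding the equivalent inequality
\begin{equation}
\frac{e-1}{e}\cdot\frac{B - p_{max}}{r \times p_{max}} \geq 1.
\end{equation}
Multiplying through by the positive constant $\tfrac{e}{e-1}\cdot r \times p_{max}$ then isolates $B - p_{max}$, giving $B - p_{max} \geq \tfrac{e}{e-1}\, r\, p_{max}$, i.e. $B \geq \tfrac{r\,e\,p_{max}}{e-1} + p_{max}$. This is exactly the hypothesis of the lemma, so the chain closes.

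The concluding step is to observe that, because every operation above is multiplication or division by a strictly positive quantity ($p_{min}$, $B$, and $\tfrac{e}{e-1}\, r\, p_{max}$ are all positive under the standing assumptions $B \geq p_{min} > 0$, $r \geq 1$, $p_{max}>0$), each transformation is an equivalence. Hence the displayed inequality holds precisely when $B \geq \tfrac{r\,e\,p_{max}}{e-1}+p_{max}$, which proves the lemma and simultaneously shows the threshold is tight. Honestly, there is no deep obstacle here: the only point requiring care is verifying positivity of each multiplier so that no inequality direction is accidentally flipped, and confirming that the reversibility justifies reading the derived condition back as a sufficient (indeed necessary) one for the approximation-ratio comparison to favor \BGPA over \SG.
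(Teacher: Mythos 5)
Your proof is correct and follows essentially the same route as the paper: both reduce the target inequality to the budget threshold $B \geq \frac{r\,e\,p_{max}}{e-1}+p_{max}$ by rewriting $1-\frac{1}{e}=\frac{e-1}{e}$ and $1-\frac{p_{max}}{B}=\frac{B-p_{max}}{B}$, cancelling the positive common factors, and isolating $B$. Your explicit remark that every step is an equivalence (multiplication by strictly positive quantities) is a slight improvement in rigor over the paper, whose chain of $\Rightarrow$ runs from the conclusion toward the hypothesis and is only valid because those implications are in fact reversible.
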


\begin{proof}
\begin{small}
\begin{equation}
\begin{aligned}
 &\; \frac{p_{min}}{r \times p_{max}}(1-\frac{p_{max}}{B}) (1-\frac{1}{e})   \geq \frac{p_{min}}{B} \\
 &\; \Rightarrow \frac{p_{min}}{B} \times \frac{B-p_{max}}{r\times p_{max}} \times \frac{e-1}{e} \geq \frac{p_{min}}{B}\\
&\; \Rightarrow \frac{B-p_{max}}{r\times p_{max}} \times \frac{e-1}{e}  \geq 1\\
&\; \Rightarrow  B \geq \frac{r \times e \times p_{max}}{e-1}+p_{max}
\end{aligned}
\end{equation}
\end{small}
    Thus, \BGPA delivers a better approximation ratio when $B \geq \frac{r \times e \times p_{max}}{e-1}+p_{max}$.
\end{proof}

\myparagraph{Time Complexity} As discussed in the previous subsection, the time complexity of finding the candidate datasets $\market'$ takes $O(n)$, where $n = |\market|$. Then, to establish the graph $\mathcal{G}_{\market', \delta}$, we compute the cell-based dataset distance between each pair of dataset nodes, which requires a total of $n+(n-1)+ \dots+1 = \frac{n(n+1)}{2}$ operations. Subsequently, to find the center of $\mathcal{G}_{\market', \delta}$, we run the BFS for all nodes in the graph. The complexity of BFS is known to be $O(m)$ \cite{Hochbaum2020}, where $m$ is the number of edges in $\mathcal{G}_{\market', \delta}$. Therefore, finding the center of a graph takes at most $O(nm)$. Next, we compute $S_{L_{(v^*, v_j)}}$ for each leaf node $v_j$, which involves performing $O(n)$ set union operations. Each set union operation takes $O(|\mathcal{U}|)$ ($|\mathcal{U}| = |\cup_{S \in \market'} S|$), as we described in the time complexity analysis in Section~\ref{subsec:SG}. Consequently, the time complexity for this step is $O(n|\mathcal{U}|)$, where $|\mathcal{U}| = |\cup_{S \in \market'} S|$. Finally, during two function call of \BGPA, there are at most $2 \times k'$ iterations ($\frac{B-p(S_{D^*})}{r \times p_{max}} \leq k' \leq \frac{B-p(S_{D^*})}{r \times p_{min}}$). Each iteration requires at most $O(n)$ set union operations to pick the path with the maximum marginal gain. In summary, the total time complexity of \BGPA is $O(n+\frac{n(n+1)}{2}+nm+|\mathcal{U}|n+2k'|\mathcal{U}|n) = O((m+k'|\mathcal{U}|)n + n^2)$.

\subsubsection{Acceleration Strategies}
\label{subsec:accelerate}
Based on the time complexity analysis of \BGPA,  it is evident that the construction of the spatial dataset graph and finding the center node are particularly time-consuming processes. Consequently, we design two acceleration strategies to enhance the efficiency of \BGPA. The index-based acceleration strategy mainly focuses on rapidly constructing a dataset graph. When finding all connected dataset pairs for each node and adding edges to the graph, it can employ pruning strategies to batch filter out those datasets that do not satisfy the spatial connectivity. Additionally, the BFS-based acceleration strategy can quickly find the center with twice BFS, without having to perform the BFS for all nodes, significantly reducing the time to find the center. In the following, we describe the details of two acceleration strategies.

\noindent\textbf{Index-based Acceleration.} 
\label{sec:accelerate1}
We first extend the ball tree~\cite{Omohundro1989} to speed up the construction of the spatial dataset graph. Given a set of cell-based datasets $\mathcal{S}_{\mathcal{D}} = \{S_{D_1}, \dots, S_{D_{|\mathcal{D}|}}\}$, each dataset can be treated as a ball node. 
The centroid of the dataset (i.e., the mean of all cell coordinates within the dataset) serves as the center of the ball node, and the radius is determined by the maximum distance between the centroid and any cell coordinates within the dataset. This differs from the concept of the center and radius of the dataset graph in Definition~\ref{def:centerRadius}. We then construct a tree index in a top-down way based on all ball nodes~\cite{Yang2022}.

During the construction of the dataset graph using the tree index, we initiate a graph $\mathcal{G}_{\mathcal{S}_\mathcal{D}, \delta}$. For each node $v_i \in \mathcal{V}$, we perform a top-down recursive search from the root ball to identify all nodes connected with $v_i$. For a ball node (denoted as $v_b$) in the tree visited during a recursive search, if the distance between the centers of $v_i$ and $v_b$ minus the radii of both nodes exceeds the threshold $\delta$, it indicates any child of $v_b$ is also farther than $\delta$ from $v_i$. Thus, $v_b$ and its children can be safely pruned. Conversely, if the sum of the distances between the centers of $v_i$ and $v_b$ plus the radii of both nodes is less than $\delta$, it implies that any child of $v_b$ is closer than $\delta$ to $v_i$, indicating that $v_i$ is directly connected to any child of $v_b$.

Otherwise, we proceed with the search for the left and right child nodes of $v_b$ until a leaf node (denoted as $v_j$) is reached. We then compute the dataset distance between $v_i$ and $v_j$. If $dist(v_i, v_j) \leq \delta$, we add an undirected edge from $v_i$ to $v_j$ in $\mathcal{G}_{\mathcal{S}_\mathcal{D}, \delta}$. This process terminates until all nodes in the graph have been visited.

\noindent\textbf{BFS-based Acceleration.} 
\label{sec:accelerate2}
To speed up finding the center, we adopt a two-round BFS strategy. Specifically, we start a round of BFS by randomly choosing a node $v_i$, which enables us to find the node $v_j$ that is furthest away from $v_i$ (if there are multiple farthest nodes, we randomly choose one). We then take the node $v_j$ as the starting node, redo the BFS, and find the node $v_k$ farthest from $v_j$, where the shortest path distance (defined in Def.~\ref{def:shortestPath}) from $v_j$ to $v_k$ is the diameter of the graph, and the center of the path is one of the centers of the graph. 
Next, we prove the correctness of the above search process in an acyclic graph $\mathcal{G}_{\mathcal{S}_\mathcal{D}, \delta}=(\mathcal{V}, \mathcal{E})$. 

We use $v_s$ and $v_e$ to represent two nodes, where the shortest path distance $|\mathcal{SP}(v_s, v_e)|$ from $v_s$ to $v_e$ is equal to the diameter of $\mathcal{G}_{\mathcal{S}_\mathcal{D}, \delta}$, i.e., $|\mathcal{SP}(v_s, v_e)| = \sigma(\mathcal{G}_{\mathcal{S}_\mathcal{D}, \delta})$. Then, we have the following theorem.  
\begin{theorem}
 $\forall v_i \in \mathcal{V}$, the node $v_j$ that is farthest from $v_i$ must be $v_s$ or $v_e$, i.e., $v_j \in \{v_s, v_e\}$.
\end{theorem}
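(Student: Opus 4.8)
The plan is to use the hypothesis that $\mathcal{G}_{\mathcal{S}_\mathcal{D}, \delta}=(\mathcal{V}, \mathcal{E})$ is acyclic and (within the component traversed by the BFS) connected, so it is a tree: between any two nodes there is a unique path, and any two such paths meet in a single contiguous subpath. I fix the diameter path $\mathcal{SP}(v_s, v_e)$, whose length $|\mathcal{SP}(v_s, v_e)| = \sigma(\mathcal{G}_{\mathcal{S}_\mathcal{D}, \delta})$ is maximal over all node pairs. For an arbitrary $v_i$ I let $c$ be the projection of $v_i$ onto this path, i.e. the unique node of $\mathcal{SP}(v_s, v_e)$ closest to $v_i$; since the path from $v_i$ to either endpoint must pass through $c$, we get $|\mathcal{SP}(v_i, v_s)| = |\mathcal{SP}(v_i, c)| + |\mathcal{SP}(c, v_s)|$ and the analogous identity for $v_e$. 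The target is to show that the eccentricity $\xi(v_i)$ equals $\max(|\mathcal{SP}(v_i, v_s)|, |\mathcal{SP}(v_i, v_e)|)$, so that the farthest node $v_j$ is realised at $v_s$ or $v_e$, giving exactly $v_j \in \{v_s, v_e\}$.

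The core step is a path-splicing bound on $|\mathcal{SP}(v_i, w)|$ for an arbitrary node $w$, using maximality of the diameter. Let $q$ be the projection of $w$ onto $\mathcal{SP}(v_s, v_e)$. I split into cases: (i) $c \neq q$, and (ii) $c = q$, the latter further by whether $v_i$ and $w$ share a branch hanging off $c$. In the representative case where $q$ lies between $c$ and $v_e$ along the path, the unique $v_i$-to-$w$ path decomposes as $|\mathcal{SP}(v_i, w)| = |\mathcal{SP}(v_i, c)| + |\mathcal{SP}(c, q)| + |\mathcal{SP}(q, w)|$. Applying diameter maximality to the path $v_s \to q \to w$ gives $|\mathcal{SP}(v_s, q)| + |\mathcal{SP}(q, w)| \leq \sigma(\mathcal{G}_{\mathcal{S}_\mathcal{D}, \delta}) = |\mathcal{SP}(v_s, q)| + |\mathcal{SP}(q, v_e)|$, hence $|\mathcal{SP}(q, w)| \leq |\mathcal{SP}(q, v_e)|$; substituting and collapsing the colinear segments yields $|\mathcal{SP}(v_i, w)| \leq |\mathcal{SP}(v_i, c)| + |\mathcal{SP}(c, v_e)| = |\mathcal{SP}(v_i, v_e)|$. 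The symmetric configurations (with $q$ on the $v_s$-side, or $w$ attached at $c$ itself) produce the same bound against $v_s$ or $v_e$, so in every case $|\mathcal{SP}(v_i, w)| \leq \max(|\mathcal{SP}(v_i, v_s)|, |\mathcal{SP}(v_i, v_e)|)$. Thus no node is strictly farther from $v_i$ than the farther of $v_s, v_e$, which is therefore a farthest node $v_j$, establishing $v_j \in \{v_s, v_e\}$.

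I expect the main obstacle to be the equality case of this inequality chain. The bound $|\mathcal{SP}(q, w)| \leq |\mathcal{SP}(q, v_e)|$ can be tight, so a priori several nodes may tie for the maximum distance, and the literal reading ``$v_j$ must be $v_s$ or $v_e$'' needs the tie broken toward a diameter endpoint. I would resolve this by analysing the equality condition and showing that it forces $w$ to be itself an endpoint of a longest path, together with the selection convention already adopted in the BFS-based acceleration (when multiple farthest nodes exist, one is chosen), so that the selected $v_j$ is taken in $\{v_s, v_e\}$. The safest way to keep the statement exactly as worded is to carry the distance-level identity $\xi(v_i) = \max(|\mathcal{SP}(v_i, v_s)|, |\mathcal{SP}(v_i, v_e)|)$ as the workhorse and read off $v_j$ from the endpoint attaining it; the entire difficulty lies in verifying the three projection cases and in this tie analysis, with no heavy computation involved.
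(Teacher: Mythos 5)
Your proof is correct and reaches the conclusion by a genuinely different route from the paper. The paper argues by contradiction with a case split on whether $\mathcal{SP}(v_i, v_j)$ intersects the diameter path: its Case~1 disposes of the intersecting configuration by asserting that the farthest node from any node $v_c$ on $\mathcal{SP}(v_s, v_e)$ must be $v_s$ or $v_e$ --- which is an unproven special case of the very theorem being shown --- and its Case~2 derives $|\mathcal{SP}(v_s, v_j)| > |\mathcal{SP}(v_s, v_e)|$ by splicing the walk $v_s \to v_t \to v_i \to v_j$, a step that as printed bounds the wrong side, since a concatenated walk \emph{upper}-bounds $|\mathcal{SP}(v_s, v_j)|$ rather than lower-bounding it. You instead prove the positive statement directly: project $v_i$ and an arbitrary $w$ onto the diameter path at $c$ and $q$, apply maximality of $\sigma$ at $q$ to get $|\mathcal{SP}(q, w)| \leq |\mathcal{SP}(q, v_e)|$ (or the $v_s$-side analogue), and collapse the collinear segments to obtain the identity $\xi(v_i) = \max\bigl(|\mathcal{SP}(v_i, v_s)|, |\mathcal{SP}(v_i, v_e)|\bigr)$. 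This buys you two things the paper's proof lacks: the intersecting and non-intersecting configurations are handled by one uniform splice instead of two asymmetric arguments, and the tie case is made explicit --- as you correctly observe, the literal wording ``the farthest node must be $v_s$ or $v_e$'' fails under ties (a star with three leaves already has a farthest node outside any fixed diametral pair), whereas your distance-level identity, which is all the two-round BFS correctness actually requires (some farthest node lies in $\{v_s, v_e\}$, and any tied node is itself an endpoint of a longest path), survives. The paper's proof is shorter but silently assumes uniqueness of the farthest node; yours is the watertight version of the same tree fact.
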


\begin{proof}
   For the node $v_j$ farthest from the given node $v_i$, there are three possible cases in total. In the following, we will prove that the first two cases are not true, and only the third case holds true. 

 \noindent \textbf{Case 1:} $v_j \notin \{v_s, v_e\}$ and the $\mathcal{SP}(v_i, v_j)$ at least has a common node $v_c$ with $\mathcal{SP}(v_s, v_e)$. Due to $v_c \in \mathcal{SP}(v_s, v_e)$, the node farthest from $v_c$ must be $v_s$ or $v_e$. Then, the farthest node found after passing $v_c$ from $v_i$ must be $v_s$ or $v_e$. Thus, this case is not true.

\noindent \textbf{Case 2:} $v_j \notin \{v_s, v_e\}$ and the $\mathcal{SP}(v_i, v_j)$ has no common node with $\mathcal{SP}(v_s, v_e)$. We use $v_t$ to represent one of the nodes in $\mathcal{V}$.
Since the node $v_j$ is the furthest node from $v_i$ and $v_j \notin \{v_s, v_e\}$, we have 
\begin{equation}
\nonumber
\hspace{-0.1cm}
\fontsize{9.5pt}{10pt}\selectfont
\begin{aligned}
       &\; \quad \:\:\, |\mathcal{SP}(v_i, v_j)| > |\mathcal{SP}(v_i, v_e)| \\
       &\; \Rightarrow  |\mathcal{SP}(v_i, v_j)| > |\mathcal{SP}(v_i, v_t)|+|\mathcal{SP}(v_t, v_e)| \\
       &\; \Rightarrow |\mathcal{SP}(v_i, v_j)| +|\mathcal{SP}(v_i, v_t)|  > |\mathcal{SP}(v_t, v_e)| \\
       &\; \Rightarrow |\mathcal{SP}(v_i, v_j)| +|\mathcal{SP}(v_i, v_t)| + |\mathcal{SP}(v_s, v_t)|> \\ &\;\quad \quad |\mathcal{SP}(v_t, v_e)| + |\mathcal{SP}(v_s, v_t)| \\
       &\; \Rightarrow |\mathcal{SP}(v_s, v_j)|  > |\mathcal{SP}(v_s, v_e)| 
\end{aligned}
\end{equation}

However, it is impossible to have a shortest path distance that is greater than the diameter. Thus, this case is also not true. 

\noindent \textbf{Case 3:} $v_j \in \{v_s, v_e\}$. Since the first two cases are false, the \textbf{Case 3} is true.
In summary, the theorem holds.

\end{proof}

\begin{theorem}
Given a node $v_j \in \{v_s,v_e\}$, the shortest path distance $\mathcal{SP}(v_j, v_k)$ from $v_j$ to the farthest node $v_k \in \mathcal{V}$ is equal to the diameter $\sigma$ of the acyclic graph $\mathcal{G}_{\mathcal{S}_\mathcal{D}, \delta}$, i.e., $|\mathcal{SP}(v_j, v_k)| = \sigma$. 
\end{theorem}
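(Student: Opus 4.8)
The plan is to prove the equality by squeezing $|\mathcal{SP}(v_j, v_k)|$ between two bounds, relying only on the definition of the diameter together with the fact that $v_s$ and $v_e$ realize it. Without loss of generality I would fix $v_j = v_s$; the case $v_j = v_e$ is symmetric since $|\mathcal{SP}(v_s, v_e)| = |\mathcal{SP}(v_e, v_s)| = \sigma$. The point of the theorem is precisely the second BFS in the acceleration strategy: having landed on a diameter endpoint, the farthest node reached is at distance exactly $\sigma$, which is what certifies that two BFS rounds suffice to recover the diameter (and hence the center).

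First I would establish the lower bound. By hypothesis $v_k$ is a node farthest from $v_j$, so its distance is at least the distance from $v_j$ to \emph{any} other vertex; taking $v_e \in \mathcal{V}$ as the witness gives
\begin{equation}
\nonumber
|\mathcal{SP}(v_j, v_k)| \geq |\mathcal{SP}(v_j, v_e)| = |\mathcal{SP}(v_s, v_e)| = \sigma.
\end{equation}
Next I would invoke the upper bound, which is immediate from Definition~\ref{def:diameter}: the diameter $\sigma$ is the maximum eccentricity, hence the largest shortest-path distance over all pairs of nodes, so in particular $|\mathcal{SP}(v_j, v_k)| \leq \sigma$ for the pair $(v_j, v_k)$. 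Combining the two inequalities yields $|\mathcal{SP}(v_j, v_k)| = \sigma$, as claimed.

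I do not anticipate a genuine obstacle, since the statement drops out directly once a pair of diameter endpoints $v_s, v_e$ is in hand. The only points needing care are that the farthest node $v_k$ is well defined and that neither inequality secretly requires the acyclicity of $\mathcal{G}_{\mathcal{S}_\mathcal{D}, \delta}$ — it does not, as the lower bound uses only that $v_e$ is one admissible target and the upper bound uses only the diameter definition. As a consistency check I would note that the result also follows from the preceding theorem: applied with starting node $v_s$, that theorem forces any farthest node from $v_s$ to lie in $\{v_s, v_e\}$, and since $v_s$ is at distance $0$ from itself the farthest node must be $v_e$, independently confirming $|\mathcal{SP}(v_j, v_k)| = \sigma$.
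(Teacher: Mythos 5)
Your proof is correct, and its skeleton --- ruling out both strict inequalities --- matches the paper's three-case analysis: the paper's Case 1 ($|\mathcal{SP}(v_j,v_k)| > \sigma$ is impossible by the definition of the diameter) is exactly your upper bound, and its Case 2 plays the role of your lower bound, with Case 3 the resulting equality. The genuine difference is how the lower bound is obtained. The paper's Case 2 re-invokes the preceding theorem: it asserts that the node farthest from $v_j$ must itself lie in $\{v_s, v_e\}$, and from that concludes the distance cannot fall below $\sigma$. You instead exhibit $v_e$ (resp.\ $v_s$) as an explicit witness: since $v_k$ maximizes distance from $v_j = v_s$, you get $|\mathcal{SP}(v_j,v_k)| \geq |\mathcal{SP}(v_s,v_e)| = \sigma$ immediately. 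What your route buys is self-containedness and generality: it needs nothing beyond the existence of a pair realizing the diameter, it does not lean on the previous theorem (whose proof is the delicate part of this section), and --- as you correctly observe --- it never uses acyclicity, so this particular statement holds in any connected graph; the acyclicity assumption matters only for the preceding theorem, which is what guarantees the second BFS actually starts from such a $v_j$. What the paper's route buys is narrative continuity: by phrasing the lower bound through the previous theorem, the two results read as one chained justification of the two-round BFS strategy --- which is essentially what your closing consistency check reconstructs.
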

\begin{proof}
For the shortest path distance $|\mathcal{SP}(v_j, v_k)|$, there are three possible cases in relation to the diameter.  In the following, we will prove the first two cases are not true, and only the third case holds true. 

\noindent \textbf{Case 1:} $|\mathcal{SP}(v_j, v_k)| > \sigma$. Due to the diameter being the maximum eccentricity of any node in the graph (defined in Def.~\ref{def:diameter}), there cannot be a shortest path such that the distance $|\mathcal{SP}(v_s, v_e)|$ is greater than the diameter. Thus, this case is not true.

\noindent \textbf{Case 2:} $|\mathcal{SP}(v_j, v_k)| < \sigma$. Since the node $v_j \in \{v_s, v_e\}$ and the shortest path distance $|\mathcal{SP}(v_s, v_e)|$ is the greatest distance in $\mathcal{G}_{\mathcal{S}_\mathcal{D}, \delta}$, which equals the diameter of $\mathcal{G}_{\mathcal{S}_\mathcal{D}, \delta}$, the node $v_k$ farthest from $v_j$  has to be one of the nodes in $\{v_s, v_e\}$. Then, the distance $|\mathcal{SP}(v_j, v_k)|$ cannot be less than $\sigma$. Thus, this case is not true. 

\noindent \textbf{Case 3:} $|\mathcal{SP}(v_j, v_k)| = \sigma$.  Since the first two of the three
cases are false, the \textbf{Case 3} is true. In summary, the theorem holds.
\end{proof}
In light of the preceding proof, we know that the node $v_j \in \{v_s, v_e\}$. Then, the shortest path distance $|\mathcal{SP}(v_j, v_k)|$ from $v_j$ to the farthest node $v_k$ must be the diameter of the $\mathcal{G}$. 
Although the aforementioned strategy for determining the center may not achieve absolute accuracy, it proves to be highly effective in practice, as validated by our experimental results.

\end{sloppypar}

\vspace{-0.3cm}
\section{Experiments}
\label{sec:exp}

\begin{sloppypar}

\begin{table*}
\centering
\caption{An overview of five spatial data collections.}
  \label{tab:dataset}
  \scalebox{1.0}{
\renewcommand\arraystretch{1.2}\addtolength{\tabcolsep}{0pt}
		\begin{tabular}{ccccc}
			\hline
			\textbf{Data Collection}     &\textbf{Storage (GB)} & \textbf{Number of Datasets} & \textbf{Number of Points} & \textbf{Coordinates Range}   \\
			\hline
   \texttt{BTAA}      &3.38     & 3,204 & 96,788,280 & [$(-179^{\circ}.46', -87^{\circ}.70'), (179^{\circ}.98', 71^{\circ}.40')$]  \\
			 \texttt{Trackable}     &0.68       &10,000 & 40,442,724 & [$(-53^{\circ}.16', -89^{\circ}.99'), (69^{\circ}.70', 88^{\circ}.21')$]  \\ 
			\texttt{Identifiable}   &0.85     &10,000 & 67,688,576 & [$(-89^{\circ}.68', -89^{\circ}.99'), (88^{\circ}.88', 89^{\circ}.99')$]  \\
			\texttt{Public}      &0.54      & 10,000 & 32,407,907 &[$(-54^{\circ}.85', -89^{\circ}.99'), (79^{\circ}.44', 76^{\circ}.38')$]  \\
   \texttt{OSM}      &4.07      & 60,000 & 207,879,164 &[$(-68^{\circ}.25', -89^{\circ}.99'), (70^{\circ}.06', 89^{\circ}.99')$]  \\
			\hline
		\end{tabular}
  }
  \vspace{-0.3cm}
\end{table*}

In this section, we conduct experiments on real-world dataset collections to demonstrate the effectiveness and efficiency of our proposed algorithms. In Section~\ref{sec:setup}, we introduce the experimental setups. Subsequently, we present the experimental results that vary with several key parameters to demonstrate our algorithms' effectiveness (Section~\ref{sec:effect}) and efficiency (Section~\ref{sec:efficient}), followed by our visualization study in Section~\ref{sec:case}.

\subsection{Experimental Setups}
\label{sec:setup}
\noindent\textbf{Data Collections.} 
We conduct experiments in five real-world spatial data collections. Table~\ref{tab:dataset} shows the statistics of five data collections, which covers the storage, the number of datasets, the number of points, and the range of coordinates. 
The \textsf{BTAA} is sourced from the Big Ten Academic Alliance Geoportal, which offers spatial datasets for the midwestern US, such as bus routes, vehicle radar, and census data.
The other four collections are public datasets we collected from OpenStreetMap, which contain real spatial datasets from the world uploaded by users and marked by a tag (e.g., trackable, identifiable, and public). The \textsf{Trackbale} contains the movement tracks of anonymous entities labelled trackable that can be tracked by satellites or sensors, the \textsf{Identifiable} contains the movement tracks of identifiable individuals labelled identifiable, such as vehicles and bikes, and the \textsf{Public} contains the publicly available location datasets labelled public, such as stations and buildings. \textsf{OSM} contains large-scale datasets of three tags simultaneously to verify the scalability of our algorithms.
Fig.~\ref{fig:distribution} shows the distribution of dataset scale in five dataset collections. We can observe that the dataset distribution is relatively uniform across the different scales, ensuring a comprehensive evaluation across various scenarios.

\smallskip
\noindent\textbf{Parameter Settings.}
The settings of parameters are summarized in Table~\ref{tab:parameter} where the default settings are underlined. Below is the description of each parameter:

\begin{table}
\centering
\caption{Parameter settings.} 

\label{tab:parameter} 
  \scalebox{1.0}{
\renewcommand\arraystretch{1.2}\addtolength{\tabcolsep}{1pt}
		\begin{tabular}{lc}
			\hline
			\textbf{Parameter} & \textbf{Settings}   \\
			\hline
			$B$: budget & \{0.001, 0.005, 0.01, 0.05, \underline{0.1} \}\\ 
            $\delta$: distance threshold & \{0, 5,  \underline{10}, 15, 20 \}\\
			$\theta$: resolution & \{9, 10, \underline{11}, 12,  13\}\\
            $m$: number of datasets & \{0.2, 0.4, 0.6, 0.8, \underline{1}\}$\times |\mathcal{D}|$\\
			\hline
		\end{tabular}
  }
\end{table}

\begin{figure*}[htbp]
\centering
\setlength{\abovecaptionskip}{-0.1 cm} 
\setlength{\belowcaptionskip}{-0.4 cm} 
\subfigure[]{
\begin{minipage}[t]{0.20\linewidth}
\centering
\includegraphics[width=3.8cm,height=2.65cm]{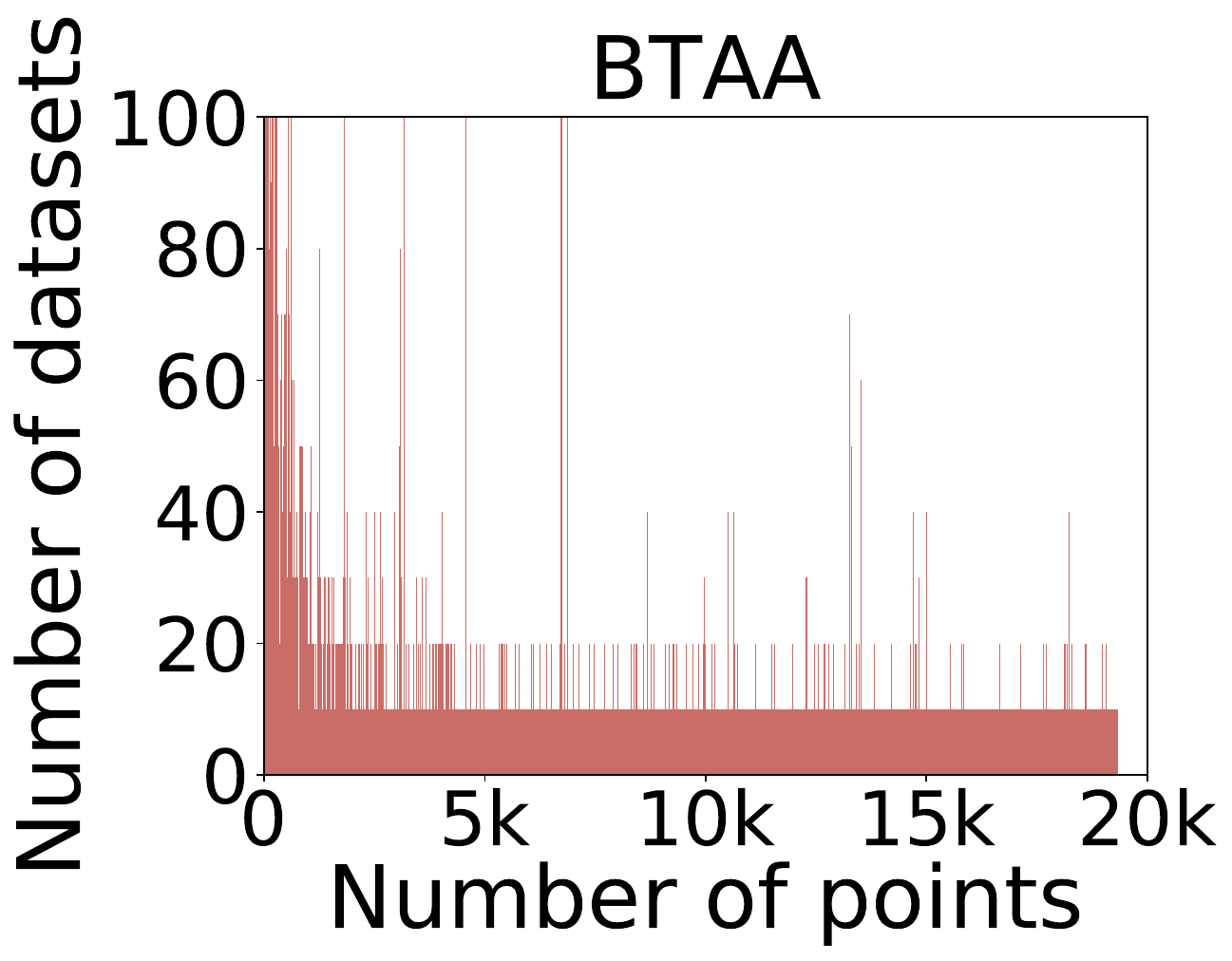}
\end{minipage}%
}%
\subfigure[]{
\begin{minipage}[t]{0.19\linewidth}
\centering
\includegraphics[width=3.2cm,height=2.6cm]{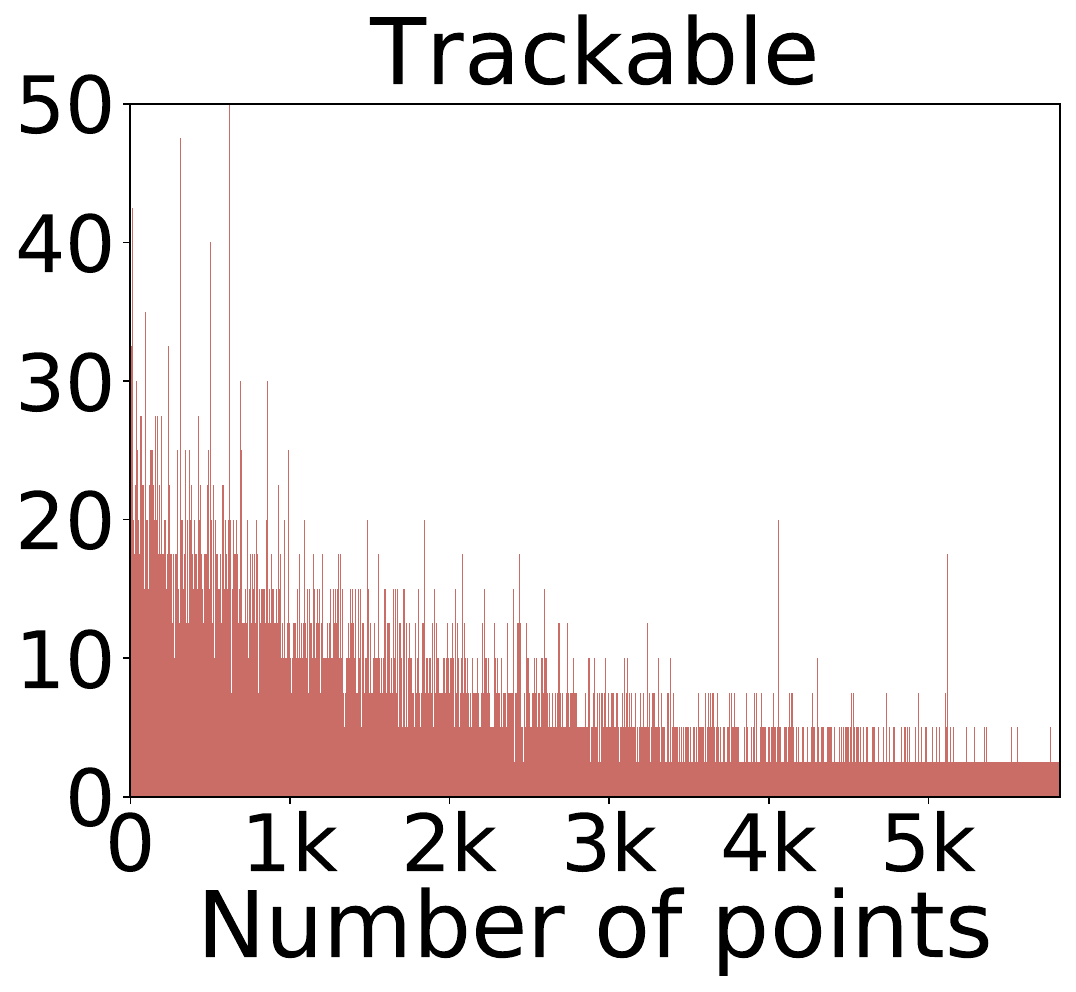}
\end{minipage}%
}%
\subfigure[]{
\begin{minipage}[t]{0.19\linewidth}
\centering
\includegraphics[width=3.2cm,height=2.6cm]{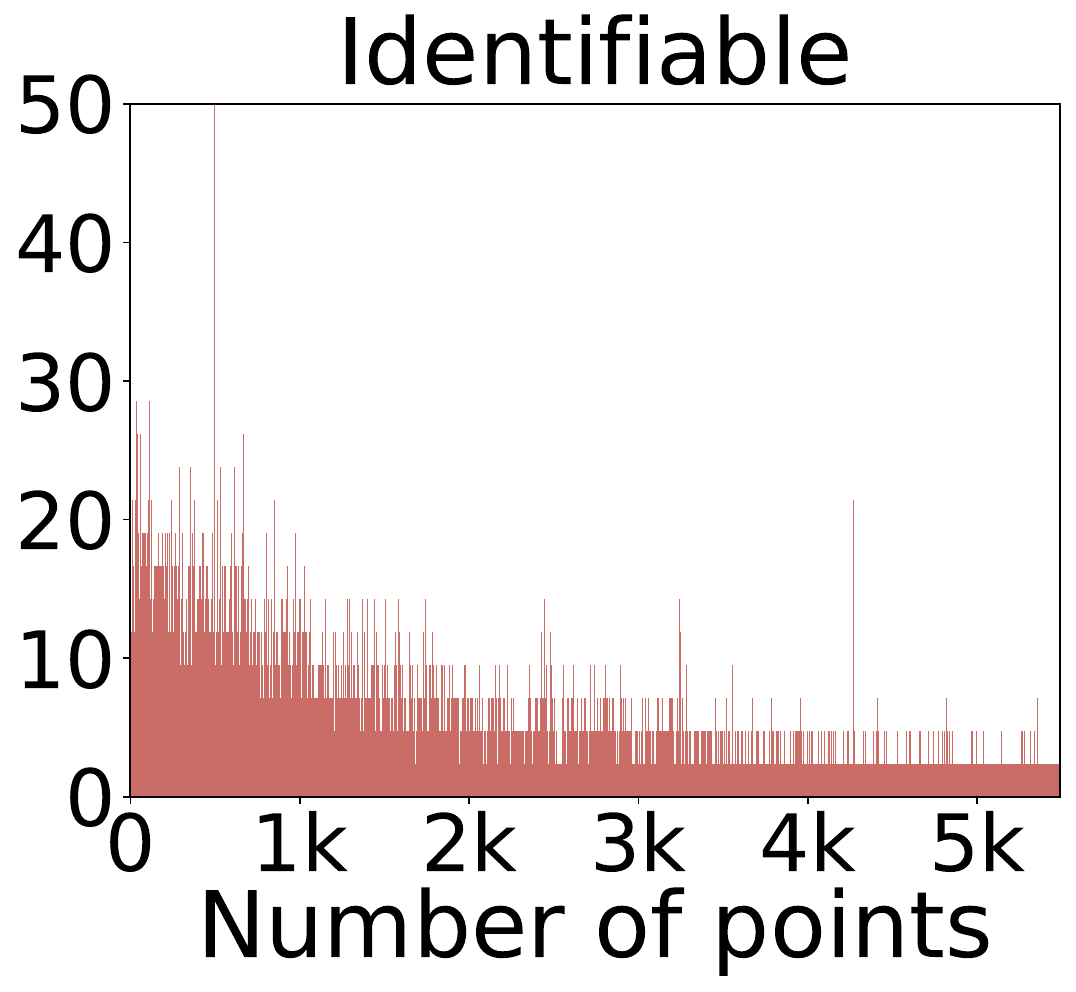}
\end{minipage}%
}%
\subfigure[]{
\begin{minipage}[t]{0.19\linewidth}
\centering
\includegraphics[width=3.2cm,height=2.6cm]{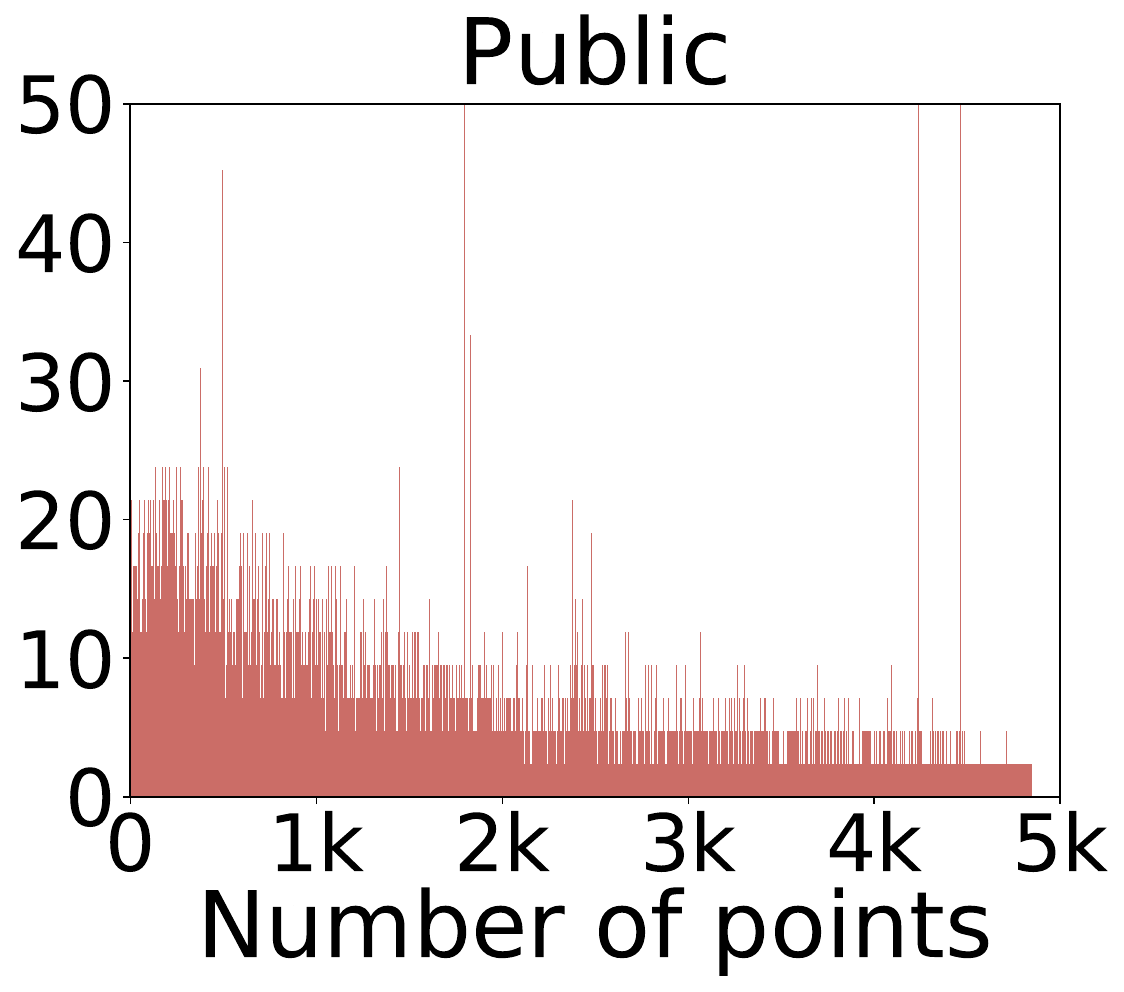}\end{minipage}%
}%
\subfigure[]{
\begin{minipage}[t]{0.19\linewidth}
\centering
\includegraphics[width=3.2cm,height=2.6cm]{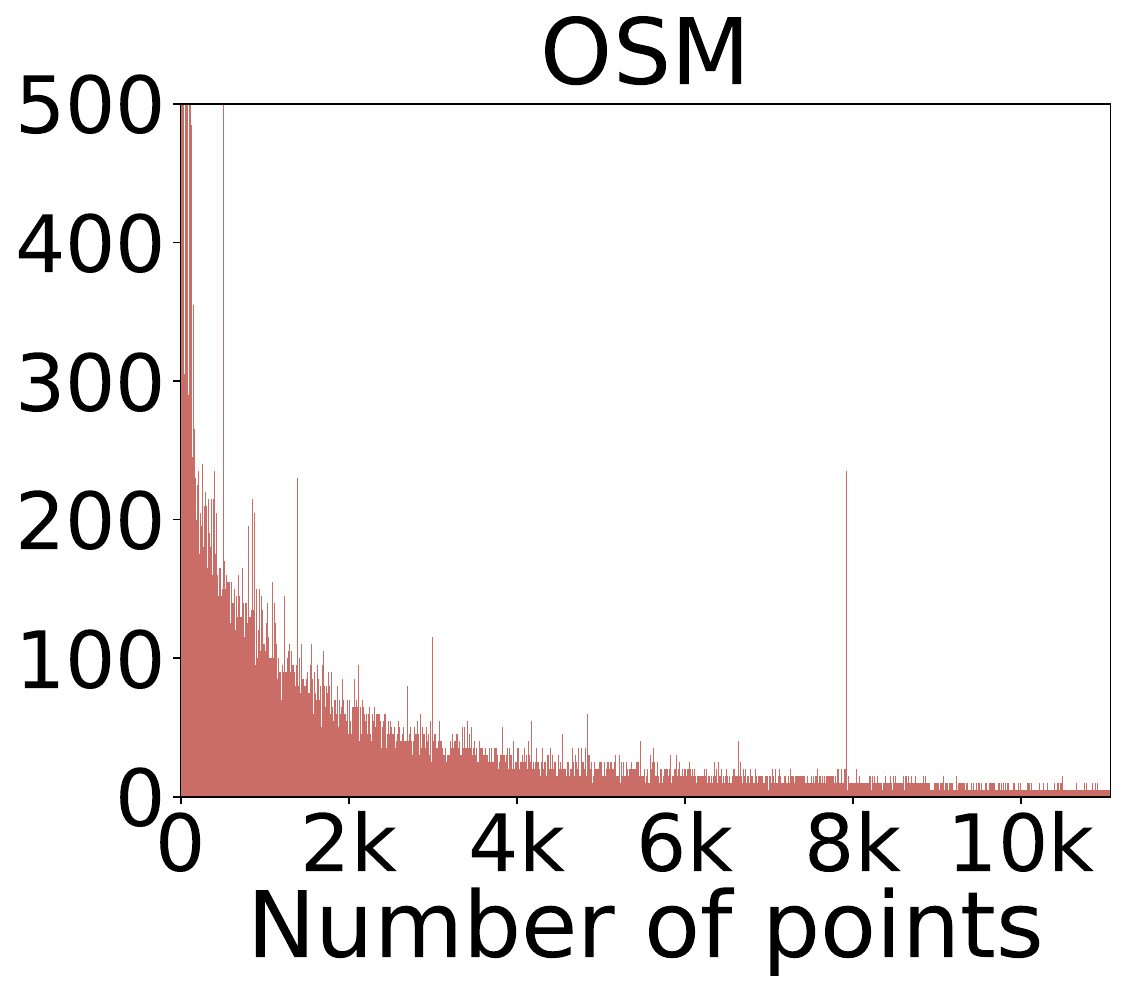}\end{minipage}%
}%
\centering
\caption{Dataset distributions in five data collections.}
\label{fig:distribution}
\end{figure*}

\begin{table*}[t]
\renewcommand{\arraystretch}{1.0}
\setlength\tabcolsep{2.0pt}
\caption{Statistics of spatial dataset graphs. }
\scalebox{1}{
\label{tab:graphSta}
\begin{tabular}{cccccclccccclccccclccccc}
\hline
\multirow{3}{*}{Collection}               & \multicolumn{11}{c}{Increasing $\delta$}                                                &  & \multicolumn{11}{c}{Increasing $\theta$}                              \\ \cline{2-12} \cline{14-24} 
& \multicolumn{5}{c}{Average Degree}      &  & \multicolumn{5}{c}{Number of Subgraphs} &  & \multicolumn{5}{c}{Average Degree}       &  & \multicolumn{5}{c}{Number of Subgraphs} \\ \cline{2-6} \cline{8-12} \cline{14-18} \cline{20-24} 
& 0    & 5     & 10     & 15     & 20     &  & 0       & 5     & 10    & 15    & 20    &  & 9      & 10     & 11     & 12    & 13    &  & 9     & 10     & 11    & 12    & 13     \\ \cline{1-6} \cline{8-12} \cline{14-18} \cline{20-24} 
\textsf{BTAA}       & 27.2 & 324.8 & 481.6  & 599.5  & 781.7  &  & 1759    & 66    & 36    & 21    & 20    &  & 1307.3 & 781.4  & 481.6  & 326.4 & 185.6 &  & 15    & 19     & 36    & 67    & 118    \\
\textsf{Trackable}    & 4.6  & 72.2  & 185.3  & 330.1  & 500.9  &  & 5600    & 449   & 180   & 100   & 69    &  & 1364.6 & 502.6  & 185.3  & 71.6  & 33.3  &  & 42    & 72     & 180   & 447   & 942    \\
\textsf{Identifiable} & 9.2  & 104.4 & 180.4  & 242.8  & 293.5  &  & 4299    & 594   & 262   & 149   & 94    &  & 554.3  & 293.3  & 180.6  & 104.3 & 60.0  &  & 36    & 96     & 262   & 594   & 1108   \\
\textsf{Public}       & 9.8  & 127.1 & 265.3  & 440.8  & 653.2  &  & 5079    & 292   & 127   & 87    & 63    &  & 1698.5 & 655.7  & 265.3  & 127.2 & 74.4  &  & 31    & 64     & 127   & 289   & 618    \\
\textsf{OSM}          & 32.2 & 443.1 & 1039.8 & 1894.4 & 2951.8 &  & 24228   & 904   & 358   & 215   & 123   &  & 7676.3 & 2955.2 & 1039.7 & 441.5 & 252.5 &  & 41    & 127    & 358   & 899   & 2260   \\ \cline{1-18} \cline{19-24} 
\end{tabular}
}
\end{table*}

\begin{itemize}[leftmargin=*]
\item[$\bullet$] Budget ($B$). It defines the total allowable budget for selecting datasets.
\item[$\bullet$] Distance threshold ($\delta$). It specifies the distance requirement for any pair of datasets to be considered directly connected.
\item[$\bullet$] Resolution ($\theta$). It determines the size of uniform cells in the grid. The larger the value of $\theta$, the smaller the cell size, and vice versa.
\item[$\bullet$] Scale ($m$). It specifies the scale of the dataset in each dataset collection used to verify the scalability of our algorithms.
\end{itemize} 

Firstly, we use the ratio of the total prices of all datasets in each data collection as the buyer's budget. Moreover, the parameter $\theta$ can be set based on the desired sampling distance in actual scenarios. Specifically, one degree of longitude or latitude is about 111km. If we divide the globe into a $2^{13} \times 2^{13}$ grid, then each cell’s size is about 5km ×2.5km. Furthermore, we vary the dataset scale from 0.2$\times |\mathcal{D}|$ to 1.0$\times |\mathcal{D}|$ to verify the scalability of our algorithms.
Finally, the $\delta$ can be set by dividing the required distance (e.g., commuting distance) in the actual situation by the cell's side length. 
It is worth noting that our study is orthogonal to the choice of pricing function. Changing pricing functions only impacts the calculation of prices and does not impact our search algorithm.
Here, the price of each spatial dataset is set to the number of covered cells~\cite{AzcoitiaIL2023,Miao2022,Pei2020}.

\noindent\textbf{Methods for Comparisons.}
We extend the state-of-the-art combinatorial algorithm on connected maximum coverage problem~\cite{Vandin2011} for comparison, called \textsf{CMC}. It constructs the BFS tree for each node  $v \in \mathcal{V}$ and produces feasible solutions by traversing all paths from $v$ to any node $u\in \mathcal{V}$, iteratively selecting the path that maximizes the average spatial coverage increment. Due to its low scalability (e.g., it can only produce results on reasonably small datasets), we implement two variants, namely \textsf{CMC+MC} and \textsf{CMC+MG}, which produce the feasible solution for each subgraph and return the best solution. The details of the two variants and our algorithms are shown below. 
\noindent

\begin{itemize}[leftmargin=*]

\item  \CAMS selects an arbitrary node to build a BFS tree and iteratively selects the path with the maximum average coverage (MC) per node on the path despite the marginal gain.
\item \CAMC selects an arbitrary node to build a BFS tree and iteratively selects the path with the maximum average marginal gain (MG) per node on the path.
\item \SG is our proposed dual-search algorithm (see Section~\ref{subsec:SG}).
\item \BGPA is our proposed dual-path search algorithm (see Section~\ref{subsec:BGPA}). 
\item \textsf{DPSA+BA} is our proposed dual-path search algorithm combined with the BFS-based acceleration (BA) strategy (see Section~\ref{subsec:accelerate}). 

\end{itemize}

Considering the expensive cost of building the dataset graph, we apply our proposed index-based acceleration strategy (see Section~\ref{sec:accelerate1}) to all baselines above as a preprocessing procedure.

\begin{figure*}[h!]
\setlength{\abovecaptionskip}{-0.1 cm}
\setlength{\belowcaptionskip}{-0.4 cm}
\centering
\includegraphics[width=18cm, height=3.5cm]{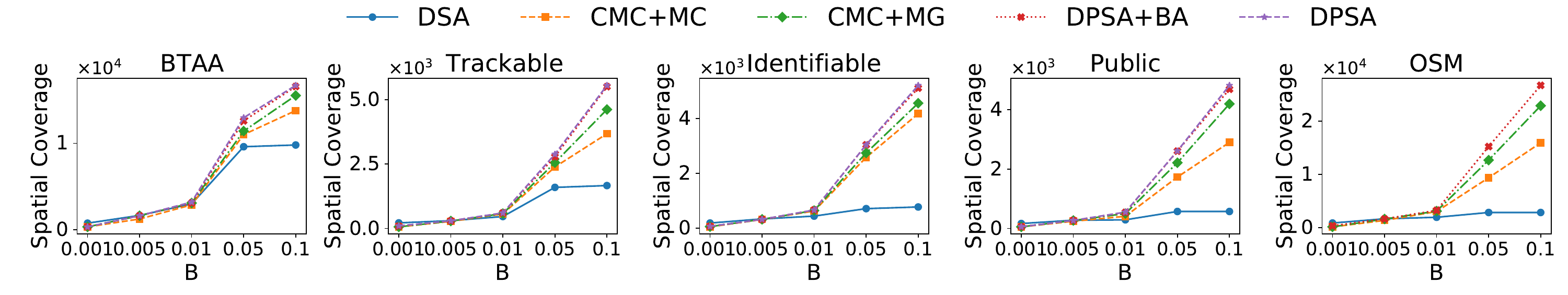}
  \caption{Comparison of the spatial coverage with increasing $B$. } 
  \label{fig:budget} 
\end{figure*}

\begin{figure*}[h!]
\setlength{\abovecaptionskip}{0 cm}
\setlength{\belowcaptionskip}{-0.4 cm}
\centering
\includegraphics[width=18cm, height=2.8cm]{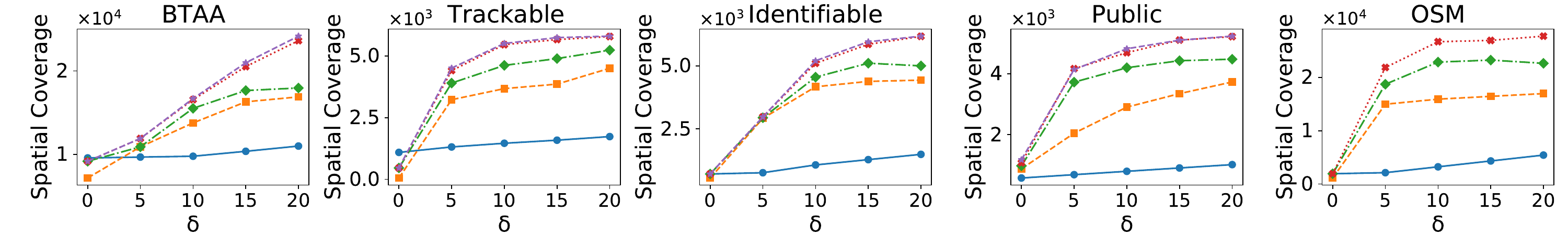}
 \caption{Comparison of the spatial coverage with increasing $\delta$.} 
\label{fig:delta}
\end{figure*}

\begin{figure*}[h!]
\setlength{\abovecaptionskip}{0 cm}
\setlength{\belowcaptionskip}{-0.5 cm}
\centering
\includegraphics[width=18cm, height=2.8cm]{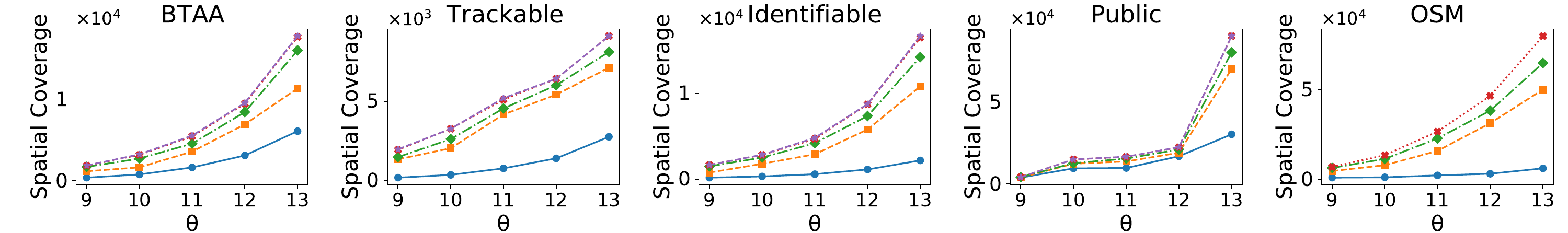}
\caption{Comparison of the spatial coverage with increasing $\theta$.} 
  \label{fig:reso} 
\end{figure*}

\begin{figure*}[ht]
\setlength{\abovecaptionskip}{-0.1 cm}
\setlength{\belowcaptionskip}{-0.3 cm}
\centering
\includegraphics[width=18cm, height=3.5cm]{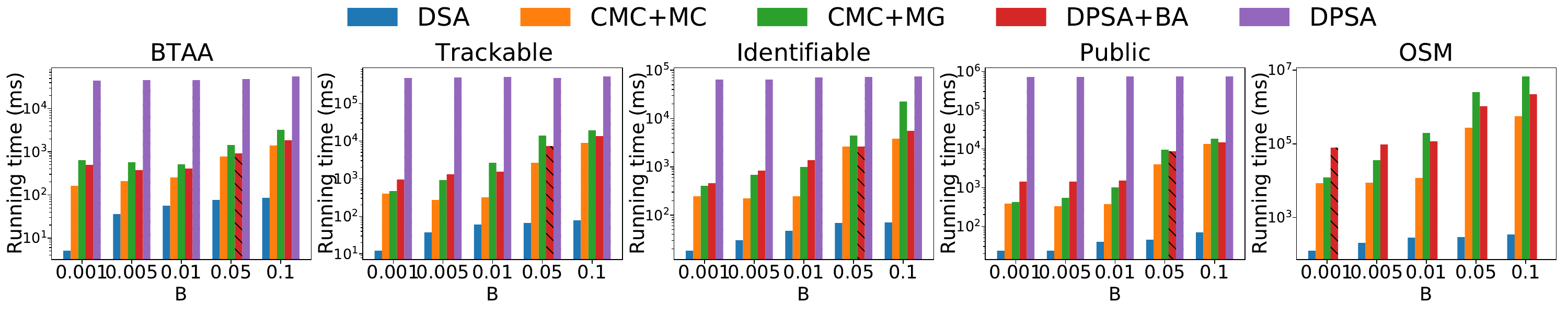}
  \caption{Comparison of the running time with the increase of $B$.} 
  \label{fig:budgetTime} 
\end{figure*}

\begin{figure*}[ht]
\setlength{\abovecaptionskip}{-0.1 cm}
\setlength{\belowcaptionskip}{-0.3 cm}
\centering
\includegraphics[width=18cm, height=3.2cm]{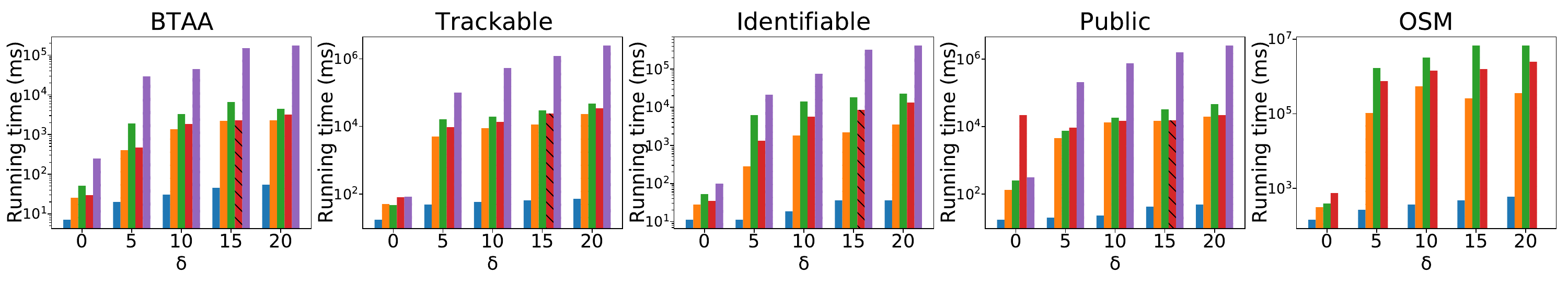}
\caption{Comparison of the running time with the increase of $\delta$.} 
\label{fig:deltaTime}
\end{figure*}

\begin{figure*}[h!]
\setlength{\abovecaptionskip}{-0.1 cm}
\setlength{\belowcaptionskip}{-0.3 cm}
\centering
\includegraphics[width=18cm, height=3.2cm]{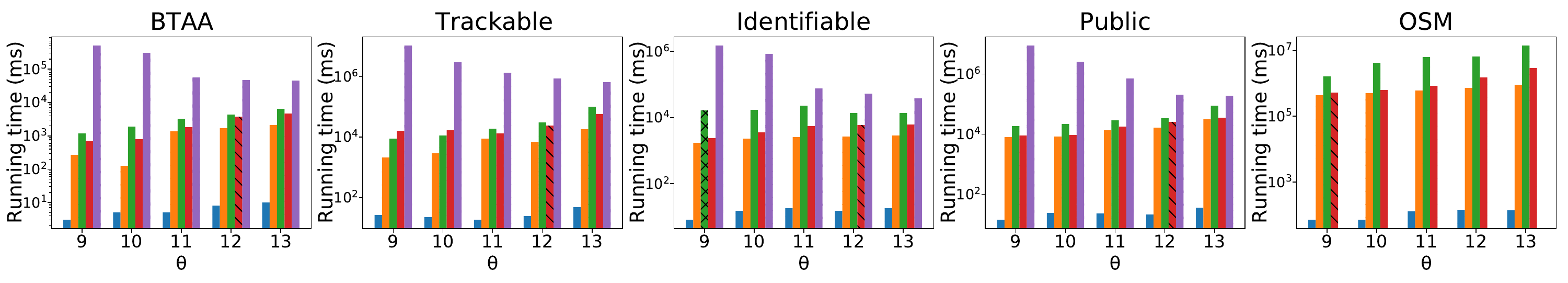}
\caption{Comparison of the running time with the increase of $\theta$.} 
\label{fig:resoTime}
\end{figure*}

\noindent\textbf{Evaluation Metrics.}
We evaluate the effectiveness based on the spatial coverage (i.e., the number of cells of the connected subgraph found by the algorithm).  Additionally, we assess the efficiency of all algorithms by comparing their running times.

\noindent\textbf{Environment.}
We run all experiments on a 10-core Intel(R) Xeon(R) Silver 4210 CPU @ 2.20GHZ processor, with 376G memory. All codes are implemented in Java 1.8, and any method that cannot finish within 24 hours will be terminated. 

\vspace{-0.4cm}
\subsection{Effectiveness Study}
\label{sec:effect}

\vspace{-0.2cm}
\noindent\textbf{Effectiveness Comparison with Different Budget $B$}.
Fig.~\ref{fig:budget} compares the performance of the five algorithms with different $B$. We can observe that \SG shows good performance when the budget ratio is 0.001 and 0.005. However, as the budget gradually increases, the performance of \SG becomes worse than the other four algorithms. This is because \SG is highly dependent on the selection of initial nodes. Additionally, since the graph we input is composed of multiple subgraphs, \SG may select the initial node belonging to one of the subgraphs and get stuck in a locally optimal solution.

Moreover, we observe that with increasing $B$, the solutions obtained by \BGPA cover at most 68\% times coverage than \CAMS and 21\% times coverage than \CAMC. This indicates the superior effectiveness of \BGPA over the previously proposed algorithm. Additionally, we can find that no matter how the budget changes, the solutions obtained by \BGPA and \BGPAF are within 1\% of the error. 
This demonstrates the effectiveness of our proposed BFS-based acceleration strategy for quickly finding the center. 
Notably, the experimental results of \textsf{DPSA} on 
\textsf{OSM} are not presented due to the algorithm's inability to complete within 24 hours. This is attributed to the time-consuming process of finding the center in \textsf{DPSA}.

\noindent\textbf{Effectiveness Comparison with Different Threshold $\delta$}.
Fig.~\ref{fig:delta} compares the performance of the five algorithms with different $\delta$. Firstly, as $\delta$ increases, the spatial coverage found by the five algorithms under the same budget gradually increases. This is because a higher $\delta$ means that more pairs of datasets satisfy the directly connected relation, resulting in a higher average degree in the graph, as shown in Table~\ref{tab:graphSta}. Thus, the spatial coverage in the solution found by the algorithm also increases. Additionally, we can observe that \BGPA consistently demonstrates robust performance irrespective of $\delta$ variations, achieving at most 63\% and 35\% times spatial coverage against \CAMC and \CAMS. This indicates that our algorithm is effective compared to the two variants. 

\noindent\textbf{Effectiveness Comparison with Different Resolution $\theta$}.
Fig.~\ref{fig:reso} compares the performance of the five algorithms with different $\theta$. We can see that as the $\theta$ increases, the spatial coverage is gradually increasing. This can be explained by the fact that in the stage of generating the cell-based dataset, we divide the entire space into a $2^\theta \times 2^\theta$ grid and generate the cell-based dataset for each raw dataset. Thus, the spatial coverage of each cell-based dataset increases as $\theta$ increases, leading to larger spatial coverage found by the five algorithms. 
Additionally, \BGPA and \BGPAF can achieve at most 58\% and 38\% times spatial coverage against \CAMC and \CAMS, respectively, which also verifies the effectiveness of our proposed algorithms.

\subsection{Efficiency Study}
\label{sec:efficient}
\noindent\textbf{Efficiency Comparison with Different Budget $B$}.
Fig.~\ref{fig:budgetTime} compares the running time of the five algorithms with different $B$, with execution time measured in milliseconds (ms). We have five observations:

\noindent(1) \textsf{DSA} always has the shortest running time compared to the other four algorithms, completing the search in a few milliseconds. Combined with the experimental results shown in Figs.~\ref{fig:budget} and~\ref{fig:budgetTime}, we can see that \textsf{DSA} offers the best trade-off in terms of speed and accuracy when the budget ratio does not exceed 0.005. 

\noindent(2) \textsf{DPSA+BA} is at least an order of magnitude faster than \textsf{DPSA}, yet it can still find solutions that are almost as good as \textsf{DPSA} (less than 1\% error). This demonstrates the effectiveness of our strategy for quickly finding the center in practical applications.

\noindent(3) Although \BGPA can find a better solution when the budget ratio exceeds 0.05, it exhibits significantly slower performance, typically by one or two orders of magnitude, compared to \CAMS and \CAMC. 
This is because the process of finding the center node takes significant time in \BGPA.
Additionally, \BGPAF can achieve up to at most  89\% times speedups over \CAMC, providing compelling evidence to support the proposition that utilizing the BFS-based acceleration strategy for expedited center node identification is highly efficient.

\noindent(4) Regardless of how the budget varies, the running time of \CAMS is always faster than that of \CAMC and \BGPA. This is because \CAMS only needs to add the dataset with the maximum coverage size to the result set in each iteration without computing the coverage increment. However, this acceleration strategy also results in the solution of \CAMS with less spatial coverage than \CAMC and \BGPAF.

\noindent(5) When $B$ is small, \CAMC runs slightly faster than \BGPA. However, as $B$ increases, \CAMC takes longer execution times compared to \BGPA.
This is because when $B$ is small, the number of affordable datasets is small (i.e., the algorithm has fewer iterations), making the running time of finding the center more important in the overall algorithm. \CAMC does not need to find a center, so the algorithm is slightly faster with a smaller $B$. However, as $B$ increases, the number of iterations also increases, and \CAMC needs to traverse more paths than \BGPA, resulting in a longer running time for \CAMC. 
Furthermore, combined with Fig.~\ref{fig:budget}, we can see that the result set found by \CAMC is worse than \BGPAF, which indicates that the two maximization strategies adopted in our algorithm are effective and robust.

\noindent\textbf{Efficiency Comparison with Different Threshold $\delta$}.
As shown in Fig.~\ref{fig:deltaTime}, we can observe that as $\delta$ increases, the running time of the five algorithms gradually increases. Among them, \SG has the shortest running time and the smallest increase, while \BGPA has the maximum increase. This is because the time taken to find the center in \BGPA increases significantly as $\delta$ increases. Another interesting phenomenon is that apart from \BGPA and \SG, \CAMC has the slowest running time as $\delta$ increases, followed by \BGPAF, and \CAMS is the fastest. This is because \CAMC builds the path from the root node to all other nodes at each iteration, while \BGPAF only considers the path from the root node to the leaf node. Combined with Fig.~\ref{fig:budget}, we can see that although the running time of \CAMS is shorter than that of \BGPAF and \CAMC, the solution found is less effective because it does not consider the marginal gain.

\vspace{-0.2em}
\noindent\textbf{Efficiency Comparison with Different Resolution $\theta$}.
Fig.~\ref{fig:resoTime} compares the running time of the five algorithms with different $\theta$. We have an interesting observation from this figure. As the $\theta$ increases, only the running time of \BGPA decreases, while the running time of the other four algorithms increases gradually. This is because as the $\theta$ increases, the entire space is divided into finer granularity, resulting in a higher spatial coverage in each cell-based dataset. Furthermore, the finer granularity of each dataset contributes to an increase in spatial dataset distance between pairs of datasets, causing more node pairs that cannot be directly connected. This, in turn, leads to an increase in the number of subgraphs and a decrease in the average degree of subgraphs, as shown in Table~\ref{tab:graphSta}. Thus, the time taken to find the center in \BGPA decreases significantly, resulting in a decrease in the overall running time of \BGPA.

\subsection{Scalability}
\label{sec:scale}
The objective of this experiment is to evaluate the scalability of our algorithms and two comparison algorithms when we increase the dataset size. Fig.~\ref{fig:scale} shows the results of five algorithms’ spatial coverage and running time, respectively, when we vary the dataset size from 0.2$\times |\mathcal{D}|$ to 1.0$\times |\mathcal{D}|$ and set the other parameters as defaults. From the left of Fig.~\ref{fig:scale}, we can observe that the spatial coverage of five algorithms slightly increases as the dataset scale increases. This is because, as the dataset scale increases, each dataset has more potential connectable datasets, leading to a higher degree of nodes in the dataset graph. Therefore, under the given budget, the algorithm is able to select more datasets that contribute to improving spatial coverage, thus enhancing the overall coverage effect. However, no matter how the size of the dataset increases, our \BGPA and \BGPAF can still maintain a relatively larger spatial coverage than other comparison algorithms, which exhibit good scalability characteristics.

From the right of Fig.~\ref{fig:scale}, we can see that the running time of five algorithms slightly increases as the dataset scale increases. This is because the algorithms need more time to process the increased number of connectable nodes. Among the five algorithms, \SG consistently exhibits the shortest running time, completing the search in just a few milliseconds, which demonstrates its strong scalability. However, due to the relatively large budget set by default in this scalability experiment, the approximation ratio of \SG decreases, resulting in its spatial coverage performance being worse than that of the other comparison algorithms. This observation is consistent with our theoretical analysis. In addition to \SG, we can observe that \BGPAF has a faster search speed than other methods, which indicates that \BGPAF can efficiently handle larger datasets and maintain high performance even as the size of the data increases.

\begin{figure*}
\setlength{\abovecaptionskip}{0 cm} 
\setlength{\belowcaptionskip}{-0.3 cm} 
\subfigure[]{
\begin{minipage}[t]{0.50\textwidth}
\centering
\includegraphics[width=8.9cm]{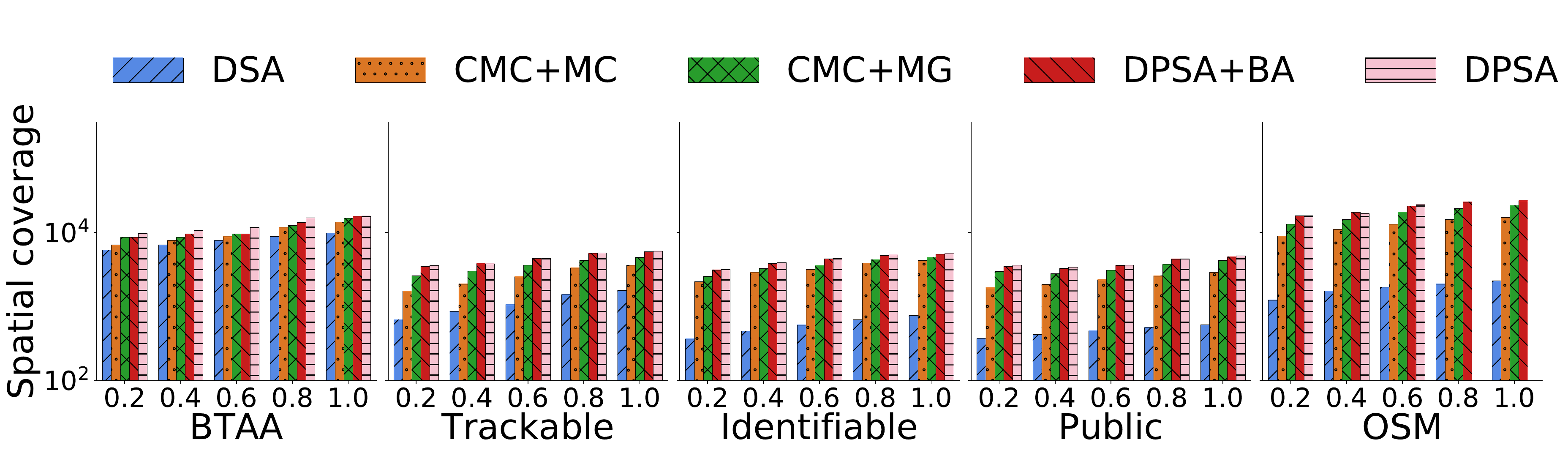}
\end{minipage}%
}%
\subfigure[]{
\begin{minipage}[t]{0.50\textwidth}
\centering
\includegraphics[width=8.9cm]{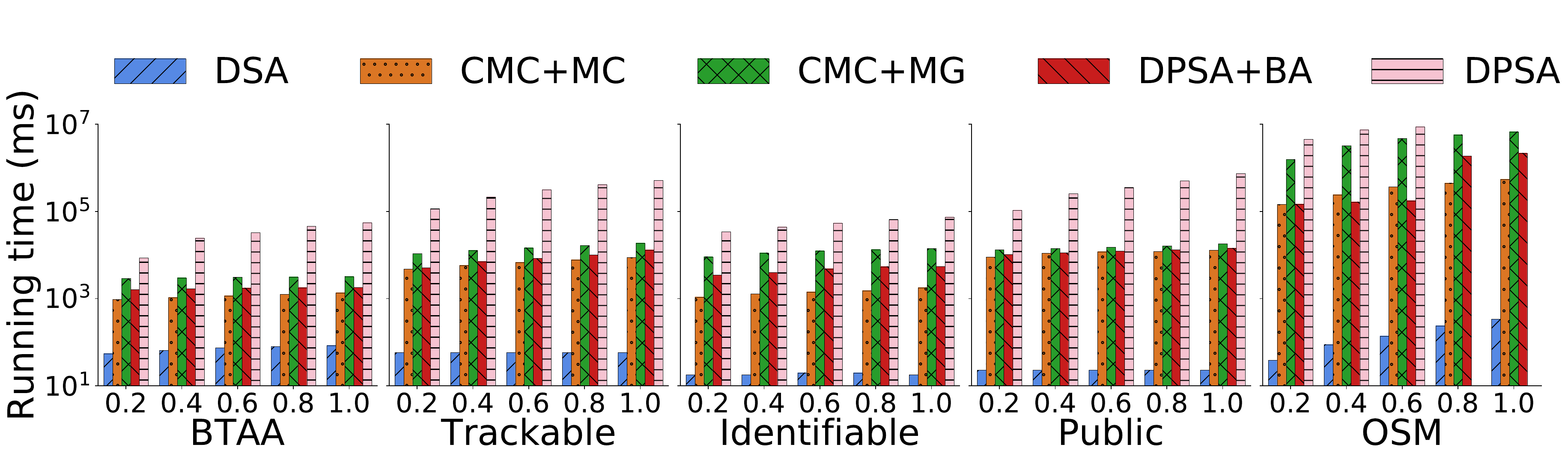}
\end{minipage}%
}%
\centering
\caption{Comparison of five algorithms' spatial coverage and running time with dataset scale $m$ increasing.}\label{fig:scale}
\end{figure*}

\begin{figure*}
\setlength{\abovecaptionskip}{0 cm}
\setlength{\belowcaptionskip}{-0.5 cm}
\centering
\begin{minipage}[t]{0.33\linewidth}
\centering
\subfigure[Raw datasets]{
\includegraphics[width=4cm,height=3.3cm]{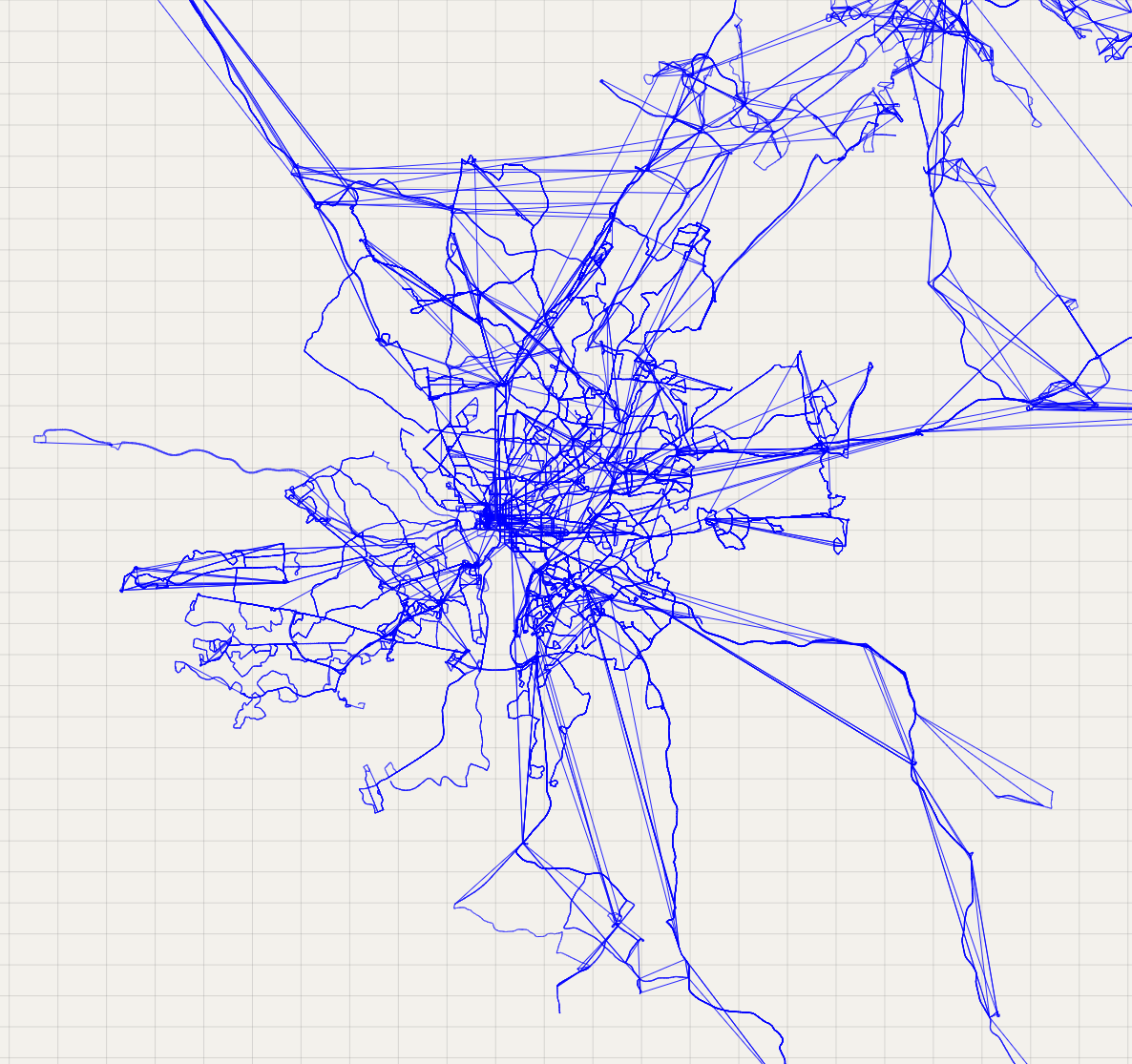}
}
\end{minipage}%
\begin{minipage}[t]{0.33\linewidth}
\centering
\subfigure[Results of \SG]{
\includegraphics[width=4cm,height=3.3cm]{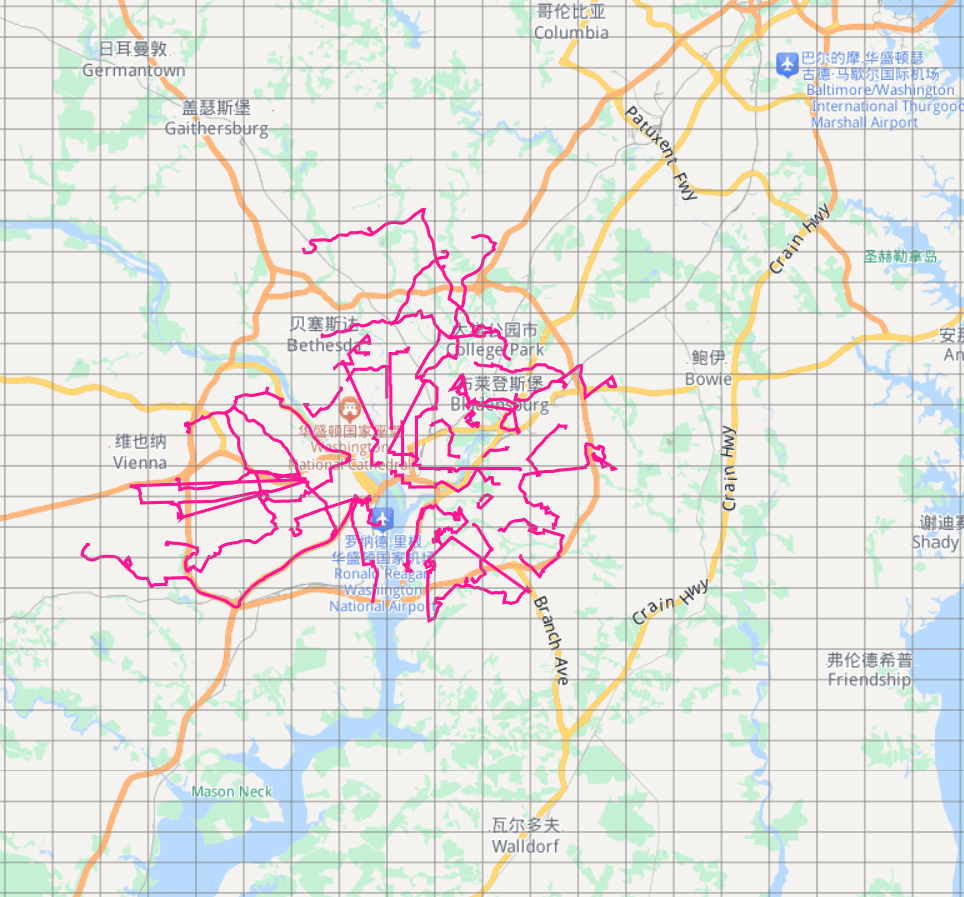}
}
\end{minipage}%
\begin{minipage}[t]{0.33\linewidth}
\centering
\subfigure[Results of \CAMS]{
\includegraphics[width=4cm,height=3.3cm]{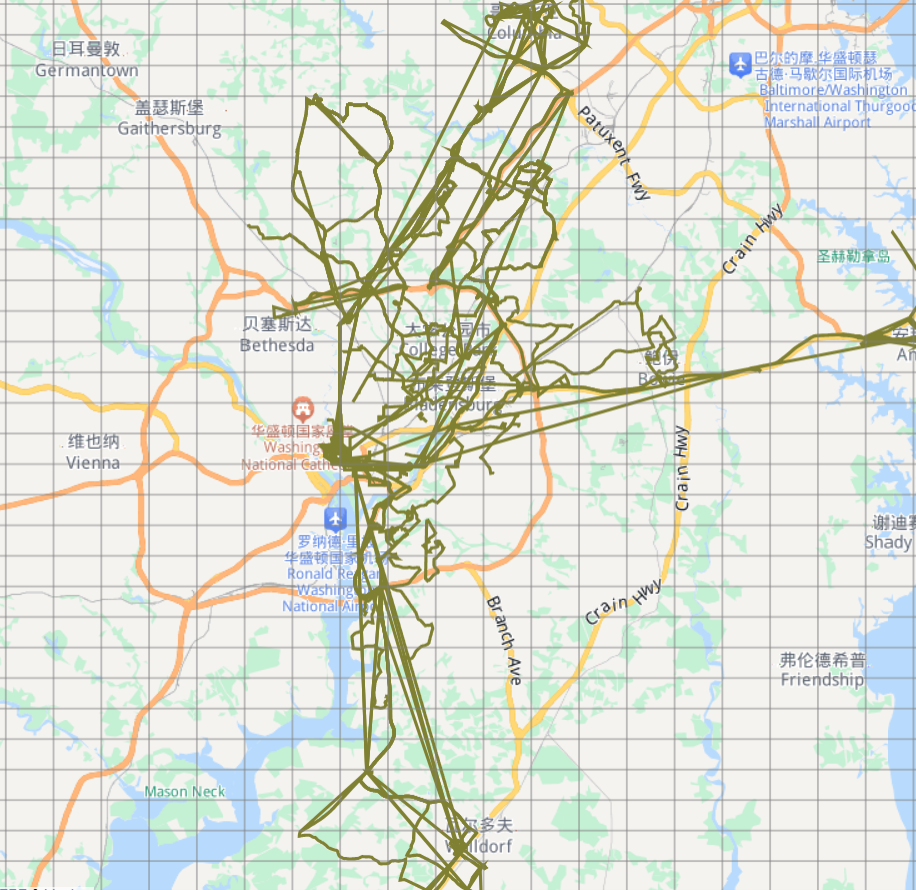}
}
\end{minipage}

\begin{minipage}[t]{0.33\linewidth}
\centering
\subfigure[Results of \CAMC]{
\includegraphics[width=4cm,height=3.3cm]{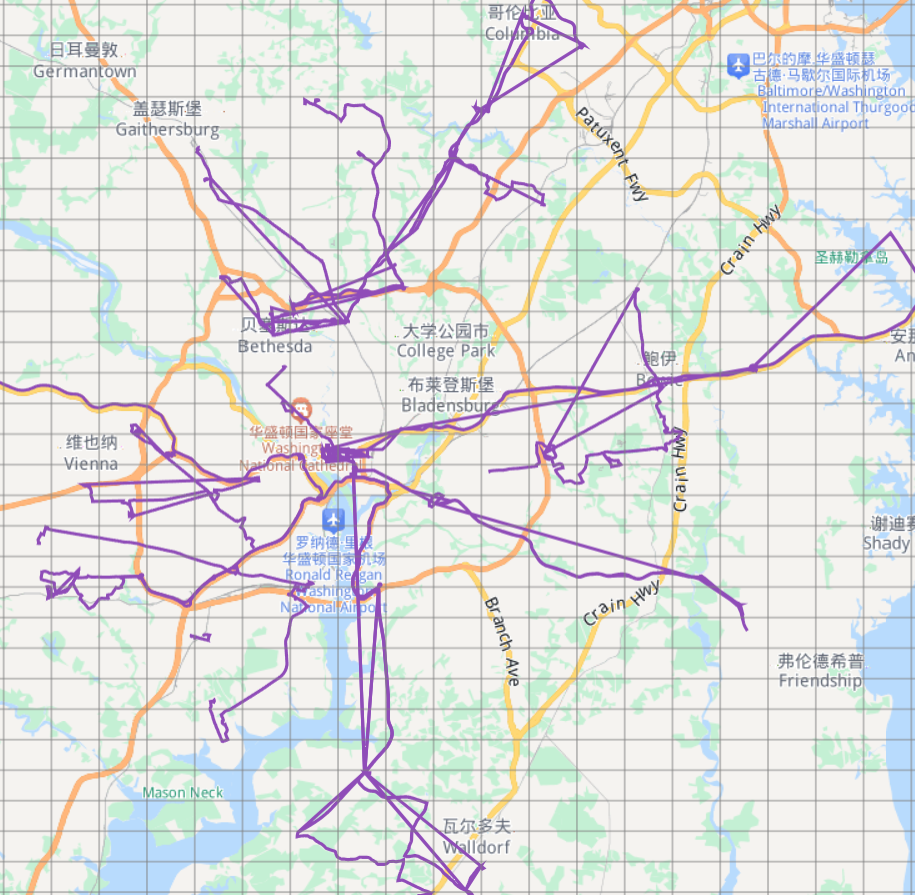}
}
\end{minipage}%
\begin{minipage}[t]{0.33\linewidth}
\centering
\subfigure[Results of \BGPAF]{
\includegraphics[width=4cm,height=3.3cm]{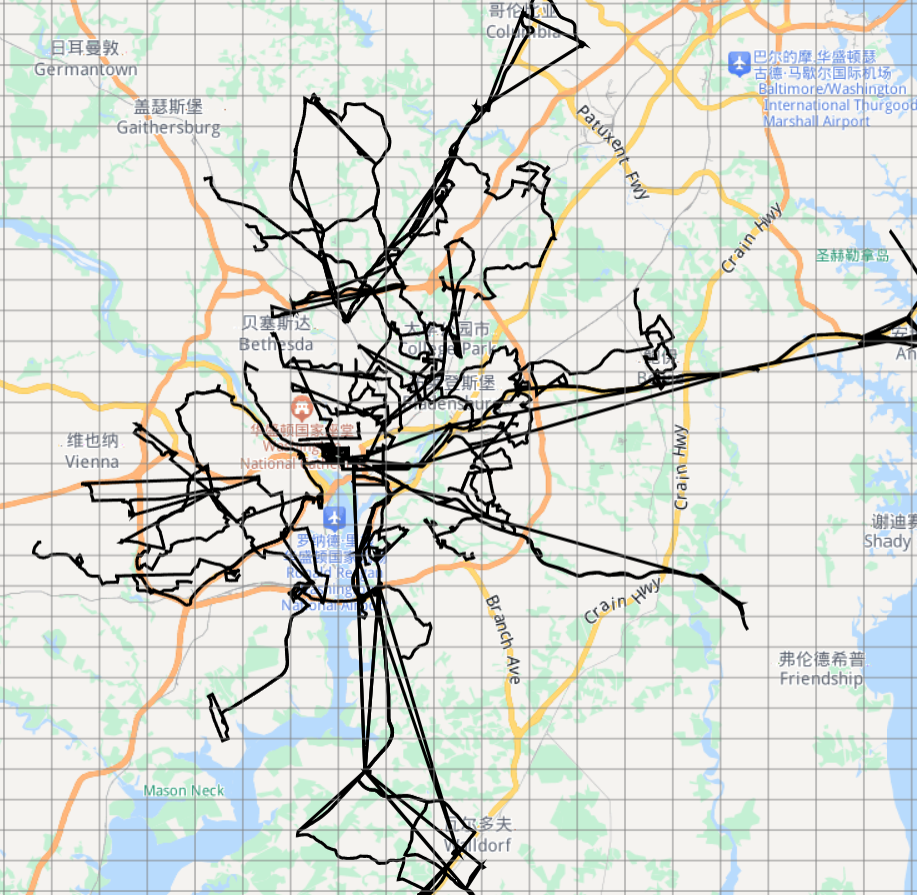}
}
\end{minipage}%
\begin{minipage}[t]{0.33\linewidth}
\centering
\subfigure[Results of \BGPA]{
\includegraphics[width=4cm,height=3.3cm]{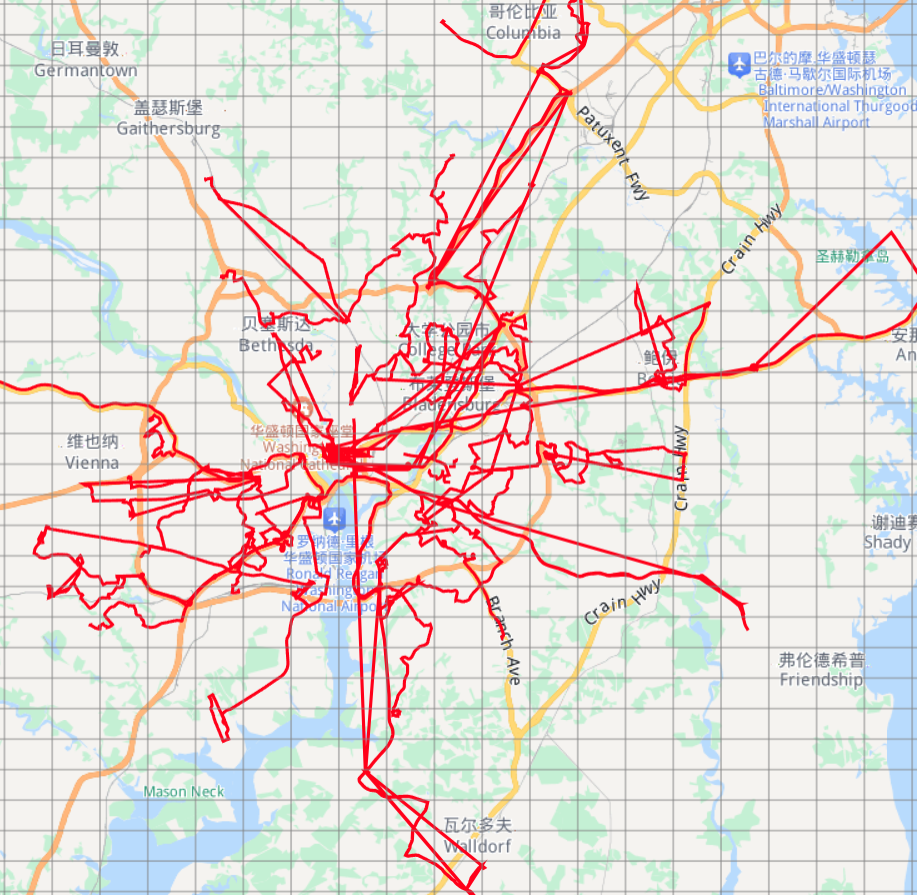}
}
\end{minipage}

\setlength{\belowcaptionskip}{-1pt}
\caption{Case studies on private trajectories.}
\label{fig:visual}
\vspace{-0.3cm}
\end{figure*}

\subsection{Visual Analysis}
\label{sec:case}
For visualization purposes, we show 2000 individual trajectories located at Washington from \textsf{BTAA}, as shown in Fig.~\ref{fig:visual}(a). The solutions found by five algorithms with the default settings are shown in Figs.~\ref{fig:visual}(b)--(f). 
As shown in Fig.~\ref{fig:visual}(b), We can clearly see that the datasets obtained by \SG exhibit lower coverage, whereas the datasets obtained by \CAMC demonstrate higher coverage compared to \CAMS and \SG when the budget is 0,1. This reveals that the performance of \SG becomes worse than other algorithms when the budget is larger due to the decrease in the approximation ratio.
Additionally, the datasets found by \CAMS are only distributed on one side of the entire space, which has a one-sidedness. In contrast, we observe that the datasets obtained by \BGPA and \BGPAF are distributed over the entire space. This shows that our algorithm has better spatial coverage compared to baselines, which validates the effectiveness of our proposed algorithms. For data buyers who do not have the budget to acquire all datasets, this solution is a good representation of the dataset distribution in the entire data marketplace.
\end{sloppypar}

\section{Conclusions}
\label{sec:conclude}

In this paper, we proposed and studied the \MCPBC problem in the spatial data marketplace, which aims to maximize spatial coverage while maintaining spatial connectivity under a limited budget. We proved the NP-hardness of this problem. To address the \MCPBC, we proposed two approximation algorithms called \SG and \BGPA with theoretical guarantees and time
complexity analysis, followed by two accelerating strategies to further improve the efficiency of \BGPA. We conducted extensive experiments to demonstrate that our algorithms are highly efficient and effective. In the future, we will explore fairness-aware spatial data acquisition and integration.

\begin{acknowledgement}
This work was supported by the National Key R\&D Program of China (Grant No. 2023YFB4503600), the National Natural Science Foundation of China (Grant Nos. 62202338 and 62372337), and the Key R\&D Program of Hubei Province (Grant No. 2023BAB081).
\end{acknowledgement}

\begin{competinginterest}
The authors declare that they have no competing interests or financial conflicts to disclose.
\end{competinginterest}

\bibliographystyle{fcs}
\bibliography{reference-1.bib, reference-2.bib}

\begin{biography}{FCS-author1-Yang}
{Wenzhe Yang} received the BE degree and ME degree in computer science and technology from Jilin University. She is currently working toward the PhD degree with the School of Computer Science at Wuhan University. Her research interests mainly include data sharing, dataset search, and data market.
\end{biography}

\vspace{1cm}
\begin{biography}{FCS-author2-Huang}
Shixun Huang received the Bachelor degree from Nanjing University, Master Degree from University of Melbourne and Ph.D. degree from RMIT University. He is currently a Lecturer at University of Wollongong. His current research
interests span across combinatorial optimization and mining in graph data.
\end{biography}

\begin{biography}{FCS-author3-Wang}
{Sheng Wang} received the BE degree in information security, ME degree in computer technology from Nanjing University of Aeronautics and Astronautics, China in 2013 and 2016, and Ph.D. from RMIT University in 2019. He is a professor at the School of Computer Science, Wuhan University. His research interests mainly include spatial databases. He has published full research papers on top database and information systems venues as the first author, such as TKDE, SIGMOD, PVLDB, and ICDE.
\end{biography}

\begin{biography}{FCS-author4-Peng}
{Zhiyong Peng} received the BSc degree from Wuhan University, in 1985, the MEng degree from the Changsha Institute of Technology of China, in 1988, and the PhD degree from the Kyoto University of Japan, in 1995. He is a professor at the School of Computer Science, Wuhan University.  His research interests mainly include complex data management, web data management, and trusted data management. 
\end{biography}

\end{document}